\definecolor{mycolor}{cmyk}{0.80, 0.20, 0.25, 0}
\DeclareMathOperator*{\argmin}{argmin}
\DeclareMathOperator*{\bbE}{\mathbb{E}}
\DeclareMathOperator*{\Cov}{\mathrm{Cov}}
\DeclareMathOperator*{\Var}{\mathrm{Var}}
\newcommand{\indep}{\mathrel{\text{\scalebox{1.07}{$\perp\mkern-10mu\perp$}}}}
\renewcommand{\hat}{\widehat}
\renewcommand{\tilde}{\widetilde}
\newcommand{\ol}{\overline}
\numberwithin{equation}{section}
\theoremstyle{definition}
\newtheorem{theorem}{Theorem}[section]
\newtheorem{assumption}{Assumption}[section]
\newtheorem{definition}{Definition}[section]
\newtheorem{lemma}{Lemma}[section]
\newtheorem{proposition}{Proposition}[section]
\newtheorem{remark}{Remark}[section]
\title{Estimating a Continuous Treatment Model with Spillovers: A Control Function Approach}
\author{Tadao Hoshino\thanks{School of Political Science and Economics, Waseda University, 1-6-1 Nishi-waseda, Shinjuku-ku, Tokyo 169-8050, Japan. Email: \href{mailto:thoshino@waseda.jp}{thoshino@waseda.jp}.}}
\begin{document}

\maketitle

\begin{abstract}
    We study a continuous treatment effect model in the presence of treatment spillovers through social networks.
    We assume that one's outcome is affected not only by his/her own treatment but also by a (weighted) average of his/her neighbors' treatments, both of which are treated as endogenous variables.
    Using a control function approach with appropriate instrumental variables, we show that the conditional mean potential outcome can be nonparametrically identified.
    We also consider a more empirically tractable semiparametric model and develop a three-step estimation procedure for this model.
    As an empirical illustration, we investigate the causal effect of the regional unemployment rate on the crime rate.
\end{abstract}

\textit{Keywords}: Continuous treatment, Endogeneity, Instrumental variables, Treatment spillover.

\clearpage

\section{Introduction}\label{sec:intro}

Treatment evaluation under cross-unit interference is currently one of the most intensively studied topics in the causal inference literature (see, e.g., \citealp{halloran2016dependent, aronow2021spillover} for recent reviews). 
These previous studies have highlighted the importance of accounting for treatment spillovers from other units via empirical applications in many fields, including political science, epidemiology, education, economics, etc.
However, most of these studies have focused on the case of simple binary treatments, and estimating treatment spillover effects with continuous treatments has been rarely considered.
Even in the context of binary treatments, they often assume that the treatments are (conditionally) randomly assigned with full compliance (with some exceptions including \citealp{kang2016peer, imai2020causal, ditraglia2021identifying, hoshino2021causal, vazquez2021causal}).
To fill these gaps, this paper considers the estimation of a continuous treatment model with treatment spillovers while allowing the subjects to self-select their own treatment levels. 
 
For conventional continuous treatment models without spillovers, several approaches have been proposed to estimate the average \textit{dose-response} function -- the expected value of a potential outcome at particular treatment level.
Among others, \cite{hirano2004propensity} and \cite{imai2004causal} propose the generalized propensity score (GPS) approach under an unconfoundedness assumption.
A further extension of the GPS approach to the case of multiple correlated treatments is discussed in \cite{egger2013generalized}.
Then, it would be a natural idea to apply these methods to estimate continuous treatment and spillover effects jointly, as in \cite{del2020causal}, in which the authors use a GPS method to adjust for confounding variables for both own and others' treatments. 

However, the above-mentioned methods are not successful in the presence of unobservable confounding variables, which might lead to endogeneity issues not only with respect to own treatment but also to others' treatments.
For example, suppose the outcome is academic performance for high school students, and the treatment of interest is hours spent on in-home private tutoring.
Since close school mates are often taught by the same teachers and have similar socioeconomic backgrounds, they share a variety of unobservable attributes that affect the treatment and potential outcomes simultaneously.
In this situation, not only one's own treatment level but also the treatments of his/her friends should be treated as endogenous variables.
Other examples include the causal effect of minimum wage on the local employment rate, the effect of police force on the crime rate, and the effect of corporate tax subsidies on regional revenue.
These are cases where ``spatial'' treatment spillovers are of concern and are typical empirical situations that can fit into the framework of this paper.

To deal with this complex endogeneity problem without relying on strong parametric distributional restrictions, we assume a specific type of triangular model in this study, in which the treatment equation has a certain form of separability and others' treatments influence one's outcome in the form of peers' weighted average treatment, possibly with some monotonic transformation.
The latter assumption is similar to and more general than the \textit{mean interaction} in \cite{manski2013identification}, which assumes that the impacts from others can be summarized as the empirical mean of peers' treatments.
Since causal inference is generally impossible if no assumptions are imposed on the interference structure (cf. \citealp{imbens2015causal}), assumptions similar to the above are widely adopted in the literature.

With these assumptions, we address the endogeneity by employing a control function approach (see, e.g., \citealp{blundell2003endogeneity, florens2008identification, imbens2009identification}), which introduces auxiliary regressors, the \textit{control variables}, in the estimation of the outcome equation to eliminate the endogeneity bias. 
Under the availability of valid instrumental variables (IVs), the rank variable in the treatment equation can serve as the control variable for own treatment variable, which is a standard result in the literature.
Meanwhile, to control the endogeneity for the treatment spillovers, a straightforward choice for the control variables would be to use the peers' rank variables by analogy.
Although this approach is theoretically simple, it may not be practical unless the number of interacting partners is limited to one or two because of the curse of dimensionality.
As a novel finding of this paper, we demonstrate that the rank variable of a ``hypothetical'' individual who is on average equivalent to real peers can be used as a valid control variable by utilizing the additive structure and interaction structure of our model.
Since this rank variable is one dimensional, we can alleviate the dimensionality problem.

As is well-known, to achieve nonparametric identification of treatment parameters such as the average structural function based on a control function approach, we often require a strong support condition, such that the support of the control variable conditional on the treatment variable is equal to the marginal support of the control variable (see \citealp{imbens2009identification}).
This is true for our model as well, but such condition may be rarely satisfied in reality.
Thus, to improve the empirical tractability, we introduce additional functional form restrictions for estimation, similar to \cite{chernozhukov2020semiparametric} and \cite{newey2021control}.
The resulting model takes the form of a multiplicative functional-coefficient regression model, which can be seen as a special case of the so-called varying-coefficient additive models (see, e.g., \citealp{zhang2015varying,hu2019estimation}).
For this model, as a main parameter of interest, we focus on the estimation of the conditional average treatment response (CATR) -- the expectation of the potential outcome conditional on individual covariates.

For the estimation of the CATR parameter, we preliminarily need to estimate two control variables, one for own treatment effect and the other for the treatment spillover effect.
This preliminary step can be easily implemented with a composite quantile regression (CQR) method (see, e.g., \citealp{zou2008composite}).
Once the control variables are obtained, the CATR is estimated in three steps.
In the first step, we globally estimate the model without considering the multiplicative structure using a penalized series regression approach.
Next, we fine-tune the bias-correction function in the model by marginal integration to improve the efficiency.
Given this result, we finally re-estimate the model using a local linear kernel regression.
It is well-known that, 
to achieve pointwise asymptotic normality for a series estimator, we are often required to inefficiently undersmooth the estimator (cf. \citealp{huang2003local}).
For this reason, the idea of first obtaining a globally consistent estimate using a series method and then locally re-adjusting the estimate using a kernel method is widely adopted in the literature (see, e.g., \citealp{horowitz2004nonparametric, horowitz2005nonparametric, wang2007spline}).
With this multiple-step estimation procedure, we can show that the final CATR estimator is oracle, in that its asymptotic distribution is not affected by the estimation of the other nuisance functions.

To demonstrate the empirical usefulness of the proposed method, we conduct an empirical analysis on the causal effects of unemployment on crime.
In this empirical analysis, using Japanese city-level data, we estimate the CATR parameter, with the crime rate as the outcome of interest and each city's own unemployment rate and the average unemployment rate of its neighboring cities as the treatments.
To account for the endogeneity, we use the availability of childcare facilities as an instrument for the unemployment rate.
As a result, we find that the CATR tends to increase as the unemployment rate increases, which is a common finding in the literature.
Moreover, our model reveals new empirical evidence that the spillover effect of unemployment from the neighboring cities is important only for non-rural cities. 

To summarize, the main contributions of this study are fourfold.
\textit{First}, to our knowledge, this study is the first to address a continuous treatment model in which both treatment spillover and endogeneity are present.
\textit{Second}, we propose a novel control function approach to establish the nonparametric identification of this model under some functional form restrictions.
\textit{Third}, considering the empirical feasibility, we propose a semiparametric multiplicative potential outcome model and develop a three-step estimation procedure for it, which is of independent interest in semiparametric estimation theory. 
\textit{Fourth}, by applying the proposed method to Japanese city data, we provide new empirical evidence on the relationship between the local unemployment and crime rates.

\paragraph{Paper organization:}
In Section \ref{sec:model}, we present our model and discuss nonparametric identification of the CATR parameter.
In Section \ref{sec:semiparametric}, we introduce an empirically tractable semiparametric model and propose our three-step estimation procedure.
The asymptotic properties of the estimator are also presented in this section.
The empirical analysis on the Japanese crime data is presented in Section \ref{sec:empiric}, and Section \ref{sec:conclusion} concludes.
The proofs of technical results, Monte Carlo simulation results, and supplementary information on the empirical analysis are all summarized in the online supplementary material.

\paragraph{Notation:}
For natural numbers $a$ and $b$, $I_a$ denotes an $a \times a$ identity matrix, and $\mathbf{0}_{a \times b}$ denotes a matrix of zeros of dimension $a \times b$.
For a matrix $A$, $|| A ||$ denotes the Frobenius norm.
If $A$ is a square matrix, we use $\rho_{\max} (A)$ and $\rho_{\min} (A)$ to denote its largest and smallest eigenvalues, respectively. 
We use $\otimes$ and $\circ$ to denote the Kronecker and Hadamard (element-wise) product, respectively.
For a set $\mathcal{A}$, $|\mathcal{A}|$ denotes its cardinality. 
We use $c$ (possibly with a subscript) to denote a generic positive constant whose value may vary in different contexts.
For random variables $X$ and $Y$, $X \indep Y$ means that they are independent.
Lastly, for positive sequences $a_n$ and $b_n$, $a_n \asymp b_n$ means that there exist $c_1$ and $c_2$ such that $0 < c_1 \le a_n/b_n \le c_2 < \infty$ for sufficiently large $n$.

\section{Nonparametric Identification}\label{sec:model}

Suppose that we have a sample of $N$ agents that form social networks whose connections are represented by an $N \times N$ adjacency matrix $\bm{A}_N = (A_{i,j})_{1 \le i,j \le N}$.
These agents can be individuals, firms, or municipalities depending on the context.
The networks can be directed, that is, regardless of the value of $A_{j,i}$, we may observe $A_{i,j} = 1$ if $j$ affects $i$ and $A_{i,j} = 0$ otherwise.
The diagonal elements of $\bm{A}_N$ are all zero.
Throughout the paper, we treat $\bm{A}_N$ as non-random.
For each $i$, we denote $i$'s ``reference group'' (peers, colleagues, neighbors, etc) as $\mathcal{P}_i \coloneqq \{1 \le j \le N: A_{i,j} = 1\}$ and its size as $n_i = |\mathcal{P}_i|$.
To simplify the discussion, we assume that $n_i > 0$ for all $i$.
In addition, we write $\ol{\mathcal{P}}_i \coloneqq i \cup \mathcal{P}_i$.
For a general variable $Q$, we denote $\bm{Q}_{\mathcal{P}_i} \coloneqq \{Q_j\}_{j \in \mathcal{P}_i}$.
We define $\bm{Q}_{\ol{\mathcal{P}}_i}$ similarly.

Let $T \in \mathcal{T}$ denote the continuous treatment variable of interest.
We assume that the support $\mathcal{T}$ is common across all individuals and is a real closed interval of $\mathbb{R}$.
The outcome variable of interest is $Y \in \mathbb{R}$.
In this study, we explicitly allow that the peers' treatments $\bm{T}_{\mathcal{P}_i}$ influence on $i$'s own outcome $Y_i$ via a known real-valued function $S_i$: 
\begin{align*}
    S_i(\bm{T}_{\mathcal{P}_i}) \coloneqq m_i\left( \sum_{j \in \mathcal{P}_i} a_{i,j} T_j\right),
\end{align*}
where $m_i$ is a strictly increasing continuous function which may depend on $i$, and $a_{i,j}$'s are weight terms satisfying $a_{i,j} > 0$ if and only if $j \in \mathcal{P}_i$ and $\sum_{j \in \mathcal{P}_i} a_{i,j} = 1$.\footnote{
    Assuming that the functional form of $S_i$ is known a priori is a strong assumption.
    In the literature, there are several studies that investigate under what conditions we can make meaningful causal inference even when the ``exposure'' function (i.e., $S_i$) is unknown or mis-specified (e.g., \citealp{hoshino2021causal, savje2021average, leung2022causal, VAZQUEZBARE2022}).
    All these studies consider only binary treatment situations, and whether we can establish similar results for continuous treatment models would be an important open question.
}
For example, if $m_i$ is an identity function and $a_{i,j} = n_i^{-1}$, then $S_i$ simply returns the reference-group average.
This form of treatment spillover is the most commonly used in empirical studies on peer effects.
If $m_i(t) = n_i t$ instead, then $S_i$ in this case is the sum of peers' treatments.
For another example, in the spatial statistics literature, researchers often assume that $a_{i,j}$ is inversely proportional to the geographical distance between $i$ and $j$.
The support of $S_i$ is denoted as $\mathcal{S}_i$, which varies across individuals in general because of the transformation $m_i$ that may be specific to $i$.
As a special case, if $m_i$ is an identity function, then we have the same support for all $i$ as $\mathcal{S}_i = \mathcal{T}$.

Now, letting $X_i = (X_{1,i}, \ldots, X_{dx, i})^\top$ be a $dx \times 1$ vector of observed individual covariates, we suppose that the outcome $Y_i$ is determined in the following equation:
\begin{align}\label{eq:outcome_eq}
    Y_i = y(T_i, S_i, X_i, \epsilon_i) & \;\; \text{for} \;\; i = 1, \ldots, N
\end{align}
where $y$ is an unknown function, $S_i = S_i(\bm{T}_{\mathcal{P}_i})$, and $\epsilon_i$ is an unobservable determinant of $Y_i$.
Note that $\epsilon_i$ could be a vector of unknown dimension.
The non-separable setting is essential for a treatment effect model since it permits general interactions between $(T_i, S_i)$ and $\epsilon_i$, allowing for treatment heterogeneity among observationally identical individuals.
The potential outcome when $(T_i, S_i) = (t, s) \in \mathcal{T}\mathcal{S}_i$ is written as $Y_i(t, s) = y(t, s, X_i, \epsilon_i)$.
The target parameter of interest in this study is the conditional average treatment response (CATR) function:\footnote{
    Following the econometrics literature (e.g., \citealp{blundell2003endogeneity}), this parameter may be called the average structural function (conditional on $X_i = x$).
    In the causal inference literature, it is also known as the average dose response function.
    }
\begin{align*}
    \mathrm{CATR}(t, s, x) 
    & \coloneqq \bbE [Y_i(t, s) \mid X_i = x].
\end{align*}

Next, suppose that we have a $dz \times 1$ vector of IVs, $Z_i = (Z_{1,i}, \ldots, Z_{dz,i})^\top$, which determines the value of $T_i$ through the following equation:
\begin{align}\label{eq:treament_eq}
    T_i = \pi(X_i, Z_i) + \eta(U_i),
\end{align}
where $\pi$ and $\eta$ are unknown functions, and $U_i$ is a scalar unobservable variable.
We allow that not only $U_i$ but also $\bm{U}_{\mathcal{P}_i}$ are potentially correlated with $\epsilon_i$, which are the sources of endogeneity for $T_i$ and $S_i$, respectively.
To address the endogeneity issue, we make the following assumptions: 

\begin{assumption}\label{as:exclusion}
    \textbf{(i)} $\bm{Z}_{\ol{\mathcal{P}}_i} \indep (\bm{U}_{\ol{\mathcal{P}}_i}, \epsilon_i) \mid \bm{X}_{\ol{\mathcal{P}}_i}$;
    \textbf{(ii)} the function $u \mapsto \eta(u)$ is continuous and strictly increasing, and $U_i$ is distributed as $\mathrm{Uniform}[0,1]$ conditional on $\bm{X}_{\ol{\mathcal{P}}_i}$.
\end{assumption}

Assumption \ref{as:exclusion}(i) is the exogeneity assumption that own and peers' IVs are independent of the unobservables given their covariates.
This assumption would be particularly plausible in an experimental setup where $Z_i$ is a randomly assigned treatment inducement. 
Note that we do not preclude the cross-sectional correlation of the covariates, IVs, and unobservables, which should be an important feature of network data.
For Assumption \ref{as:exclusion}(ii), we emphasize that the functional form of $\eta$ is independent of $\bm{X}_{\ol{\mathcal{P}}_i}$.
Note that although this independence restriction is not strictly necessary for a consistent estimation of the treatment equation \eqref{eq:treament_eq},\footnote{
    When considering a more general treatment equation $T_i = \pi(X_i, Z_i) + \eta(U_i, \bm{X}_{\ol{\mathcal{P}}_i})$, the second part of Assumption \ref{as:exclusion}(ii) is always satisfied using $\tilde U_i$ and $\tilde \eta$, where $\tilde U_i = F_{U_i|\bm{X}_{\ol{\mathcal{P}}_i}}(U_i \mid \bm{X}_{\ol{\mathcal{P}}_i})$ and $\tilde \eta(\tilde U_i, \bm{X}_{\ol{\mathcal{P}}_i}) = \eta(F_{U_i|\bm{X}_{\ol{\mathcal{P}}_i}}^{-1}(\tilde U_i \mid \bm{X}_{\ol{\mathcal{P}}_i}))$.
    Under Assumption \ref{as:exclusion}(i), we have $\Pr(T_i \le \pi(X_i, Z_i) + \eta(u, \bm{X}_{\ol{\mathcal{P}}_i}) \mid Z_i, \bm{X}_{\ol{\mathcal{P}}_i}) = \Pr(U_i \le u \mid \bm{X}_{\ol{\mathcal{P}}_i}) = u$ for any $u \in (0,1)$, which implies that $\pi(X_i, Z_i) + \eta(u, \bm{X}_{\ol{\mathcal{P}}_i})$ can be estimated in a quantile regression framework.
    Furthermore, by checking whether the estimated conditional quantile function varies with the peers' covariates, we may be able to test the validity of assuming $\eta(U_i, \bm{X}_{\ol{\mathcal{P}}_i}) = \eta(U_i)$; however, directly implementing this would suffer from the curse of dimensionality. 
}
we introduce it for constructing an effective control variable for the spillover effect.

Recall that the disturbance term in the outcome equation is allowed to be a vector and to enter the model in a non-additive way.
Thus, we can see that the identification result heavily depends on the functional form of the treatment equation rather than that of the outcome equation, in line with the existing literature.

\bigskip

Below, we formally discuss the identification of the CATR parameter. 
Throughout this section, we use the term ``identification'' to indicate that the parameter of interest can be characterized through a moment of the observable random variables.
Note that this does not automatically imply the estimability of the parameters because network data naturally follow a nonidentical and dependent data distribution. \footnote{
    In this sense, the CATR parameter should be indexed by ``$\:i\:$,'' however, we suppress it for notational simplicity.
}
To estimate the CATR parameter, as discussed later, we require some additional stationarity assumptions.

As is well-known, we can deal with the endogeneity of $T_i$ by including $U_i$ as a control variable in the estimation of outcome equation \eqref{eq:outcome_eq} (e.g., \citealp{blundell2003endogeneity, florens2008identification, imbens2009identification}).
For the identification of $U_i$, in Lemma \ref{lem:UandV}, we prove that $\pi_i(X_i,Z_i)$ and $\eta(u)$ for any $u \in (0,1)$ are identifiable up to a location shift (see Remark \ref{rem:CQR} as well).
When these parameters are treated as known, $U_i$ can be obtained by solving $\min_{u \in [0,1]}|T_i - \pi(X_i,Z_i) - \eta(u)|$ under the monotonicity and continuity of $\eta$.

Similarly, to account for the endogeneity of $S_i$, we need to find a control variable(s) for it.
One obviously valid candidate is to use $\bm{U}_{\mathcal{P}_i}$.
Once the behaviors of $\bm{U}_{\mathcal{P}_i}$ are controlled, the variation of $S_i$ comes only from $(\bm{X}_{\mathcal{P}_i}, \bm{Z}_{\mathcal{P}_i})$, which are conditionally independent of the disturbances by Assumption \ref{as:exclusion}(i).
However, this approach is not practical unless $n_i$ is limited to one or two, because of the curse of dimensionality. 
In the next proposition, we show that under the additive separability in \eqref{eq:treament_eq} we can construct a one-dimensional control function for $S_i$.

\begin{proposition}\label{prop:S_control}
    Under Assumption \ref{as:exclusion}, $V_i \coloneqq \{v \in [0,1] \mid \sum_{j \in \mathcal{P}_i} a_{i,j} \eta(U_j) = \eta(v)\}$ is unique and is a valid control variable for $S_i$.
\end{proposition}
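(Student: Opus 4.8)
The plan is to separate the statement into its two claims: that the defining equation pins down a \emph{unique} $v \in [0,1]$, and that the resulting $V_i$ renders $S_i$ conditionally exogenous. For uniqueness, I would first note that, since $\eta$ is continuous and strictly increasing on $[0,1]$ by Assumption \ref{as:exclusion}(ii), it is a bijection from $[0,1]$ onto $[\eta(0),\eta(1)]$. Each summand $\eta(U_j)$ lies in $[\eta(0),\eta(1)]$, and because the weights satisfy $a_{i,j} > 0$ and $\sum_{j \in \mathcal{P}_i} a_{i,j} = 1$, the quantity $W_i \coloneqq \sum_{j \in \mathcal{P}_i} a_{i,j} \eta(U_j)$ is a convex combination and hence also lies in $[\eta(0),\eta(1)]$. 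Invertibility of $\eta$ then yields exactly one $v \in [0,1]$ with $\eta(v) = W_i$, namely $V_i = \eta^{-1}(W_i)$. This shows $V_i$ is well defined and, crucially, is a measurable function of $\bm{U}_{\mathcal{P}_i}$ alone.

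For validity, the key algebraic step is to substitute the treatment equation \eqref{eq:treament_eq} into the spillover term and exploit the definition of $V_i$ to absorb the peer disturbances. Writing $T_j = \pi(X_j, Z_j) + \eta(U_j)$ for $j \in \mathcal{P}_i$ gives
\begin{align*}
    S_i = m_i\!\left( \sum_{j \in \mathcal{P}_i} a_{i,j}\pi(X_j,Z_j) + \sum_{j \in \mathcal{P}_i} a_{i,j}\eta(U_j) \right) = m_i\!\left( \sum_{j \in \mathcal{P}_i} a_{i,j}\pi(X_j,Z_j) + \eta(V_i) \right),
\end{align*}
so that $S_i$ is a deterministic function of $(\bm{X}_{\mathcal{P}_i}, \bm{Z}_{\mathcal{P}_i}, V_i)$ only. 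This is exactly where the additive separability of \eqref{eq:treament_eq} does the work: it collapses the entire vector $\bm{U}_{\mathcal{P}_i}$ into the single scalar summary $V_i$ as far as $S_i$ is concerned, realizing the ``hypothetical individual'' interpretation from the introduction.

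It then remains to establish $S_i \indep \epsilon_i \mid (\bm{X}_{\ol{\mathcal{P}}_i}, V_i)$, which I would derive from Assumption \ref{as:exclusion}(i) by the conditional-independence calculus. Since $(V_i, \epsilon_i)$ is a measurable function of $(\bm{U}_{\ol{\mathcal{P}}_i}, \epsilon_i)$, the decomposition property gives $\bm{Z}_{\ol{\mathcal{P}}_i} \indep (V_i, \epsilon_i) \mid \bm{X}_{\ol{\mathcal{P}}_i}$, and the weak-union property then yields $\bm{Z}_{\ol{\mathcal{P}}_i} \indep \epsilon_i \mid (\bm{X}_{\ol{\mathcal{P}}_i}, V_i)$; in particular the subvector satisfies $\bm{Z}_{\mathcal{P}_i} \indep \epsilon_i \mid (\bm{X}_{\ol{\mathcal{P}}_i}, V_i)$. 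Because, conditional on $(\bm{X}_{\ol{\mathcal{P}}_i}, V_i)$, the spillover $S_i$ is a measurable function of $\bm{Z}_{\mathcal{P}_i}$ by the display above, it inherits this conditional independence, which is precisely the defining property of a valid control variable for $S_i$. Combining with the analogous role of $U_i$ for $T_i$ would further deliver joint exogeneity of $(T_i, S_i)$ given $(U_i, V_i, \bm{X}_{\ol{\mathcal{P}}_i})$, but that is not needed for this statement.

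The main obstacle is conceptual rather than computational: recognizing that additive separability permits the substitution collapsing $\bm{U}_{\mathcal{P}_i}$ to the one-dimensional $V_i$, after which validity reduces to a short application of the graphoid axioms. The remaining care is to invoke Assumption \ref{as:exclusion}(i) over the full set $\ol{\mathcal{P}}_i$ (so that both own and peers' instruments are covered) and to confirm that the convex-combination argument keeps $W_i$ inside the range of $\eta$, which is what guarantees the existence of $V_i \in [0,1]$ and hence its uniqueness.
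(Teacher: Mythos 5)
Your proof is correct and follows essentially the same route as the paper's: uniqueness via strict monotonicity and continuity of $\eta$ together with the observation that the convex combination $\sum_{j \in \mathcal{P}_i} a_{i,j}\eta(U_j)$ stays in the range of $\eta$ (the paper packages this as an intermediate-value-theorem argument for the map $v \mapsto \tilde S_i(v)$), and validity by substituting the additively separable treatment equation so that $S_i$ becomes a function of $(\bm{X}_{\mathcal{P}_i}, \bm{Z}_{\mathcal{P}_i}, V_i)$ and then invoking Assumption \ref{as:exclusion}(i). The only difference is presentational: the paper manipulates the conditioning event to show $\bbE[\epsilon_i \mid S_i, V_i, \bm{X}_{\ol{\mathcal{P}}_i}] = \bbE[\epsilon_i \mid V_i, \bm{X}_{\ol{\mathcal{P}}_i}]$, whereas you make the underlying decomposition/weak-union steps explicit and obtain the full conditional independence $S_i \indep \epsilon_i \mid (\bm{X}_{\ol{\mathcal{P}}_i}, V_i)$, which is in fact what display \eqref{eq:control} (with general $f(\epsilon_i)$) requires.
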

Proposition \ref{prop:S_control} implies that, for a general function $f$ of $\epsilon$, we have 
\begin{align}\label{eq:control}
    \begin{split}
    & \bbE[f(\epsilon_i) \mid T_i = t, S_i = s, U_i = u, V_i = v, \bm{X}_{\ol{\mathcal{P}}_i} = \bm{x}] \\
    & \quad = \bbE\left[ f(\epsilon_i) \mid \pi(x_i, Z_i) = c_1, \textstyle \sum_{j \in \mathcal{P}_i}a_{i,j} \pi(x_j, Z_j) = c_2, U_i = u, V_i = v , \bm{X}_{\ol{\mathcal{P}}_i} = \bm{x} \right] \\ 
    & \quad = \bbE[ f(\epsilon_i) \mid U_i = u, V_i = v, \bm{X}_{\ol{\mathcal{P}}_i} = \bm{x}],
    \end{split}
\end{align}
where $c_1 = t- \eta(u)$ and $c_2 = m_i^{-1}(s) - \eta(v)$, by Assumption \ref{as:exclusion}(i).
The identification of $V_i$ can be achieved by solving $\min_{v \in [0,1]}|\sum_{j \in \mathcal{P}_i} a_{i,j} (T_j - \pi(X_j, Z_j))  - \eta(v)|$.
Without the additive separability between $(X_i,Z_i)$ and $U_i$ as in \eqref{eq:treament_eq}, $V_i$ generally depends on the instruments $\bm{Z}_{\mathcal{P}_i}$, and conditioning on $V_i$ also restricts their behavior (cf. \citealp{kasy2011identification}), leading to a failure of establishing \eqref{eq:control}.

\begin{remark}[Interpretation of $V_i$]
    We can interpret the control variable $V_i$ as the rank variable for a hypothetical friend of $i$ who is observationally equal to the weighted average of $i$'s friends. 
    Note that this interpretation essentially comes from the assumption that $S_i$ is a scalar-valued function.
    That is, in our model, having multiple friends whose treatments are on average equal to $S_i$ is indistinguishable from having only one friend whose treatment level is exactly $S_i$.
    Note that simply using $\sum_{j \in \mathcal{P}_i} a_{i,j} \eta(U_j)$ as a control variable theoretically works, but in general, its support is unbounded.
    Unbounded control variables are less practical because, as presented below, our estimation procedure involves computing several integrals with respect to the control variables.
    In addition, a more delicate discussion is necessary to establish desirable convergence results for the functions of the control variables.
\end{remark}

Now, we define the marginal treatment response (MTR) function: 
\begin{align*}
    \mathrm{MTR}(t, s, u, v, \bm{x}) \coloneqq \bbE [Y_i(t, s) \mid U_i = u, V_i = v, \bm{X}_{\ol{\mathcal{P}}_i} = \bm{x}]. 
\end{align*}
From \eqref{eq:control}, we have
\begin{align}\label{eq:mtr_ident}
\begin{split}
    \bbE [Y_i \mid T_i = t, S_i = s, U_i = u, V_i = v, \bm{X}_{\ol{\mathcal{P}}_i} = \bm{x}] = \mathrm{MTR}(t, s, u, v, \bm{x}).
\end{split}
\end{align}
Hence, we can identify $\mathrm{MTR}(t, s, u, v, \bm{x})$ as the left-hand side of \eqref{eq:mtr_ident}.
Moreover, if $\mathrm{MTR}(t, s, u, v, \bm{x})$ is identifiable uniformly over $\mathrm{supp}( U_i, V_i, \bm{X}_{\mathcal{P}_i} \mid X_i = x)$, we can identify $\mathrm{CATR}(t, s, x)$ by
\begin{align}\label{eq:atr_ident}
    \mathrm{CATR}(t, s, x) 
    & = \bbE\left[ \int_0^1 \int_0^1 \mathrm{MTR}(t, s, u, v, \bm{X}_{\ol{\mathcal{P}}_i}) f_{U_i V_i |\bm{X}_{\ol{\mathcal{P}}_i}}(u,v \mid \bm{X}_{\ol{\mathcal{P}}_i})\mathrm{d}u\mathrm{d}v \; \bigg| \; X_i = x\right],
\end{align}
where the outer expectation is with respect to $\bm{X}_{\mathcal{P}_i}$ conditional on $X_i = x$, and $f_{U_i V_i|\bm{X}_{\ol{\mathcal{P}}_i}}$ is the conditional density function of $(U_i, V_i)$ given $\bm{X}_{\ol{\mathcal{P}}_i}$.
For $\mathrm{MTR}(t, s, u, v, \bm{x})$ to be well-defined on the entire $\mathrm{supp}( U_i, V_i, \bm{X}_{\mathcal{P}_i} \mid X_i = x)$, we need the following additional condition:
\begin{assumption}\label{as:full_support}
    $\mathrm{supp}( U_i, V_i, \bm{X}_{\mathcal{P}_i} \mid T_i = t, S_i = s, X_i = x) = \mathrm{supp}( U_i, V_i, \bm{X}_{\mathcal{P}_i} \mid X_i = x)$.
\end{assumption}
From \eqref{eq:mtr_ident}, we can see that without any support restriction, $\mathrm{MTR}(t, s, u, v, \bm{x})$ can be obtained only on $\mathrm{supp}(U_i, V_i, \bm{X}_{\mathcal{P}_i} \mid T_i = t, S_i = s, X_i = x)$.
Assumption \ref{as:full_support} requires that conditioning on $\{T_i = t, S_i = s\}$ should not affect the support of $(U_i, V_i, \bm{X}_{\mathcal{P}_i})$ conditional on $X_i = x$.
An important implication of this condition is that the IV must have very rich variation so that the values of $T_i$ and $S_i$ do not restrict the range of values that $U_i$ and $V_i$ can take.
We summarize the result obtained so far in the next theorem.

\begin{theorem}\label{thm:atr_ident}
    Suppose that Assumptions \ref{as:exclusion} and \ref{as:full_support} hold.
    Then, $\mathrm{CATR}(t, s, x)$ is identified through \eqref{eq:mtr_ident} -- \eqref{eq:atr_ident}.
\end{theorem}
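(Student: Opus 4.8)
The plan is to verify the two displayed identities \eqref{eq:mtr_ident} and \eqref{eq:atr_ident} in turn, since together they express $\mathrm{CATR}(t,s,x)$ as a functional of observable conditional moments. The first identity shows that the unobservable MTR coincides with an observable conditional expectation on the region where the latter is well defined; the second reconstructs the CATR by integrating the MTR against the conditional law of the control variables. The support Assumption \ref{as:full_support} is what glues these two steps together, and I expect it to be the only delicate point.

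To establish \eqref{eq:mtr_ident}, first I would note that on the conditioning event $\{T_i = t, S_i = s\}$ the realized outcome equals the potential outcome, $Y_i = y(t,s,X_i,\epsilon_i) = Y_i(t,s)$, by the definition of $y$ in \eqref{eq:outcome_eq}. Since $X_i$ is a coordinate of $\bm{X}_{\ol{\mathcal{P}}_i}$, conditioning on $\bm{X}_{\ol{\mathcal{P}}_i} = \bm{x}$ pins down $X_i = x_i$, so $\epsilon_i \mapsto y(t,s,x_i,\epsilon_i)$ is a fixed-argument function of $\epsilon_i$ alone. Applying \eqref{eq:control} from Proposition \ref{prop:S_control} with $f(\cdot) = y(t,s,x_i,\cdot)$ then strips off the conditioning on $(T_i,S_i)$:
\begin{align*}
\bbE[Y_i \mid T_i = t, S_i = s, U_i = u, V_i = v, \bm{X}_{\ol{\mathcal{P}}_i} = \bm{x}]
&= \bbE[y(t,s,x_i,\epsilon_i) \mid U_i = u, V_i = v, \bm{X}_{\ol{\mathcal{P}}_i} = \bm{x}] \\
&= \bbE[Y_i(t,s) \mid U_i = u, V_i = v, \bm{X}_{\ol{\mathcal{P}}_i} = \bm{x}],
\end{align*}
which is exactly $\mathrm{MTR}(t,s,u,v,\bm{x})$ by definition. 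This proves \eqref{eq:mtr_ident}.

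For \eqref{eq:atr_ident}, I would apply the law of iterated expectations twice. Because $X_i$ sits inside $\bm{X}_{\ol{\mathcal{P}}_i}$, the tower property gives
\begin{align*}
\mathrm{CATR}(t,s,x) = \bbE[Y_i(t,s) \mid X_i = x] = \bbE\big[\, \bbE[Y_i(t,s) \mid \bm{X}_{\ol{\mathcal{P}}_i}] \;\big|\; X_i = x \,\big],
\end{align*}
where the outer expectation varies only $\bm{X}_{\mathcal{P}_i}$. Conditioning further on $(U_i,V_i)$ and using that $U_i \sim \mathrm{Uniform}[0,1]$ by Assumption \ref{as:exclusion}(ii) and $V_i \in [0,1]$ by construction in Proposition \ref{prop:S_control}, the inner expectation expands as
\begin{align*}
\bbE[Y_i(t,s) \mid \bm{X}_{\ol{\mathcal{P}}_i}] = \int_0^1 \int_0^1 \mathrm{MTR}(t,s,u,v,\bm{X}_{\ol{\mathcal{P}}_i}) \, f_{U_i V_i \mid \bm{X}_{\ol{\mathcal{P}}_i}}(u,v \mid \bm{X}_{\ol{\mathcal{P}}_i}) \, \mathrm{d}u \, \mathrm{d}v,
\end{align*}
and substituting back yields \eqref{eq:atr_ident}.

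The main obstacle is the support gap. The integral in \eqref{eq:atr_ident} requires $\mathrm{MTR}(t,s,\cdot,\cdot,\cdot)$ over the full set $\mathrm{supp}(U_i,V_i,\bm{X}_{\mathcal{P}_i} \mid X_i = x)$, whereas \eqref{eq:mtr_ident} only delivers it as an observable object on $\mathrm{supp}(U_i,V_i,\bm{X}_{\mathcal{P}_i} \mid T_i = t, S_i = s, X_i = x)$, the region where the conditioning event carries positive density. Assumption \ref{as:full_support} forces these two supports to coincide, so that the MTR is identified from observables at every point entering the integral; this is precisely the requirement that the instruments induce enough variation that fixing $(T_i,S_i)$ does not truncate the range of $(U_i,V_i,\bm{X}_{\mathcal{P}_i})$. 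With that identity in hand, every quantity on the right-hand side of \eqref{eq:atr_ident} is a moment of observables, completing the argument.
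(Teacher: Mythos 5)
Your treatment of the two displayed identities follows the same route as the paper: \eqref{eq:mtr_ident} is obtained by applying \eqref{eq:control} (Proposition \ref{prop:S_control}) with $f(\cdot) = y(t,s,x_i,\cdot)$, and \eqref{eq:atr_ident} by iterated expectations, with Assumption \ref{as:full_support} invoked exactly where the paper invokes it, namely to guarantee that the MTR is identified on the whole region of integration rather than only on $\mathrm{supp}(U_i, V_i, \bm{X}_{\mathcal{P}_i} \mid T_i = t, S_i = s, X_i = x)$. Indeed, your verification of those two displays is somewhat more detailed than the paper's own proof, which essentially asserts them.

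There is, however, one step the paper's proof makes explicit and yours omits, and under the paper's definition of identification (``characterized through a moment of the observable random variables'') it is not dispensable: $U_i$ and $V_i$ are not observed, so conditioning on $\{U_i = u, V_i = v\}$ in \eqref{eq:mtr_ident} is not, by itself, an operation available from the data. The paper's proof therefore begins by invoking Lemma \ref{lem:UandV}: under Assumption \ref{as:exclusion}, $\pi(X_i,Z_i)$ and $\eta(u)$ for $u \in (0,1)$ are identified up to a location shift from the conditional quantiles of $T_i$, after which $U_i$ is recovered by solving $\min_{u \in [0,1]}|T_i - \pi(X_i,Z_i) - \eta(u)|$ and $V_i$ by solving $\min_{v \in [0,1]}|\sum_{j \in \mathcal{P}_i} a_{i,j}(T_j - \pi(X_j,Z_j)) - \eta(v)|$ (the location shift cancels in both problems). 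Without this step, your concluding claim that ``every quantity on the right-hand side of \eqref{eq:atr_ident} is a moment of observables'' is unsupported: \eqref{eq:mtr_ident} holds as an algebraic identity, but it does not yet deliver identification. The fix is a single sentence citing Lemma \ref{lem:UandV} (or the corresponding discussion in Section \ref{sec:model}) together with the recovery of $(U_i, V_i)$; with that added, your argument coincides with the paper's proof.
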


In order to construct an estimator for $\mathrm{CATR}(t, s, x)$ based on Theorem \ref{thm:atr_ident} in practice, we need to cope with several obstacles.
The first issue is the full-support condition in Assumption \ref{as:full_support}.
Finding IVs ensuring such a support condition would be quite difficult in reality.
Even without this support condition, if there are informative upper and lower bounds of $Y_i$, it would be possible to partially identify $\mathrm{CATR}(t, s, x)$, as in Theorem 4 of \cite{imbens2009identification}.
For another more convenient approach to point-identify the CATR in the absence of Assumption \ref{as:full_support}, in the next section and thereafter, we consider introducing additional functional form restrictions on the outcome equation.

Another issue is that, since the size of the reference group and the weights $a_{\mathcal{P}_i}$ are not necessarily identical among individuals, the support $\mathcal{S}_i$ of $S_i$ and the distribution of $V_i$ should vary with individuals in general.
The dimension and distribution of $\bm{X}_{\mathcal{P}_i}$ are also obviously heterogeneous among individuals.
These heterogeneities are problematic in the estimation stage.
To circumvent this problem, we simply restrict our attention to a specific subsample $\mathcal{N}'$ such that those in this subsample have the same support $\mathcal{S}$ and are homogenous (in the sense of Assumption \ref{as:model}(ii) below).
The simplest but empirically the most typical example would be $\mathcal{N}' = \{1 \le i \le N: m_i = m, \: n_i = c, \: a_{i,j} = c^{-1}\mathbf{1}\{j \in \mathcal{P}_i\}\}$ for some monotonically increasing $m$ and an integer $c \ge 1$.
The size of this subsample is denoted as $n = |\mathcal{N}'|$.
In the following, without loss of generality, we re-label the $N$ agents so that the first $n$ individuals belong to this subsample, that is, $\mathcal{N}' = \{1, \ldots, n\}$.

\begin{remark}[Alternative identification strategies]
    For a conventional continuous treatment model without spillovers, \cite{d2015identification} and \cite{torgovitsky2015identification} prove that the outcome equation can be identified even when only discrete IVs are available, without introducing strong functional form restrictions.
    Their arguments are essentially based on the two assumptions that $\epsilon_i$ is one-dimensional and monotonically related to $Y_i$ and that the distributions of the treatment variable conditional on different values of the IV must have an intersection.
    Note that the former condition does not directly fit in our model \eqref{eq:outcome_eq}, where $\epsilon_i$ is allowed to be multi-dimensional.
    In addition, as pointed out by \cite{ishihara2021partial}, the latter condition on the conditional distributions of the treatment can be violated in some empirical settings.
\end{remark}

\section{Semiparametric Estimation and Asymptotic Properties}\label{sec:semiparametric}

In this section, we propose our estimation procedure for $\mathrm{CATR}(t, s, x)$ in a specific semiparametric model, and investigate its asymptotic properties as $n$ increases to infinity.
To increase $n$, we consider a sequence of networks $\{ \bm{A}_N \}$, where $\bm{A}_N$ is an $N \times N$ adjacency matrix.
Since each $\bm{A}_N$ is assumed to be non-stochastic, our analysis can be interpreted as being conditional on the realization of $\bm{A}_N$.
In this sense, the model and parameters presented in this study should be viewed as triangular arrays defined along the network sequence.
    
\subsection{A semiparametric model and estimation procedure}

In \cite{newey2021control}, they proposed the following type of potential outcome model as a baseline (in our notation): $Y_i(t,s) = g(t, s)^\top \epsilon_i$, where $g(t, s)$ is a vector of transformations of $(t,s)$; for example, $g(t,s) = (1,t,s)^\top$ (note that they do not consider models with two treatment variables).
A similar parametric assumption was adopted in \cite{chernozhukov2020semiparametric} as well.
In this study, we generalize their models such that $g(t,s)$ is a general nonparametric function.
In addition, we further extend their work by allowing the treatment effect to vary with the individual characteristics $X_i$.
At the same time, to preserve empirical tractability, we assume that there are no interaction effects among $(X_i, \epsilon_i)$, and that $\epsilon_i$ is one-dimensional.
Consequently, we focus on the following outcome model:
\begin{align}\label{eq:semiparametric_model}
\begin{split}
    Y_i = X_i^\top \beta(T_i, S_i) + g(T_i, S_i) \epsilon_i \;\; \text{for} \;\; i = 1, \ldots, n
\end{split}
\end{align}
where $\beta(t,s) = (\beta_1(t,s), \ldots, \beta_{dx}(t,s))^\top$ and $g(t,s)$ are unknown functions to be estimated.\footnote{
    If we add one more continuous treatment, then $\beta$'s and $g$ will become three-dimensional functions, which are practically difficult to estimate because of the curse of dimensionality.
    To estimate such a model, we would need to introduce additional functional form restrictions on \eqref{eq:semiparametric_model}.
}
For normalization, we assume that $X_i$ includes a constant term so that $\bbE[\epsilon_i] = 0$ holds.
We do not additionally restrict the treatment equation \eqref{eq:treament_eq}, but we assume that the control variables $U_i$ and $V_i$ can be consistently estimated at a certain convergence rate (see Remark \ref{rem:kappa} below).
For further simplification, we strengthen Assumption \ref{as:exclusion}(i) as follows:

\setcounter{assumption}{-1}
\begin{assumption}\label{as:model}
    \textbf{(i)} $(\bm{Z}_{\ol{\mathcal{P}}_i}, \bm{X}_{\ol{\mathcal{P}}_i}) \indep (\bm{U}_{\ol{\mathcal{P}}_i}, \epsilon_i)$; and
    \textbf{(ii)} $(U_i,V_i)$ are identically distributed and $\bbE [ \epsilon_i \mid U_i, V_i] = \lambda(U_i, V_i)$ for all $i \in \mathcal{N}'$.
\end{assumption}

Recalling the definition of $V_i$, the first part of Assumption \ref{as:model}(ii) essentially requires that the joint distribution of $\bm{U}_{\ol{\mathcal{P}}_i}$ is stationary across all $i \in \mathcal{N}'$ and they adopt the same weighting scheme.
For the second part, we do not require the treatment heterogeneity terms to be identically distributed.
Since Assumption \ref{as:model} is fundamental to derive our estimation procedure, it will be assumed implicitly throughout the whole subsequent discussion.

Now, the CATR for this model at $(t,s,x)$ is simply given by 
\begin{align*}
    \textrm{CATR}(t,s,x) = x^\top \beta(t, s),
\end{align*}
which in fact coincides with the ``unconditional'' expectation $\bbE[y(t,s,x,\epsilon_i)]$.
Thus, the task of estimating $\textrm{CATR}(t,s,x)$ is greatly simplified to the estimation of $\beta$, and the support condition in Assumption \ref{as:full_support} is not necessary to recover the CATR.
Further, an analogous argument to \eqref{eq:control} gives 
\begin{align*}
    \bbE [Y_i \mid T_i = t, S_i = s, U_i = u, V_i = v, X_i = x]
    & = x^\top \beta(t, s) + g(t, s) \bbE [ \epsilon_i \mid U_i = u, V_i = v] \\
    & = x^\top \beta(t, s) + g(t, s) \lambda(u, v).
\end{align*}
This yields the following semiparametric multiplicative regression model: 
\begin{align}\label{eq:regression_model}
Y_i = X_i^\top \beta(T_i, S_i) + g(T_i, S_i) \lambda(U_i, V_i) + \tilde\epsilon_i \;\; \text{for} \;\; i = 1, \ldots, n.
\end{align}
By construction, we have $\bbE[\tilde\epsilon_i \mid T_i, S_i, U_i, V_i, X_i] = 0$.
It is clear that because of the multiplicative structure in \eqref{eq:regression_model}, in order to identify $g$ and $\lambda$ separately, we need some functional form normalizations.
First, recall that the condition $\bbE[\epsilon_i] = 0$ must be maintained.
This implies the following location normalization:
\begin{align}\label{eq:loc_normal}
    \bbE[\epsilon_i] = \bbE[\lambda( U_i, V_i)] = 0.
\end{align}
We further need some scale normalization for identification.
To this end, we set
\begin{align}\label{eq:scale_normal}
    \int_{\mathcal{TS}}g(t,s)\mathrm{d}t\mathrm{d}s = 1.
\end{align}

Regression models with a multiplicative structure as in \eqref{eq:regression_model} can be found in the literature in different contexts (e.g., \citealp{linton1995kernel, zhang2015varying, hu2019estimation, chen2020estimation}).
A typical approach to estimating this type of model is to apply some marginal integration (to the estimated nonparametric function or to the data itself in advance of the estimation).
However, note that none of these prior studies considered a functional-coefficient specification as in ours.
Thus, we need to develop a new estimation procedure for our model, which should be of independent interest in the semiparametric estimation literature.
Specifically, we deal with the multiplicative structure by nonparametrically estimating the model ignoring the functional form restriction in the first step, and then splitting the estimated function into two multiplicative components by marginal integration in the second step in a similar manner to \cite{chen2020estimation}.\footnote{
    Although it is possible to estimate the model in a single step by estimating $g$ and $\lambda$ separately from the beginning, the resulting method requires solving a high-dimensional non-convex optimization problem, which is computationally challenging, as pointed out in \cite{zhang2020new}.
    To circumvent this issue, \cite{zhang2020new} proposed an iterative computational algorithm.
    }
The first stage estimation involves a four-dimensional nonparametric regression on $(T_i,S_i, U_i, V_i)$, which often produces unstable estimates under a moderate sample size.
Thus, we consider using a penalized regression in this step.

The whole estimation procedure is as follows.
Before estimating the regression model in \eqref{eq:regression_model}, we need to consistently estimate the function $\pi$ and $\eta$ to obtain consistent estimates of $(U_i, V_i)$.
Excluding this preliminary step, our procedure for estimating the CATR consists of three steps.
In the first step, we estimate the model in \eqref{eq:regression_model} ignoring the multiplicative structure using a penalized series regression approach.
In the next step, we obtain estimates of $g$ and $\lambda$ by marginally integrating the nonparametric function obtained in the previous step.
Finally, we re-estimate the coefficient functions $\beta$ using a local linear kernel regression.
With these two additional steps, we can achieve an oracle property for the estimation of the CATR parameter (see Theorem \ref{thm:catr}).
The proposed estimation procedure is new to the literature but may be viewed as a minor extension of a commonly used two-step series-kernel estimation method, in which \cite{chen2020estimation}'s marginal integration step is inserted between the series and kernel estimation steps.

\paragraph{Preliminary step: control variables estimation}

Under Assumptions \ref{as:exclusion}(ii) and \ref{as:model}(i), it holds that $\Pr(T_i \le \pi(X_i, Z_i) + \eta(u) \mid X_i, Z_i) = u$ for any $u \in (0,1)$.
This implies that we can estimate $\pi(X_i, Z_i)$ and $\eta(u)$ using the CQR method: for pre-specified $0 < u_1 < u_2 < \cdots < u_L < 1$,
\begin{align*}
    \hat \eta_n(u_1), \ldots, \hat \eta_n(u_L), \hat \pi_n \coloneqq \argmin_{b_1, \ldots, b_L, \varpi}\sum_{l = 1}^L \sum_{i = 1}^n q_{u_l}(T_i - \varpi(X_i, Z_i) - b_l)
\end{align*}
subject to some location normalization (e.g., $\eta(0.5) = 0$), where $q_u(x) \coloneqq x(u - \mathbf{1}\{x < 0\})$.
Then, we compute the residual, $res_i \coloneqq T_i - \hat \pi_n(X_i, Z_i)$, for each $i$. 
Note that $res_i$ is an estimator of $\eta(U_i)$, whose convergence rate is governed by that for $\hat \pi_n$.
The estimator for $U_i$ can be obtained by $\hat U_i \coloneqq \argmin_{u \in [0,1]}|res_i - \hat \eta_n(u)|$.
Similarly, we can estimate $V_i$ by $\hat V_i \coloneqq \argmin_{v \in [0,1]}| \sum_{j \in \mathcal{P}_i} a_{i,j} res_j - \hat \eta_n(v)|$.
In practice, these minimization problems are solved by grid search with a sufficiently large $L$.

\begin{remark}[Identification and estimation of the treatment equation]\label{rem:CQR}
    Here, we briefly comment on the identification and estimation of the treatment equation.
    If one considers estimating $\pi$ without explicit functional form assumptions, such a model has been investigated in \cite{kai2010local}, where they applied a local polynomial smoothing to the CQR problem.
    As shown in \cite{kai2010local}, we can estimate $\pi$ with the standard nonparametric convergence rate under mild conditions.
    However, as we demonstrate later (see Remark \ref{rem:kappa} as well), the full nonparametric estimation of $\pi$ is generally unacceptable for achieving the desirable asymptotic behavior for our CATR estimator.
    Thus, in the numerical studies in this paper, we use a linear model specification: $\pi(X_i, Z_i) = X_i^\top\gamma_x + Z_i^\top\gamma_z$, which corresponds to the model originally considered in \cite{zou2008composite}.
    They showed that the coefficients $(\gamma_x, \gamma_z)$ can be estimated at the $\sqrt{n}$ rate under the standard linear independence condition on $(X_i,Z_i)$.
    One may consider a semiparametric CQR model as an intermediate case of these two (e.g., \citealp{kai2011new}).
\end{remark}

\paragraph{First step: penalized series estimation}

Let $\{(p_{T,j}(t), p_{S,j}(s), p_{U,j}(u), p_{V,j}(v)): j = 1, 2, \ldots \}$ be the basis functions on $\mathcal{T}$, $\mathcal{S}$, $[0,1]$, and $[0,1]$, respectively.
We define $P_T(t) \coloneqq (p_{T,1}(t), \ldots , p_{T,K_T}(t))^\top$, $P_S(s) \coloneqq (p_{S,1}(s), \ldots , p_{S,K_S}(s))^\top$, and $P_{TS}(t, s) \coloneqq P_T(t) \otimes P_S(s)$.
Then, we consider series approximating $\beta_l(\cdot, \cdot)$ by $\beta_l(t, s) \approx P_{TS}(t, s)^\top \theta_{\beta_l}$ for some $K_{TS} \times 1$ coefficient vector $\theta_{\beta_l}$, for each $l = 1, \ldots, dx$, where $K_{TS} \coloneqq K_T K_S$.
Similarly, define $P_U(u) \coloneqq (p_{U,1}(u), \ldots , p_{U,K_U}(u))^\top$, $P_V(v) \coloneqq (p_{V,1}(v), \ldots , p_{V,K_V}(v))^\top$, $P_{UV}(u, v) \coloneqq P_U(u) \otimes P_V(v)$, and $K_{UV} \coloneqq K_U K_V$.
We assume that $K_{TS}$ and $K_{UV}$ increase as $n$ increases at the same speed so that there exists an increasing sequence $\{\kappa_n\}$ satisfying $(K_{TS}, K_{UV}) \asymp \kappa_n$.

For the estimation of $g(t,s)\lambda(u,v)$, to impose the location condition \eqref{eq:loc_normal} in the estimation, we would like to normalize the basis function such that $\ol{P}_{UV}(u, v) \coloneqq P_{UV}(u, v) - \bbE[P_{UV}(U, V)]$ (recall that we have been focusing on a subsample in which $\{(U_i, V_i)\}$ are identically distributed).
However, since both $\bbE[P_{UV}(U, V)]$ and $(U_i, V_i)$ are unknown, we instead use
\begin{align*}
    \hat P_{UV}(u, v) \coloneqq P_{UV}(u, v) - \frac{1}{n}\sum_{i=1}^n P_{UV}(\hat U_i, \hat V_i).
\end{align*}
Then, letting $\hat P(t, s, u, v) \coloneqq P_{TS}(t, s) \otimes \hat P_{UV}(u, v)$, we consider approximating $g(t,s)\lambda(u,v) \approx \hat P(t, s, u, v)^\top \theta_{g\lambda}$ with a $K_{TSUV} \times 1$ coefficient vector $\theta_{g\lambda}$, where $K_{TSUV} \coloneqq K_{TS} K_{UV}$.
Using these approximations, we have 
\begin{align}\label{eq:series_approx}
    Y_i \approx P_X(X_i, T_i, S_i)^\top \theta_\beta + \hat P(T_i, S_i, \hat U_i, \hat V_i)^\top \theta_{g\lambda} + \tilde\epsilon_i,  \;\; \text{for} \;\; i = 1, \ldots, n
\end{align}
where $P_X(X_i, T_i, S_i) \coloneqq X_i \otimes P_{TS}(T_i, S_i)$, and $\theta_\beta = (\theta_{\beta_1}^\top, \ldots, \theta_{\beta_{dx}}^\top)^\top$.
Based on this model approximation, we estimate $\theta = (\theta_\beta^\top, \theta_{g\lambda}^\top)^\top$ by the penalized least squares method.

Let $\bm{D}$ be a $(dx K_{TS} + K_{TSUV})$-dimensional positive semidefinite symmetric matrix such that $\rho_{\max}(\bm{D}) = O(1)$.
This matrix serves as a penalization matrix, in which typical elements of $\bm{D}$ are, for example, the integrated derivatives of the basis functions. 
Write $\hat \Pi_i \coloneqq (P_X(X_i, T_i, S_i)^\top, \hat P(T_i, S_i, \hat U_i, \hat V_i)^\top)^\top$,  $\hat{\bm{\Pi}}_n = (\hat \Pi_1, \ldots, \hat \Pi_n)^\top$, and $\bm{Y}_n = (Y_1, \ldots, Y_n)^\top$.
Then, for a sequence of tuning parameters $\{\tau_n\}$ tending to zero at a certain rate as $n \to \infty$, the penalized least squares estimator of $\theta$ is given by
\begin{align}\label{eq:series_pen}
    \hat \theta_n = (\hat \theta_{n,\beta}^\top, \hat \theta_{n,g\lambda}^\top)^\top
    & \coloneqq \left[\hat{\bm{\Pi}}_n^\top \hat{\bm{\Pi}}_n + \tau_n \bm{D} n \right]^{-1}\hat{\bm{\Pi}}_n^\top \bm{Y}_n.
\end{align}
The first-stage estimator of $\beta_l(t, s)$ can be obtained by $\hat \beta_{n,l}(t, s) \coloneqq P_{TS}(t, s)^\top \hat \theta_{n, \beta_l}$ for $l = 1 , \ldots, dx$.
Similarly, $g(t,s) \lambda(u, v)$ can be estimated as $\hat P(t, s, u, v)^\top \hat \theta_{n, g\lambda}$.

\paragraph{Second step: marginal integrations}

In the second step, we first recover the $\lambda$ function.
Recalling the scale normalization in \eqref{eq:scale_normal}, the estimator of $\lambda(u, v)$ can be naturally defined by
\begin{align}\label{eq:lambda_est}
    \hat \lambda_n(u, v) 
    \coloneqq \int_{\mathcal{TS}} \left( \hat P(t, s, u, v)^\top \hat \theta_{n, g\lambda} \right)\mathrm{d}t\mathrm{d}s
    = [P^*_{TS} \otimes \hat P_{UV}(u, v)]^\top \hat \theta_{n, g\lambda},
\end{align}
where $P^*_{TS} \coloneqq \int_{\mathcal{TS}}P_{TS}(t,s)\mathrm{d}t\mathrm{d}s$.

For the estimation of $g(t,s)$, note that simply dividing $\hat P(t, s, u, v)^\top \hat \theta_{n, g\lambda}$ by $\hat \lambda_{n}(u, v)$ results in an inefficient estimator and is not well-defined if $\hat \lambda_{n}(u, v)$ is close to zero.
To obtain a more efficient estimator utilizing the information at all $(u,v)$ values, we apply a least-squares principle to $\hat P(t, s, u, v)^\top \hat \theta_{n, g\lambda} \approx \hat \lambda_n(u, v)  g(t,s)$; that is, we define our estimator $\hat g_n(t,s)$ as the solution of $\min_{g^*} \int_0^1 \int_0^1 \left(\hat P(t,s,u,v)^\top \hat \theta_{n,g\lambda} - \hat \lambda_n(u,v) g^* \right)^2 \mathrm{d}u\mathrm{d}v$.
By simple calculation, we can find that the solution has a closed form expression:
\begin{align}\label{eq:g_est}
    \hat g_n(t,s) 
    \coloneqq \int_0^1\int_0^1 \left( \hat P(t, s, u, v)^\top \hat \theta_{n, g\lambda} \right)\hat \omega_n(u, v)\mathrm{d}u\mathrm{d}v
    = [P_{TS}(t,s) \otimes \hat P^*_{UV}]^\top \hat \theta_{n, g\lambda},
\end{align}
where $\hat P^*_{UV} \coloneqq \int_0^1\int_0^1 \hat P_{UV}(u, v) \hat \omega_n(u, v) \mathrm{d}u \mathrm{d}v$ and $\hat \omega_n(u, v) \coloneqq \hat\lambda_n(u, v)/\int_0^1\int_0^1 \hat \lambda_n(u', v')^2 \mathrm{d}u'\mathrm{d}v'$.

\paragraph{Third step: kernel smoothing}

In the final step, we re-estimate $\beta$ using a local linear kernel regression.
Let $W$ be a symmetric univariate kernel density function and $\{(h_T, h_S)\}$ be the sequence of bandwidths tending to zero as $n$ increases.
The bandwidths may depend on the evaluation point $(t,s,x)$ of the CATR parameter in general (see Remark \ref{rem:opt_band}).
Further, we define $X_i(t,s) \coloneqq (X_i^\top, X_i^\top(T_i - t)/h_T, X_i^\top(S_i - s)/h_S)^\top$, and $W_{i}(t, s) \coloneqq (h_T h_S)^{-1}W((T_i - t) / h_T)W( (S_i - s) /h_S)$ for a given $(t,s) \in \mathcal{TS}$.
Then, our final estimator $\tilde \beta_n(t, s)$ of $\beta(t,s)$ is obtained by
\begin{align}\label{eq:ll_reg}
    \tilde \beta_n(t, s) 
    & \coloneqq \mathbb{S}_{dx} \left[ \sum_{i=1}^n X_i(t,s) X_i(t,s)^\top W_{i}(t,s) \right]^{-1} \sum_{i=1}^n X_i(t,s) [Y_i - \hat g_n(T_i, S_i) \hat \lambda_n(\hat U_i, \hat V_i)] W_{i}(t,s),
\end{align}
where $\mathbb{S}_{dx} \coloneqq [I_{dx}, \: \bm{0}_{dx \times 2dx}]$, and the estimator of $\mathrm{CATR}(t,s,x)$ is given by $\hat{\mathrm{CATR}}_n(t,s,x) \coloneqq x^\top \tilde \beta_n(t, s)$.

\subsection{Asymptotic Properties}

To investigate the asymptotic properties of the proposed estimators, we first introduce assumptions on the dependence structure underlying the data.
Typical applications of treatment spillover models would be data from school classes, households, working places, neighborhoods, municipalities, etc., which are usually subject to some local dependence.
To account for such dependency, we assume that all $N$ agents are located in a (latent) $d$-dimensional space $\mathcal{D} \subseteq \mathbb{R}^d$ for some $d < \infty$.
For spatial data applications, the space $\mathcal{D}$ is typically defined by a geographical space with $d = 2$.
For non-spatial data, it is possible that $\mathcal{D}$ is a space of general social and demographic characteristics (or a mixture of geographic space and such spaces), and in this case we should view it as an embedding of individuals rather than their actual locations.
The set of observation locations, which we denote by $\mathcal{D}_N \subset \mathcal{D}$, may differ across different $N$.
With a little abuse of notation, for each $i$, we use the same $i$ to denote his/her location.
Let $\Delta(i,j)$ denote the Euclidean distance between $i$ and $j$.

\begin{assumption}\label{as:D_space}
    For all $i,j \in \mathcal{D}_N$ such that $i \neq j$, \textbf{(i)} $\Delta(i,j) \ge 1$ (without loss of generality); and
	\textbf{(ii)} there exists a threshold distance $\ol{\Delta} \in \mathbb{Z}$ satisfying $A_{i,j} = 0$ if $\Delta(i,j) > \ol{\Delta}$.
\end{assumption}

Assumption \ref{as:D_space}(i) rules out the infill asymptotics (\citealp{cressie1993statistics}).
Assumption \ref{as:D_space}(ii) is a ``homophily'' assumption in the space $\mathcal{D}$. 
This may not be too restrictive since the definition of $\mathcal{D}$ can be freely modified depending on the context.\footnote{
    However, note that assuming that the space $\mathcal{D}$ is a subset of Euclidean space is also a restriction. 
    For applications with a tree-like network structure (such as supply chain networks), $\mathcal{D}$ would more naturally fit in a hyperbolic space.
    For a comprehensive review of the implications of the choice of geometry in network data analysis, see \cite{smith2019geometry}.
    }
This framework would accommodate many empirically relevant situations.
For example, clustered sample data with finite cluster size can be seen as a special case of it.
An important implication from these two assumptions is that the size of reference group is uniformly bounded above by some constant of order $\ol{\Delta}^d$.
Thus, we do not allow for the existence of ``dominant units,'' which may have increasingly many interacting partners as the network grows.
Assuming $\ol{\Delta}$ to be an integer is only to simplify the proof.

\begin{assumption}\label{as:mixing}
	\textbf{(i)} $\{(X_i, Z_i, \epsilon_i, U_i): i \in \mathcal{D}_N, N \ge 1\}$ is an $\alpha$-mixing random field with mixing coefficient $\alpha(k,l,r) \le (k + l)^\vartheta \hat \alpha(r)$ for some constant $0 \le \vartheta < \infty$ and function $\hat \alpha(r)$ that satisfies $\sum_{r = 1}^\infty (r + 2\ol{\Delta})^{d-1} \hat\alpha(r) < \infty$; and
	\textbf{(ii)} $\sup_{i \in \mathcal{D}_N}||X_i|| < \infty$.
\end{assumption}

\begin{assumption}\label{as:error}
  $\sup_{i \in \mathcal{D}_N}\bbE[ \epsilon_i^4  \mid \mathcal{F}_N]  < \infty$, $\bbE[ \epsilon_i \mid \mathcal{F}_N] = \bbE[ \epsilon_i \mid U_i, V_i]$ for all $i \in \mathcal{D}_N$, and $\epsilon_i \indep \epsilon_j  \mid \mathcal{F}_N$ for all $i,j \in \mathcal{D}_N$ such that $i \neq j$, where $\mathcal{F}_N \coloneqq \{(X_i, Z_i, U_i): i \in \mathcal{D}_N\}$.
\end{assumption}

For the precise definition of the $\alpha$-mixing random field and the mixing coefficient in this context, see Definition \ref{def:mixing} in Appendix \ref{app:note_def} (see also \citealp{jenish2009central}).
Combined with Assumption \ref{as:D_space}(ii), Assumption \ref{as:mixing}(i) implies that $\{S_i\}$, $\{V_i\}$, and $\{\tilde \epsilon_i\}$ are also $\alpha$-mixing processes.
For Assumption \ref{as:error}, the second assumption restricts the dependence structure in the treatment heterogeneity, and the third requires that all potential correlations among $\epsilon_i$'s are through the correlations of the other individual characteristics.
Note that $\tilde \epsilon_i$ can be written as $\tilde \epsilon_i = g(T_i, S_i)[\epsilon_i - \lambda(U_i, V_i)]$.
Thus, with this assumption we impose that the $\tilde \epsilon_i$'s also have the fourth order moments and that they are independent of each other conditional on the individual characteristics.

\begin{assumption}\label{as:basis}
	\textbf{(i)} For all $j$, $(p_{T,j}(t), p_{S,j}(s), p_{U,j}(u), p_{V,j}(v))$ are uniformly bounded on $\mathcal{T}$, $\mathcal{S}$, $[0,1]$, and $[0,1]$, respectively;
    \textbf{(ii)} $P_U(u)$ and $P_V(v)$ are differentiable such that $\sup_{u \in [0,1]}||\partial P_U(u)/\partial u|| \le \zeta_U$ and $\sup_{v \in [0,1]}||\partial P_V(v)/\partial v|| \le \zeta_V$;
    \textbf{(iii)} $||P^*_{TS}|| = O(1)$ and $||\ol{P}^*_{UV}|| = O(1)$, where 
    \begin{align*}
        \ol{P}^*_{UV} \coloneqq \int_0^1 \int_0^1 \ol{P}_{UV}(u, v)\omega(u, v)\mathrm{d}u\mathrm{d}v
        \quad \text{with} \;\;
        \omega(u, v) \coloneqq \frac{\lambda(u, v)}{\int_0^1\int_0^1 \lambda(u', v')^2 \mathrm{d}u'\mathrm{d}v'};
    \end{align*} 
    \textbf{(iv)} there exist positive constants $(c, \xi_1)$ such that $||P_{UV}(u, v) - P_{UV}(u', v')|| \le c K_{UV}^{\xi_1} || (u, v) - (u', v') ||$ for any $(u, v), (u',v') \in [0,1]^2$; and
    \textbf{(v)} there exist positive constants $(c, \xi_2)$ such that $||P_{TS}(t,s) - P_{TS}(t', s')|| \le c K_{TS}^{\xi_2} || (t,s) - (t',s') ||$ for any $(t,s), (t', s') \in \mathcal{TS}$.
\end{assumption}

\begin{assumption}\label{as:eigen}
    Define $\bm{\Pi}_n$ analogously to $\hat{\bm{\Pi}}_n$ by replacing the estimates of $(U_i, V_i, \bbE[ P_{UV}(U, V)])$ with their true values.
    There exist $(c_1, c_2)$ such that $c_1 < \rho_{\min}(\bbE[\bm{\Pi}_n^\top \bm{\Pi}_n/n]) \le \rho_{\max}(\bbE[\bm{\Pi}_n^\top \bm{\Pi}_n/n]) < c_2$ uniformly in $(K_{TS}, K_{UV})$ for sufficiently large $n$.
\end{assumption}

Assumption \ref{as:basis}(i) is introduced for analytical simplicity, which imposes some restrictions on the choice of basis functions.
For example, B-spline basis and Fourier series satisfy this assumption.
The uniform boundedness implies that $\sup_{t \in \mathcal{T}}||P_T(t)|| = O(\sqrt{K_T})$, and the similar result applies to $P_S$, $P_U$, and $P_V$.
Assumptions \ref{as:basis}(iii)--(v) are technical conditions to derive the uniform convergence rate for the second-stage estimators.
Assumption \ref{as:eigen} is a standard non-singularity condition, which essentially serves as the identification condition for the CATR parameter.
Note that this can hold even when the support of the IVs is small.

\begin{assumption}\label{as:series_approximation}
    For all $l = 1, \ldots, dx$, $\beta_l$ is twice continuously differentiable, and there exists a vector $\theta_{\beta_l}^*$ and a positive constant $\mu_\beta$ such that $\sup_{(t, s) \in \mathcal{TS} } |\beta_l(t, s) - P_{TS}(t, s)^\top \theta_{\beta_l}^*| = O(K_{TS}^{-\mu_\beta})$.
    Similarly, $(g, \lambda)$ are continuously differentiable, and there exist vectors $(\theta_g^*, \theta_\lambda^*)$ and positive constants $(\mu_g, \mu_\lambda)$ such that $\sup_{(t, s) \in \mathcal{TS} } | g(t, s)  - P_{TS}(t, s)^\top \theta_g^*| = O(K_{TS}^{-\mu_g})$ and $\sup_{(u,v) \in [0,1]^2} | \lambda(u, v)  - \ol{P}_{UV}(u, v)^\top \theta_\lambda^*| = O(K_{UV}^{-\mu_\lambda})$.
\end{assumption}
The constants $(\mu_\beta, \mu_g, \mu_\lambda)$ generally depends on the choice of the basis function and the dimension and the smoothness of the function to be approximated.
For example, if the target function belongs to a $k$-dimensional H\"older class with smoothness $p$, it typically holds that $\mu = p/k$ for splines, wavelets, etc (\citealp{chen2007large}).
Here, define $\theta^*_{g\lambda} \coloneqq \theta_g^* \otimes \theta_\lambda^*$ so that $\ol{P}(t,s,u,v)^\top \theta^*_{g\lambda} =  (P_{TS}(t, s)^\top \theta_g^*) \cdot (\ol{P}_{UV}(u,v)^\top \theta_\lambda^*)$ holds, where $\ol{P}(t,s,u,v) \coloneqq  P_{TS}(t, s) \otimes \ol{P}_{UV}(u, v)$.
Then, we can easily see that
\begin{align*}
    & \sup_{(t, s, u, v)\in \mathcal{TS} [0,1]^2} | g(t, s)\lambda(u,v) - \ol{P}(t, s, u, v)^\top \theta_{g\lambda}^*| \\
    & = \sup_{(t, s, u, v)\in \mathcal{TS} [0,1]^2} | g(t, s)\lambda(u, v)  - (P_{TS}(t, s)^\top \theta_g^*) \cdot (\ol{P}_{UV}(u,v)^\top \theta_\lambda^*)| = O(K_{TS}^{-\mu_g} + K_{UV}^{-\mu_\lambda}).
\end{align*}

\begin{assumption}\label{as:1st_stage}
    There exists a constant $\nu \in (0,1/2]$ such that $(\sup_{i \in \mathcal{D}_N}|\hat U_i - U_i|,  \sup_{i \in \mathcal{D}_N}|\hat V_i - V_i|) = O_P(n^{-\nu})$.
\end{assumption}

\begin{assumption}\label{as:conv_rate}
    As $n \to \infty$, \textbf{(i)} $\kappa_n^4/n \to 0$ and $ \zeta_{\dagger} \sqrt{\kappa_n} n^{-\nu} \to 0$, where $\zeta_{\dagger} \coloneqq \zeta_U \sqrt{K_V} + \zeta_V \sqrt{K_U}$; and
    \textbf{(ii)} $(\kappa_n^4 \ln \kappa_n) /n \to 0$.
\end{assumption}

For Assumption \ref{as:1st_stage}, if we adopt a parametric model specification for the treatment equation in \eqref{eq:treament_eq}, then the assumption holds with $\nu = 1/2$.
As discussed in Remark \ref{rem:kappa}, we require $\nu$ to be at least larger than $1/3$ under optimal bandwidths.
Assumption \ref{as:conv_rate}(i) is used to establish a matrix law of large numbers.
Recalling that $(K_{TS}, K_{UV}) \asymp \kappa_n$, the first part of the assumption requires that $K_{TS}$ and $K_{UV}$ must grow slower than $n^{1/4}$.
It is easy to see that $\zeta_{\dagger}$ gives the order of $||\partial P_{UV}(u,v)/\partial u||$ and $||\partial P_{UV}(u,v)/\partial v||$.
For example, when one uses a tensor product B-splines, it can be shown that $\zeta_{\dagger} = O(\kappa_n)$ (see, e.g., \citealp{hoshino2021treatment}).
Condition (ii) implies the first part of (i).
This assumption can be relaxed if we can strengthen Assumption \ref{as:error} so that the error terms $\{\tilde \epsilon_i\}$ have the moments of order higher than four. 

Now, in the following theorem, we derive the convergence rate for the first-stage series estimator.
\begin{theorem}\label{thm:1st_stage}
    Suppose that Assumptions \ref{as:D_space}--\ref{as:error}, \ref{as:basis}(i),(ii), \ref{as:eigen}--\ref{as:1st_stage}, and \ref{as:conv_rate}(i) hold.
    Then, we have 
    \begin{description}
    \item[(i)] $|| \hat \theta_{n, \beta} - \theta_\beta^* || = O_P\left(\sqrt{\frac{\kappa_n}{n}} + b_\mu + \tau_n^* + n^{-\nu} \right)$
    \item[(ii)] $|| \hat \theta_{n, g\lambda} - \theta_{g\lambda}^* || = O_P\left(\frac{\kappa_n}{\sqrt{n}} + b_\mu + \tau_n^* + n^{-\nu} \right)$
    \end{description}
    where $b_\mu \coloneqq K_{TS}^{-\mu_\beta} + K_{TS}^{-\mu_g} + K_{UV}^{-\mu_\lambda}$, and $\tau^*_n \coloneqq \tau_n \sqrt{\theta^{*\top} \bm{D} \theta^*}$ with $\theta^* = (\theta_\beta^{*\top}, \theta^{*\top}_{g\lambda})^\top$.
\end{theorem}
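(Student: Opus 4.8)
The plan is to follow the standard bias--variance analysis for penalized series estimators, but carried out under spatial dependence and with the generated regressors $(\hat U_i, \hat V_i)$. Write $\theta^* = (\theta_\beta^{*\top}, \theta_{g\lambda}^{*\top})^\top$ for the best-approximation coefficients of Assumption \ref{as:series_approximation}, and let $\bm{\Pi}_n$ be the infeasible design matrix built from the true $(U_i, V_i)$ and the true centering $\bbE[P_{UV}(U,V)]$, as in Assumption \ref{as:eigen}. Decomposing $Y_i = \Pi_i^\top \theta^* + r_i + \tilde\epsilon_i$ with approximation error $r_i$, collecting $\bm{r} = (r_1,\dots,r_n)^\top$, and writing $\hat{\bm{Q}}_n \coloneqq \hat{\bm{\Pi}}_n^\top \hat{\bm{\Pi}}_n/n + \tau_n \bm{D}$, the estimator \eqref{eq:series_pen} satisfies
\[
\hat\theta_n - \theta^* = \hat{\bm{Q}}_n^{-1}\Big[\tfrac1n \hat{\bm{\Pi}}_n^\top \tilde{\bm{\epsilon}} + \tfrac1n \hat{\bm{\Pi}}_n^\top \bm{r} + \tfrac1n \hat{\bm{\Pi}}_n^\top (\bm{\Pi}_n - \hat{\bm{\Pi}}_n)\theta^* - \tau_n \bm{D}\theta^*\Big].
\]
The four bracketed terms are the variance, series-approximation bias, generated-regressor perturbation, and penalization bias, and the proof reduces to bounding each after controlling $\hat{\bm{Q}}_n^{-1}$.

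The first step is to show $\|\hat{\bm{Q}}_n^{-1}\|_{\mathrm{op}} = O_P(1)$. For the infeasible Gram matrix I would establish a matrix law of large numbers, $\|\bm{\Pi}_n^\top\bm{\Pi}_n/n - \bbE[\bm{\Pi}_n^\top\bm{\Pi}_n/n]\| = o_P(1)$, by bounding the expected squared Frobenius norm of the deviation; since the matrix is of size $O(\kappa_n^2)\times O(\kappa_n^2)$ with each entry having variance $O(1/n)$ under the $\alpha$-mixing and bounded-reference-group structure of Assumptions \ref{as:D_space} and \ref{as:mixing}, this is $O(\kappa_n^4/n)$, which vanishes by Assumption \ref{as:conv_rate}(i). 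With the eigenvalue bounds of Assumption \ref{as:eigen}, the eigenvalues of $\bm{\Pi}_n^\top\bm{\Pi}_n/n$ are then bounded away from $0$ and $\infty$ w.p.a.1. I would transfer this to $\hat{\bm{Q}}_n$ by controlling the perturbation from replacing $(U_i,V_i)$ with $(\hat U_i,\hat V_i)$ and from the estimated centering: using the derivative bounds of Assumption \ref{as:basis}(ii) one obtains a $\sqrt{\kappa_n}\,\zeta_{\dagger}\, n^{-\nu}$-type control on $\sup_i\|\hat\Pi_i-\Pi_i\|$, and the requirement $\zeta_{\dagger}\sqrt{\kappa_n}\,n^{-\nu}\to 0$ in Assumption \ref{as:conv_rate}(i) (with Assumption \ref{as:1st_stage}) makes the Gram perturbation $o_P(1)$, so that $\rho_{\min}(\hat{\bm{Q}}_n)$ stays bounded below while $\tau_n\bm{D}\succeq 0$ only helps.

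Given the conditioning, the terms are bounded in turn. For the variance term, conditional independence and the mean-zero property of $\tilde\epsilon_i$ (Assumption \ref{as:error}) eliminate the cross products, so $\bbE\|\tfrac1n\bm{\Pi}_n^\top\tilde{\bm{\epsilon}}\|^2$ collapses to $\tfrac1{n^2}\sum_i \bbE[\|\Pi_i\|^2\tilde\epsilon_i^2]$; since $\|P_X\|^2 = O(\kappa_n)$ whereas $\|P_{TS}\otimes\ol{P}_{UV}\|^2 = O(\kappa_n^2)$, the $\beta$-block contributes $O_P(\sqrt{\kappa_n/n})$ and the $g\lambda$-block $O_P(\kappa_n/\sqrt n)$, which are exactly the two rates claimed. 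The penalization term is handled via $\rho_{\max}(\bm{D}) = O(1)$ and $\rho_{\min}(\hat{\bm{Q}}_n)\ge c_1$, giving $\|\hat{\bm{Q}}_n^{-1}\tau_n\bm{D}\theta^*\|\lesssim \tau_n\sqrt{\rho_{\max}(\bm{D})}\sqrt{\theta^{*\top}\bm{D}\theta^*} = O(\tau_n^*)$, and the approximation-bias term is $O_P(b_\mu)$ by Assumption \ref{as:series_approximation} together with $\|\hat{\bm{\Pi}}_n\|_{\mathrm{op}}=O_P(\sqrt n)$. The generated-regressor term produces the clean $O_P(n^{-\nu})$: the point is that $(\Pi_i-\hat\Pi_i)^\top\theta^*$ contracts the high-dimensional basis difference against the \emph{smooth} target coefficients, so by the tensor structure it behaves like $g(T_i,S_i)\,[\lambda(U_i,V_i)-\lambda(\hat U_i,\hat V_i)]$ plus a centering error, whose relevant Lipschitz constant is $O(1)$ rather than $\zeta_{\dagger}$, delivering $n^{-\nu}$ with no extra $\kappa_n$ factor.

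The main obstacle is obtaining the two distinct rates: a naive $\|\hat{\bm{Q}}_n^{-1}\|_{\mathrm{op}}=O_P(1)$ bound yields only the slower $\kappa_n/\sqrt n$ for both blocks, because inverting the joint Gram matrix mixes the $\beta$- and $g\lambda$-coordinates. To recover the sharper $\sqrt{\kappa_n/n}$ rate for $\hat\theta_{n,\beta}$ I would use a partitioned-inverse (Frisch--Waugh) representation, writing $\hat{\bm{Q}}_n = \begin{pmatrix} A & B \\ B^\top & \bm{C}\end{pmatrix}$ in $\beta$/$g\lambda$ blocks and setting $e_{g\lambda}\coloneqq \tfrac1n\hat{\bm{\Pi}}_{g\lambda}^\top\tilde{\bm{\epsilon}}$. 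The $\beta$-block of $\hat\theta_n-\theta^*$ is driven by $P_X$ residualized against $\hat P$, and the cross-contribution $\|B\bm{C}^{-1}e_{g\lambda}\|$ can be bounded in mean square by $\tfrac{c}{n}\,\mathrm{tr}(B\bm{C}^{-1}B^\top)$; since the Schur-complement ordering $B\bm{C}^{-1}B^\top \preceq A$ forces every eigenvalue of $B\bm{C}^{-1}B^\top$ to be $O(1)$ over an $O(\kappa_n)$-dimensional block, the trace is $O(\kappa_n)$ and the cross-term is $O_P(\sqrt{\kappa_n/n})$, matching the $\beta$-rate. The remaining bookkeeping --- propagating the $o_P(1)$ Gram perturbation through all four terms and checking that the generated-regressor and centering errors do not inflate the leading variance order --- is routine given Assumption \ref{as:conv_rate}(i), but it is precisely where the mixing and derivative-bound assumptions are consumed.
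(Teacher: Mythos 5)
Your proposal is correct: the decomposition of $\hat\theta_n-\theta^*$ into variance, series-approximation bias, generated-regressor perturbation, and penalization bias is exactly the paper's decomposition (your $(\bm{\Pi}_n-\hat{\bm{\Pi}}_n)\theta^* + \bm{r}$ equals the paper's $\hat{\bm{B}}_n + \bm{G}_n\circ(\bm{L}_n-\hat{\bm{L}}_n)$), the Gram-matrix law of large numbers with the $\zeta_\dagger\sqrt{K_{TS}}\,n^{-\nu}$ perturbation control is the paper's Lemma \ref{lem:matrix_LLN}, the penalty bound via $\rho_{\max}(\bm{D})=O(1)$ and $\theta^{*\top}\bm{D}\theta^*$ reproduces \eqref{eq:tau_bias}, and your key observation that the generated-regressor term contracts the basis difference against the \emph{smooth} $\lambda$ (Lipschitz constant $O(1)$, not $\zeta_\dagger$, plus a centering error) is precisely how the paper handles $\bm{G}_n\circ(\bm{L}_n-\hat{\bm{L}}_n)$ and the centering term in \eqref{eq:P_diff_theta}. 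The one place where you take a genuinely different route is the step you correctly identify as the crux: extracting the faster $\sqrt{\kappa_n/n}$ rate for the $\beta$-block of the variance term. You do this by a Frisch--Waugh partitioned inverse of $\hat{\bm{Q}}_n$, bounding the cross-contribution through the Schur-complement ordering $B\bm{C}^{-1}B^\top \preceq A$ so that $\mathrm{tr}(B\bm{C}^{-1}B^\top) \le \mathrm{tr}(A) = O_P(\kappa_n)$, and the own-contribution through $\rho_{\min}(A - B\bm{C}^{-1}B^\top) \ge \rho_{\min}(\hat{\bm{Q}}_n)$. The paper instead gets the same bound in a single line (display \eqref{eq:markov1}) with the selection matrix $\mathbb{S}_\beta$: conditionally on $\mathcal{F}_N$, the variance is $\mathrm{tr}\{\mathbb{S}_\beta\hat{\bm{Q}}_n^{-1}(\hat{\bm{\Pi}}_n^\top\hat{\bm{\Pi}}_n/n)\hat{\bm{Q}}_n^{-1}\mathbb{S}_\beta^\top\}/n \le O_P(1)\,\mathrm{tr}\{\mathbb{S}_\beta\mathbb{S}_\beta^\top\}/n = O_P(K_{TS}/n)$, using $\hat{\bm{\Pi}}_n^\top\hat{\bm{\Pi}}_n \preceq \hat{\bm{\Pi}}_n^\top\hat{\bm{\Pi}}_n + \tau_n\bm{D}n$. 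Both devices exploit the same fact --- the trace of a selected diagonal block of the sandwiched matrix scales with that block's dimension times a uniform $O_P(1/n)$ eigenvalue bound --- so they are interchangeable; the paper's version is shorter and avoids inverting the lower-right block, while yours makes the own/cross split explicit and would generalize more transparently to settings where one wants the exact form of the $\beta$-block asymptotic variance rather than just its order.
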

The above results should be standard in the literature.
In particular, result (i) indicates that we can estimate $\mathrm{CATR}(t,s,x)$ consistently by $x^\top\hat \theta_{n, \beta}$, although less efficiently when compared with our final estimator.
The magnitude of $\tau^*_n$ depends not only on the penalty matrix $\bm{D}$ but also on the choice of the basis function.
Under $\rho_{\max}(\bm{D}) = O(1)$, it is clear that the upper bound for $\tau^*_n$ is $O(\tau_n\kappa_n)$.
If the coefficients are decaying in the order of series, $\tau^*_n = O(\tau_n)$ would hold, as in \cite{han2020nonparametric}.

For later use, we introduce the following miscellaneous assumptions.

\begin{assumption}\label{as:misc}
    \textbf{(i)} $\inf_{(u,v) \in [0,1]^2} f_{UV}(u,v) > 0$, where $f_{UV}$ is the joint density of $(U,V)$;
    \textbf{(ii)} there exists $c$ such that $\rho_{\max}(\bbE[\ol{P}_{UV}(U,V)\ol{P}_{UV}(U,V)^\top]) < c$ uniformly in $(K_U, K_V)$; and
    \textbf{(iii)} $\zeta_{\dagger}\kappa_n n^{-\nu} = O(1)$ and $\kappa_n^{3/2}(b_\mu + \tau^*_n + n^{-\nu}) = O(1)$.
\end{assumption}

In the next theorem, we establish the convergence rates for the second-stage estimators.
\begin{theorem}\label{thm:2nd_stage}
    Suppose that Assumptions \ref{as:D_space}--\ref{as:conv_rate} hold.
    Then, we have 
    \begin{description}
    \item[(i)] $\sup_{(u,v) \in [0,1]^2}|\hat \lambda_n(u,v) - \lambda(u, v)| = O_P\left(\sqrt{\frac{\kappa_n \ln \kappa_n}{n}} + \sqrt{\kappa_n}(b_\mu + \tau^*_n + n^{-\nu})\right)$.
    \item[(ii)] If Assumption \ref{as:misc} additionally holds, we have
    
    $\sup_{(t,s) \in \mathcal{TS}}|\hat g_n(t,s) - g(t,s)| = O_P\left(\sqrt{\frac{\kappa_n \ln \kappa_n}{n}} + \sqrt{\kappa_n}(b_\mu + \tau^*_n + n^{-\nu})\right)$.
    \end{description}
\end{theorem}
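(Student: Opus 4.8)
The plan is to treat both $\hat\lambda_n$ and $\hat g_n$ as contractions of the single first-step coefficient vector $\hat\theta_{n,g\lambda}$ against a tensor-product vector, one factor of which is held fixed. The key observation driving the rate is that contracting a fixed bounded vector into one of the two tensor directions collapses the effective number of series terms from $K_{TSUV}\asymp\kappa_n^2$ down to $\kappa_n$, so that $\hat\lambda_n$ and $\hat g_n$ inherit the convergence rate of an ordinary $\kappa_n$-dimensional series estimator, namely $\sqrt{\kappa_n\ln\kappa_n/n}$, rather than the slower $\kappa_n^{3/2}/\sqrt n$ that a direct application of Theorem~\ref{thm:1st_stage}(ii) would yield.

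For part (i), I would decompose $\hat\lambda_n(u,v)-\lambda(u,v)$, with $\hat\lambda_n$ as in \eqref{eq:lambda_est}, into three pieces. The bias piece $[P^*_{TS}\otimes\ol P_{UV}(u,v)]^\top\theta^*_{g\lambda}-\lambda(u,v)$ factorizes, via $\theta^*_{g\lambda}=\theta^*_g\otimes\theta^*_\lambda$, into $(P^{*\top}_{TS}\theta^*_g)(\ol P_{UV}(u,v)^\top\theta^*_\lambda)$; the scale normalization \eqref{eq:scale_normal} with Assumption~\ref{as:series_approximation} makes the first factor $1+O(K_{TS}^{-\mu_g})$ and the second $\lambda(u,v)+O(K_{UV}^{-\mu_\lambda})$, so this piece is $O(b_\mu)$ uniformly. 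The centering piece $[P^*_{TS}\otimes(\hat P_{UV}(u,v)-\ol P_{UV}(u,v))]^\top\theta^*_{g\lambda}$ reduces, after factoring out $\theta^*_\lambda$, to $\frac1n\sum_i P_{UV}(\hat U_i,\hat V_i)-\bbE[P_{UV}(U,V)]$ contracted with $\theta^*_\lambda$, which I would bound by a law of large numbers for the $\alpha$-mixing field (Assumption~\ref{as:mixing}) together with a mean-value expansion in $(\hat U_i,\hat V_i)$ controlled by Assumption~\ref{as:1st_stage} and $\zeta_\dagger$; Assumptions~\ref{as:conv_rate} and~\ref{as:misc}(iii) render it lower order. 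The remaining piece $[P^*_{TS}\otimes\hat P_{UV}(u,v)]^\top(\hat\theta_{n,g\lambda}-\theta^*_{g\lambda})$ carries the leading term.

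The crux is this last piece, and here I would re-open the error decomposition of $\hat\theta_{n,g\lambda}-\theta^*_{g\lambda}$ from the proof of Theorem~\ref{thm:1st_stage} into a deterministic (bias, penalty, plug-in) part and a stochastic (noise) part, rather than using its norm rate as a black box. The deterministic part, handled by Cauchy--Schwarz with $\|P^*_{TS}\otimes\hat P_{UV}(u,v)\|=O(\sqrt{\kappa_n})$, produces exactly $\sqrt{\kappa_n}(b_\mu+\tau^*_n+n^{-\nu})$. For the stochastic part, whose leading term is $(\bm{\Pi}_n^\top\bm{\Pi}_n/n+\tau_n\bm{D})^{-1}\frac1n\sum_i\Pi_i\tilde\epsilon_i$, I would replace $\hat P_{UV}$ by $\ol P_{UV}$ up to negligible cross terms and compute the conditional variance of the resulting scalar contracted with $P^*_{TS}\otimes\ol P_{UV}(u,v)$; using $\epsilon_i\indep\epsilon_j\mid\mathcal{F}_N$ and bounded fourth moments (Assumption~\ref{as:error}), the bounded eigenvalues of the Gram matrix (Assumption~\ref{as:eigen}), and $\|P^*_{TS}\|=O(1)$ with $\|\ol P_{UV}(u,v)\|^2=O(\kappa_n)$, this variance is $O(\kappa_n/n)$ pointwise -- it is precisely the fixed factor $P^*_{TS}$ that removes one power of $\kappa_n$. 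Upgrading to a uniform bound over $(u,v)\in[0,1]^2$ via a covering argument with the Lipschitz bound in Assumption~\ref{as:basis}(iv) and a maximal inequality for the mixing field introduces the $\sqrt{\ln\kappa_n}$ factor, giving $O_P(\sqrt{\kappa_n\ln\kappa_n/n})$; summing the three pieces yields (i).

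Part (ii) follows the same template for $\hat g_n(t,s)=[P_{TS}(t,s)\otimes\hat P^*_{UV}]^\top\hat\theta_{n,g\lambda}$ in \eqref{eq:g_est}, now with the fixed bounded factor $\ol P^*_{UV}$ (norm $O(1)$ by Assumption~\ref{as:basis}(iii)) in the $UV$-directions and the evaluated $P_{TS}(t,s)$ in the $TS$-directions, so the refined variance is again $O(\kappa_n/n)$ and uniformity over $\mathcal{TS}$ uses Assumption~\ref{as:basis}(v); the bias piece contracts to $(P_{TS}(t,s)^\top\theta^*_g)(\ol P^{*\top}_{UV}\theta^*_\lambda)=g(t,s)\cdot 1+O(b_\mu)$ because $\int_0^1\int_0^1\lambda\,\omega=1$. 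The step specific to (ii), and where Assumption~\ref{as:misc} enters, is replacing the data-dependent integration vector $\hat P^*_{UV}$ -- built from the estimated weight $\hat\omega_n=\hat\lambda_n/\int_0^1\int_0^1\hat\lambda_n^2$ -- by its population analogue $\ol P^*_{UV}$; I would show $\|\hat P^*_{UV}-\ol P^*_{UV}\|$ is negligible by combining the uniform consistency of $\hat\lambda_n$ from part (i) (to control $\hat\omega_n-\omega$, with $\int_0^1\int_0^1\lambda^2>0$ guaranteed by Assumption~\ref{as:misc}(i)) with the centering bound above, Assumption~\ref{as:misc}(ii)--(iii) supplying the moment and rate conditions. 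I expect the main obstacle throughout to be the refined, dimension-reduced uniform stochastic bound: it forces one to open up the first-step estimator, isolate the noise term, exploit the product structure of the Gram matrix so the fixed contraction factor drops a factor of $\kappa_n$ from the variance, and then pass from pointwise to uniform control under spatial dependence.
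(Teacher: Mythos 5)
Your proposal is correct and follows essentially the same route as the paper: the paper likewise splits $\hat\theta_{n,g\lambda}-\theta^*_{g\lambda}$ into its deterministic (bias, penalty, plug-in) and noise components, absorbs the fixed contraction vector ($P^*_{TS}$ for $\hat\lambda_n$, $\ol{P}^*_{UV}$ for $\hat g_n$) into a block matrix $\Gamma_{TS}$ (resp.\ $\Gamma_{UV}$) with $\rho_{\max}(\Gamma^\top\Gamma)=O(1)$ by Assumption \ref{as:basis}(iii) --- which is exactly your dimension-reduction observation that the pointwise variance drops to $O(\kappa_n/n)$ --- and then discretizes $[0,1]^2$ (resp.\ $\mathcal{TS}$) with mesh $O(K_{UV}^{-\xi_1})$ (resp.\ $O(K_{TS}^{-\xi_2})$) exactly as you propose; part (ii) also handles $\hat P^*_{UV}-\ol P^*_{UV}$ through the uniform and integrated consistency of $\hat\lambda_n$ with Assumption \ref{as:misc}, as in your plan.

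The one point where your account diverges, and where you should be careful, is the device that turns the pointwise variance bound into the uniform $\sqrt{\kappa_n\ln\kappa_n/n}$ rate. You invoke ``a maximal inequality for the mixing field,'' but exponential-type maximal inequalities for mixing processes require mixing rates well beyond the summability conditions in Assumption \ref{as:mixing}(i), so that route would not close under the stated assumptions. The paper instead conditions on $\mathcal{F}_N$, under which the $\tilde\epsilon_i$ are \emph{independent} by Assumption \ref{as:error}, truncates $\tilde\epsilon_i$ at a level $M_n$, applies Bernstein's inequality to the truncated part (the union bound over the polynomially many grid points then costs only $\sqrt{\ln\kappa_n}$), and controls the tail part by Markov's inequality with the conditional fourth moments; the two requirements on $M_n$ are reconciled precisely by Assumption \ref{as:conv_rate}(ii). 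Since you already used the conditional independence for the variance computation, the fix is simply to run the exponential inequality conditionally as well, adding the truncation step --- without it Bernstein does not apply, because $\tilde\epsilon_i$ is only assumed to have four moments.
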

The above theorem clearly shows that the marginal integration procedure successfully resolves the slower convergence of the first-stage estimator.
Note however that both estimators $\hat \lambda_n$ and $\hat g_n$ do not attain the optimal uniform convergence rate of \cite{stone1982optimal}.\footnote{
    Whether our marginal integration-based estimation method can achieve the optimal rate is left as an open question.
    For example, for $\hat \lambda_n$, in order to achieve the optimal uniform convergence rate, the bias term should be of order just $O(K_{UV}^{-p/2})$, where $p$ is a smoothness parameter (see the discussion given after Assumption \ref{as:series_approximation}).
    For the attainability of the optimal uniform rate for standard linear series regression estimators, see \cite{huang2003local}, \cite{belloni2015some}, and \cite{chen2015optimal}, among others.
}

We next derive the limiting distribution of our final estimator for $\mathrm{CATR}(t,s,x)$, where $(t,s)$ is a given interior point of $\mathcal{TS}$.
Let $f_i(t,s)$ be the joint density of $(T_i, S_i)$, and define
\begin{align*}
    & \Omega_{1,i}(t,s) \coloneqq \bbE[X_i X_i^\top \mid T_i = t , S_i = s],
    \quad \ol{\Omega}_{1,n}(t,s) \coloneqq \frac{1}{n}\sum_{i = 1}^n\Omega_{1,i}(t,s)f_i(t,s) \\
    & \Omega_{2,i}(t,s) \coloneqq \bbE[X_i X_i^\top \tilde \epsilon_i^2 \mid T_i = t , S_i = s],
    \quad \ol{\Omega}_{2,n}(t,s) \coloneqq \frac{1}{n}\sum_{i = 1}^n\Omega_{2,i}(t,s)f_i(t,s). 
\end{align*}
\begin{assumption}\label{as:omega_f}
    \textbf{(i)} For all $i \in \mathcal{D}_N$, $f_i(t, s)$ and $\Omega_{1,i}(t, s)$ are continuously differentiable and $\Omega_{2,i}(t,s)$ is continuous on $\mathcal{TS}$; and
    \textbf{(ii)} $\ol{\Omega}_{1}(t,s) \coloneqq \lim_{n \to \infty} \ol{\Omega}_{1,n}(t,s)$ and $\ol{\Omega}_{2}(t,s) \coloneqq \lim_{n \to \infty} \ol{\Omega}_{2,n}(t,s)$ exist and are positive definite.
\end{assumption}
\begin{assumption}\label{as:kernel}
    The kernel $W$ is a probability density function that is symmetric and continuous on the support $[-c_W, c_W]$.
\end{assumption}

\begin{assumption}\label{as:band}
    \textbf{(i)} $(h_T, h_S) \asymp n^{-1/6}$; and
    \textbf{(ii)} $\zeta_\dagger n^{(1/6)-\nu} \to 0$ and $n^{1/3}(\tau^*_n + b_\mu + n^{-\nu}) \to 0$ as $n \to \infty$.
\end{assumption}

Assumption \ref{as:kernel} is fairly standard.
The compact support assumption is used just for simplicity, and it can be dropped at the cost of lengthier proof.
For Assumption \ref{as:band}, it will be later shown that condition (i) is the optimal rate for the bandwidths.
We use condition (ii) to ensure that the final CATR estimator becomes oracle efficient.

Let $\mathbf{1}_i(t,s) \coloneqq \mathbf{1}\left\{\frac{|T_i - t|}{h_T} \le c_W , \frac{|S_i - s|}{h_S} \le c_W \right\}$, $\bm{I}_n(t,s) \coloneqq \text{diag}\left(\frac{\mathbf{1}_1(t,s)}{h_T h_S}, \ldots, \frac{\mathbf{1}_n(t,s)}{h_T h_S}\right)$, $\bm{P}_{n,TS} \coloneqq (P_{TS}(T_1, S_1), \ldots, P_{TS}(T_n, S_n))^\top$, and $\bm{P}_{n,UV} \coloneqq (\ol{P}_{UV}( U_1, V_1), \ldots, \ol{P}_{UV}(U_n, V_n))^\top$.

\begin{assumption}\label{as:eigen2}
    There exist constants $(c_{TS}, c_{UV})$ such that $\rho_\text{max}(\bbE[ \bm{P}_{n,TS}^\top \bm{I}_n(t,s) \bm{P}_{n,TS}/n]) < c_{TS}$ and $\rho_\text{max}(\bbE[ \bm{P}_{n,UV}^\top \bm{I}_n(t,s) \bm{P}_{n,UV}/n]) < c_{UV}$ uniformly in $(K_{TS}, K_{UV}, h_T, h_S)$ for sufficiently large $n$.
\end{assumption}

\begin{assumption}\label{as:ij}
    \textbf{(i)} $\sup_{(i,j) \in \mathcal{D}_N}\sup_{(t_i,s_i,t_j,s_j)\in(\mathcal{TS})^2}f_{i,j}(t_i,s_i,t_j,s_j) < \infty$, where $f_{i,j}(t_i,s_i,t_j,s_j)$ is the joint density of $(T_i, S_i, T_j, S_j)$; and
    \textbf{(ii)} $\sup_{(i,j) \in \mathcal{D}_N}\bbE[|\tilde \epsilon_i \tilde \epsilon_j| \mid T_i, S_i, T_j, S_j] < \infty$.
\end{assumption}

\begin{assumption}\label{as:alpha}
    \textbf{(i)} $\sum_{r = 1}^\infty r^{e + d - 1} \hat \alpha(r)^{1/2} < \infty$ for some $e > d/2$;
    \textbf{(ii)} $\alpha(C \ol{\Delta}^d, \infty, r - 2\ol{\Delta}) = O(r^{-d'})$ for some $d' > d$ and a positive constant $C$ (see \eqref{eq:mixing2}); and
    \textbf{(iii)} $\sum_{r = 1}^\infty (r + 2\ol{\Delta})^{[d \ell/(\ell - 2)] - 1} \hat\alpha(r) < \infty$ for some $3 \le \ell < 4$.
\end{assumption}

Assumptions \ref{as:eigen2}, \ref{as:ij}, and \ref{as:alpha} are technical requirements to derive the asymptotic normality of our CATR estimator.
In particular, Assumption \ref{as:alpha}(ii) and (iii) are introduced to utilize the central limit theorem for mixing random fields developed by \cite{jenish2009central} in our context.
Now we are ready to state our main theorem:

\begin{theorem}\label{thm:catr}
    Suppose that Assumptions \ref{as:D_space}--\ref{as:alpha} hold.
    Then, for a given interior point $(t,s) \in \mathcal{TS}$ and a finite $x$ in the support of $X$, we have
    \begin{align*}
        & \sqrt{n h_T h_S}\left(\hat{\mathrm{CATR}}_n(t,s,x) - \mathrm{CATR}(t,s,x) - \frac{\varphi_2^1}{2} [ x^\top \ddot \beta_{TT}(t,s) h_T^2 + x^\top \ddot \beta_{SS}(t,s) h_S^2]\right) \\
        & \quad \overset{d}{\to} N\left(\mathbf{0}_{dx \times 1}, (\varphi_0^2)^2 x^\top[\ol{\Omega}_1(t,s)]^{-1} \ol{\Omega}_2(t,s) [\ol{\Omega}_1(t,s)]^{-1}x \right). 
    \end{align*}
    where $\varphi_j^k \coloneqq \int \phi^j W(\phi)^k \mathrm{d} \phi$, $\ddot \beta_{TT}(t,s) \coloneqq \partial^2 \beta (t,s)/ (\partial t)^2$, and $\ddot \beta_{SS}(t,s) \coloneqq \partial^2 \beta (t,s)/ (\partial s)^2$.
\end{theorem}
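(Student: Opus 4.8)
The plan is to reduce the feasible estimator $\hat{\mathrm{CATR}}_n(t,s,x)=x^\top\tilde\beta_n(t,s)$ to an infeasible \emph{oracle} local linear estimator that uses the true $g$, $\lambda$, and $(U_i,V_i)$, and then to show that plugging in the first- and second-step estimates is asymptotically negligible. Writing $\hat G_i\coloneqq \hat g_n(T_i,S_i)\hat\lambda_n(\hat U_i,\hat V_i)$ and $G_i\coloneqq g(T_i,S_i)\lambda(U_i,V_i)$, the model \eqref{eq:regression_model} gives the exact decomposition $Y_i-\hat G_i = X_i^\top\beta(T_i,S_i)+\tilde\epsilon_i+\Delta_i$ with $\Delta_i\coloneqq G_i-\hat G_i$. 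Substituting this into \eqref{eq:ll_reg} splits $\tilde\beta_n(t,s)-\beta(t,s)$ into a smoothing bias from the deterministic signal $X_i^\top\beta(T_i,S_i)$, a stochastic term driven by $\tilde\epsilon_i$, and a nuisance term driven by $\Delta_i$. The first two pieces constitute the oracle estimator; the bulk of the work is to show that the nuisance term, after scaling by $\sqrt{nh_Th_S}=n^{1/3}$ (under Assumption \ref{as:band}(i)), is $o_P(1)$. Note that the denominator matrix in \eqref{eq:ll_reg} depends only on $(X_i,T_i,S_i)$, so no generated-regressor issue arises there.

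For the oracle part I would first apply a law of large numbers for $\alpha$-mixing random fields to $\tfrac1n\sum_i X_i(t,s)X_i(t,s)^\top W_i(t,s)$; using $\int W=1$, the symmetry of $W$ (Assumption \ref{as:kernel}), and Assumption \ref{as:omega_f}, this converges to the block-diagonal limit $\mathrm{diag}(\ol\Omega_1(t,s),\varphi_2^1\ol\Omega_1(t,s),\varphi_2^1\ol\Omega_1(t,s))$, so that the selector $\mathbb{S}_{dx}$ extracts $\ol\Omega_1(t,s)^{-1}$ and only the top block matters. For the bias I would Taylor-expand $\beta(T_i,S_i)$ about $(t,s)$ to second order: the local-linear design absorbs the constant and linear terms exactly, the cross-derivative term integrates to zero by symmetry, and the surviving quadratic terms yield the stated $\tfrac{\varphi_2^1}{2}[\ddot\beta_{TT}(t,s)h_T^2+\ddot\beta_{SS}(t,s)h_S^2]$ after premultiplication by $\ol\Omega_1(t,s)^{-1}$, with third-order remainders of scaled order $n^{1/3}h^3=O(n^{-1/6})=o(1)$. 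For the variance I would analyze $\tfrac{1}{\sqrt{nh_Th_S}}\sum_i X_i\tilde\epsilon_i W(e_i)W(f_i)$ (with $e_i=(T_i-t)/h_T$, $f_i=(S_i-s)/h_S$): since $\bbE[\tilde\epsilon_i\mid\mathcal{F}_N]=g(T_i,S_i)(\bbE[\epsilon_i\mid\mathcal{F}_N]-\lambda(U_i,V_i))=0$ and $\tilde\epsilon_i\indep\tilde\epsilon_j\mid\mathcal{F}_N$ by Assumption \ref{as:error}, all off-diagonal covariances vanish and only the diagonal survives, converging to $(\varphi_0^2)^2\ol\Omega_2(t,s)$. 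Asymptotic normality of this term then follows from the central limit theorem of \cite{jenish2009central} for mixing random fields, whose moment and mixing-rate hypotheses are supplied by Assumptions \ref{as:ij} and \ref{as:alpha}; Slutsky's theorem and the sandwich $(\varphi_0^2)^2\ol\Omega_1^{-1}\ol\Omega_2\ol\Omega_1^{-1}$ then deliver the stated limit.

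The main obstacle is the oracle property, i.e., controlling $n^{1/3}\cdot\tfrac1n\sum_i X_i(t,s)\Delta_i W_i(t,s)$. I would split $\Delta_i=[g(T_i,S_i)-\hat g_n(T_i,S_i)]\lambda(U_i,V_i)+\hat g_n(T_i,S_i)[\lambda(U_i,V_i)-\hat\lambda_n(\hat U_i,\hat V_i)]$ and treat the argument error through $\hat\lambda_n(\hat U_i,\hat V_i)-\hat\lambda_n(U_i,V_i)$, which is bounded via the basis-derivative bound $\zeta_\dagger$ (Assumption \ref{as:basis}(ii)) together with the first-stage rate $(\sup_i|\hat U_i-U_i|,\sup_i|\hat V_i-V_i|)=O_P(n^{-\nu})$ from Assumption \ref{as:1st_stage}, matched against $\zeta_\dagger n^{(1/6)-\nu}\to0$ in Assumption \ref{as:band}(ii). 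The delicate point is that a crude $L^\infty$ bound using the uniform rates of Theorem \ref{thm:2nd_stage} carries a spurious $\sqrt{\kappa_n}$ factor that is not summable against $n^{1/3}$ when combined with the bias terms; instead I would exploit that $\hat g_n-g$ and $\hat\lambda_n-\lambda$ are linear in the coefficient error $\hat\theta_{n,g\lambda}-\theta^*_{g\lambda}$ from Theorem \ref{thm:1st_stage}, entering through the integrated vectors $P^*_{TS}$ and $\hat P^*_{UV}$ of $O(1)$ norm (Assumption \ref{as:basis}(iii)), so that each contribution becomes a projected average whose bracketed matrix is $O_P(1)$ and whose size is therefore governed by the coefficient rate plus $b_\mu$, $\tau^*_n$, and $n^{-\nu}$ without the extra $\sqrt{\kappa_n}$. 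The conditions $n^{1/3}(\tau^*_n+b_\mu+n^{-\nu})\to0$ in Assumption \ref{as:band}(ii), together with $\kappa_n^4/n\to0$ (Assumption \ref{as:conv_rate}) and the boundedness conditions in Assumption \ref{as:misc}(iii), are calibrated so that every such piece is $o_P(n^{-1/3})$, completing the reduction. I expect this last matching --- keeping the localized projected averages at $O_P(1)$ uniformly over the four-dimensional first-step error as the bandwidth shrinks --- to be the most demanding part of the argument.
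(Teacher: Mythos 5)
Your proposal follows essentially the same route as the paper's own proof: the identical three-way split of $\tilde\beta_n(t,s)-\beta(t,s)$ into a Taylor smoothing bias, a nuisance term carrying $g(T_i,S_i)\lambda(U_i,V_i)-\hat g_n(T_i,S_i)\hat\lambda_n(\hat U_i,\hat V_i)$, and a $\tilde\epsilon_i$-driven noise term (the paper's $\bm{H}_{1,n},\bm{H}_{2,n},\bm{H}_{3,n}$); the same kernel law of large numbers for the denominator (Lemma \ref{lem:kernel_LLN}) and bias expansion (Lemma \ref{lem:bias}); the CLT of \cite{jenish2009central} for the noise term (Lemma \ref{lem:beta_normal}); and, for the oracle property, the same key device of keeping $\hat\lambda_n-\lambda$ and $\hat g_n-g$ in linear form in the coefficient error and applying Cauchy--Schwarz against localized Gram matrices with $O_P(1)$ eigenvalues (Lemma \ref{lem:matrix_LLN2} and Assumption \ref{as:eigen2}), which is exactly how the paper avoids the spurious $\sqrt{\kappa_n}$ you flag. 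One point where you are in fact cleaner than the paper: since $X_i(t,s)$, $W_i(t,s)$, $g(T_i,S_i)$ and $\lambda(U_i,V_i)$ are all $\mathcal{F}_N$-measurable, Assumption \ref{as:error} makes the off-diagonal covariances of the noise term exactly zero, whereas the paper bounds them via Assumption \ref{as:ij} and Davydov's inequality.

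However, one sub-step as you state it does not go through. You route the generated-regressor error through the \emph{estimated} function, $\hat\lambda_n(\hat U_i,\hat V_i)-\hat\lambda_n(U_i,V_i)$, bound it by $O_P(\zeta_\dagger n^{-\nu})$ via Assumptions \ref{as:basis}(ii) and \ref{as:1st_stage}, and match it against $\zeta_\dagger n^{(1/6)-\nu}\to0$. But the nuisance term must be $o_P((nh_Th_S)^{-1/2})=o_P(n^{-1/3})$, so your bound requires $n^{1/3}\zeta_\dagger n^{-\nu}\to0$, i.e. $\zeta_\dagger n^{(1/3)-\nu}\to0$, which is stronger by a factor $n^{1/6}$ than what Assumption \ref{as:band}(ii) supplies. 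The paper uses $\zeta_\dagger n^{(1/6)-\nu}\to0$ for a different purpose: in Lemma \ref{lem:matrix_LLN2}(ii) the first-step error enters the localized Gram matrix \emph{squared} against the factor $(h_Th_S)^{-1}\asymp n^{1/3}$, producing exactly $(\zeta_\dagger n^{(1/6)-\nu})^2$. The fix is to avoid differentiating the estimated function: decompose the error as $[\hat\lambda_n-\lambda](\hat U_i,\hat V_i)+[\lambda(\hat U_i,\hat V_i)-\lambda(U_i,V_i)]$ as in Lemma \ref{lem:gl_diff}, so the argument error passes through the true $\lambda$, whose smoothness gives $O_P(n^{-\nu})$ with no $\zeta_\dagger$ factor, and $n^{1/3}n^{-\nu}\to0$ is precisely part of Assumption \ref{as:band}(ii). (Alternatively, within your split, expand $\hat\theta_{n,g\lambda}=\theta^*_{g\lambda}+(\hat\theta_{n,g\lambda}-\theta^*_{g\lambda})$: the $\theta^*_{g\lambda}$ part reduces to $\lambda(\hat U_i,\hat V_i)-\lambda(U_i,V_i)+O(K_{UV}^{-\mu_\lambda})$, and the remainder is $O_P(\zeta_\dagger n^{-\nu})\cdot\|\hat\theta_{n,g\lambda}-\theta^*_{g\lambda}\|$, which is small enough under the stated rate conditions.) With this correction your argument coincides with the paper's.
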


The proof of Theorem \ref{thm:catr} is straightforward from Lemma \ref{lem:beta_normal}, and thus is omitted.
In Lemma \ref{lem:gl_diff}, we show that the estimation error for our CATR estimator caused by the estimations of $g$ and $\lambda$ are of order $o_P((n h_T h_S)^{-1/2})$.
Thus, the asymptotic distribution presented in the theorem is in fact equivalent to that obtained when the estimators $\hat g_n$ and $\hat \lambda_n$ are replaced by their true counterparts; that is, our CATR estimator has an oracle property.

\begin{remark}[Covariance matrix estimation]
    For statistical inference, we need to consistently estimate the asymptotic covariance matrix.
    The matrix $\ol{\Omega}_1(t,s)$ can be easily estimated by the kernel method -- see Lemma \ref{lem:kernel_LLN}.
    Similarly, we can estimate $(\varphi_0^2)^2 \ol{\Omega}_2(t,s)$ using the kernel method with the error terms $\{\tilde \epsilon_i\}$ being replaced by the residuals.
    That is, letting $\hat \epsilon_i(t,s) \coloneqq Y_i - X_i^\top \tilde \beta_n(t, s) - \hat g_n (T_i,S_i) \hat \lambda_n(\hat U_i, \hat V_i)$ for $i = 1, \ldots, n$, we can show that $\hat \Omega_{2,n}(t,s) \coloneqq (h_T h_S /n)\sum_{i=1}^n X_i X_i^\top \hat \epsilon_i(t,s)^2 W_{i}(t,s)^2$ is consistent for $(\varphi_0^2)^2\ol{\Omega}_2(t,s)$ with an additional mild assumption -- see Lemma \ref{lem:cov_estimation}.
\end{remark}

\begin{remark}[Bandwidth selection]\label{rem:opt_band}
    As a result of Theorem \ref{thm:catr}, the asymptotic mean squared error (AMSE) of $\hat{\mathrm{CATR}}_n(t,s,x)$ is given by
    \begin{align*}
        \mathrm{AMSE}(t,s,x)
        & = \frac{(\varphi_2^1)^2}{4}\left[ x^\top \ddot \beta_{TT}(t,s) h_T^2 + x^\top \ddot \beta_{SS}(t,s) h_S^2\right]^2 + \frac{(\varphi_0^2)^2 x^\top[\ol{\Omega}_1(t,s)]^{-1} \ol{\Omega}_2(t,s) [\ol{\Omega}_1(t,s)]^{-1}x}{n h_T h_S}.
    \end{align*}
    From this, we can derive the optimal bandwidth parameters that minimize the AMSE.
    In particular, suppose that the bandwidths are given by $h_T = C(t,s,x) \sigma_n(T) n^{-1/6}$ and $h_S = C(t,s,x) \sigma_n(S) n^{-1/6}$ for some constant $C(t,s,x) > 0$, where $\sigma_n(T)$ and $\sigma_n(S)$ are the sample standard deviations of $T$ and $S$, respectively.
    Then, we can obtain $C(t,s,x)$ that minimizes the AMSE as follows:
    \begin{align}\label{eq:optimal_band}
        C(t,s,x) = \left( \frac{2 (\varphi_0^2)^2 x^\top[\ol{\Omega}_1(t,s)]^{-1} \ol{\Omega}_2(t,s) [\ol{\Omega}_1(t,s)]^{-1}x }{(\varphi_2^1)^2 \sigma_n(T) \sigma_n(S) [x^\top \ddot \beta_{TT}(t,s) \sigma^2_n(T) + x^\top \ddot \beta_{SS}(t,s) \sigma^2_n(S)]^2} \right)^{1/6}
    \end{align}
    Although the optimal bandwidths involve several unknown quantities, their approximate values are obtainable using the first-stage series estimates (with or without regularization).
\end{remark}

\begin{remark}\label{rem:kappa}
    The functional form of $\pi(X_i, Z_i)$ determines the possible range for $\nu$.
    In view of Assumption \ref{as:band}(ii), $\nu$ must satisfy $1/3 < \nu$.
    In addition, assuming $\zeta_{\dagger} = O(\kappa_n)$, the second part of Assumption \ref{as:conv_rate}(i) is reduced to $\kappa_n^{3/2} n^{-\nu} \to 0$.
    Thus, for example when $\kappa_n = O(n^{1/5})$ (in view of the first part of Assumption \ref{as:conv_rate}(i)), the condition is further reduced to $3/10 < \nu$, being consistent with Assumption \ref{as:band}(ii).
    These imply that the treatment model does not have to be fully parametrically specified in general (i.e., $\nu = 1/2$), but at the same time a full nonparametric specification may not be acceptable.
\end{remark}

\section{Causal Impacts of Unemployment on Crime}\label{sec:empiric}

It is often considered that unemployment and crime are endogenously related because of their simultaneity (e.g., \citealp{levitt2001alternative}).
In economic theory, criminal activity is typically characterized by the balance between the cost and benefit of committing illegal activities (\citealp{becker1968crime}).
Thus, poor local labor market conditions result in a relative increase in the benefits of crime, increasing the number of crime incidents.
On the other hand, crime drives away business owners and customers, exacerbating the working condition.
In addition, from the viewpoint of criminals, if their communities are economically deteriorating with high unemployment, they might commit crimes in more ``beneficial'' neighborhoods.
This would suggest the need to account for the spatial spillover effects of unemployment.

There are several prior studies that investigate the relationship between unemployment and crime using some IV-based methods (e.g., \citealp{raphael2001identifying, lin2008does, altindag2012crime}).
For example, using U.S. state-level data, \cite{raphael2001identifying} conducted two-stage least squares regression analysis with military spending and oil costs as the IVs for the unemployment.
Their estimation results suggest that unemployment is indeed an important determinant of property crime rates.
In this paper, we present an empirical analysis on the causal effect of local unemployment rate on crime based on Japanese city-level data.
Our empirical study aims at extending the earlier works in two ways.
Our model is based on a more flexible potential outcome framework and allows for the existence of spillover effects from neighboring regions.

The variables used and their definitions are summarized in Table \ref{table:vars}.
The outcome variable of interest is the city-level crime rate, which is based on the number of crimes recorded in each city in 2006.
The crime data cover all kinds of criminal offenses (the breakdown is unfortunately unavailable).
The treatment variable is the regional unemployment rate as of 2005.
As an IV for the unemployment rate, we employ the availability of child day-care facilities in 2006.
It would be legitimate to assume that the availability of childcare facilities does not directly affect the crime rate.
Meanwhile, in the labor economics literature, there is empirical evidence that expanding childcare services is effective in increasing female labor participation (e.g., \citealp{bauernschuster2015public}). 
Thus, the availability of childcare facilities can be viewed as an indicator of the local working environment, particularly for young females, and would contribute to reducing the total unemployment rate.
For other control variables, we include population density, annual retail sales per capita, the ratio of elderly people, and the ratio of single households.
The sales data are as of 2006, and the others are those in 2005.\footnote{
    All this information is freely available from \textbf{e-Stat} (a portal site for Japanese Government Statistics). 
    \url{https://www.e-stat.go.jp/en/regional-statistics/ssdsview/municipality}
    }

\begin{table}[!h]
    \begin{center}
    \caption{Variables used}
    \begin{tabular}{lll}
    \hline\hline
     & \multicolumn{1}{c}{Variables$^*$} & (Shorthand) \\
     \hline
    $Y$ & Crime rate: $100 \times \text{\# crimes / population}$ & \textit{crime} \\
    $T$ & Unemployment rate: $100 \times \text{unemployment / labor force population}$ & \textit{unempl} \\
    $X$ & Population density: $\ln[\text{daytime population / area}]$ & \textit{density} \\
     & Per-capita retail sales: $\ln[\text{retail sales / population}]$ & \textit{sales} \\
     & Elderly ratio: $\text{population over 65 / whole population}$ & \textit{elderly} \\
     & Ratio of single households: $\text{\# single households / \# total households}$ & \textit{single} \\
    $Z$ & Availability of childcare facilities: $1000 \times \text{\# childcare facilities / \# total households}$ & \textit{childcare} \\
     \hline
     & * The variables are all at city level. & \\
    \end{tabular}
    \label{table:vars}
    \end{center}
\end{table}

The network $\bm{A}_N$ is defined by whether the cities share a common boundary.
Specifically, assuming that there is a limit on the number of cities each city can interact with, we set $A_{i,j} = 1$ if city $j$ is adjacent to $i$ and in the $k$-nearest neighbors of $i$.
Below, we report the results when $k = 2$ for illustration.
We also have tried several different specifications for $\bm{A}_N$, and confirmed that the results are overall similar (for more details, see Appendix \ref{app:empir_robust}). 
For all cases, the treatment spillover variable $S_i$ is defined by $i$'s reference-group mean, say $\ol{\textit{unempl}}$.
Then, the model estimated is as follows:
\begin{align*}
    \textit{crime} 
    & = (1, \textit{density}, \textit{sales}, \textit{elderly}, \textit{single})^\top \beta(\textit{unempl}, \ol{\textit{unempl}}) + g(\textit{unempl}, \ol{\textit{unempl}})\epsilon \\
    \textit{unempl}
    & = (\textit{density}, \textit{sales}, \textit{elderly}, \textit{single}, \textit{childcare})^\top \gamma + \eta(U)
\end{align*}
Note that since the treatment equation does not involve any network interactions explicitly, the CQR estimation can be implemented using all data regardless of the specification of $\bm{A}_N$.
For the estimation of the CATR parameter, we need to select a subsample to maintain distributional homogeneity of $(U,V)$.
To this end we focus on the cities that have exactly two interacting partners (recalling that $k = 2$).
After excluding observations with missing data, the estimation of CATR was performed on a sample of size 1773.
The estimation procedure is the same as that in the Monte Carlo experiments in Appendix \ref{sec:MC}.
That is, we set the penalty parameter to $\tau_n = 5/n$ and use the B-spline basis with two internal knots.\footnote{
    We have confirmed that the results reported here have a certain robustness to other choices of penalty parameters and basis orders.
	However, we have also observed that when the number of internal knots is three or higher, it is better to employ some regularized estimator when computing the second derivatives appearing in \eqref{eq:optimal_band} to stabilize the estimates.
}
The descriptive statistics of the data are summarized in Table \ref{table:descstat} in Appendix \ref{app:empir_table}.

The estimation results for the treatment equation are provided in Table \ref{table:cqr} in Appendix \ref{app:empir_table}, where we can find that \textit{childcare} is significantly negatively related to the \textit{unempl} variable, as expected.
Figure \ref{fig:CATR} presents the estimated $\mathrm{CATR}(t,s,x)$ for different values of $(t,s,x)$.
In each panel, $t$ ranges over 0.1 to 0.9 empirical quantiles of $\{T_i\}$, and $s$ is either at 0.2 or 0.6 quantile of $\{S_i\}$.
Because of the correlation between $T$ and $S$, the CATR estimates at more extreme $t$ cannot be estimate reliably and thus they are not reported (see Figure \ref{fig:joint_density} in Appendix \ref{app:empir_table} for the joint density of $(T,S)$).
For the value of $x$, we evaluate at the empirical median in the left panel.
We consider two more cases for $x$: the median value for the cities in the bottom 20\% level of population density among all cities (middle panel), and that for the cities in the top 20\% level (right panel).
The former would be considered as a typical city in a rural area, while the latter as a typical city in an urban area.
In the figure, we also report the results obtained when $T$ and $S$ are treated as exogenous (this corresponds to $\hat{\mathrm{CATR}}_n^\text{ex}$ estimator given in Appendix \ref{sec:MC}).

Now, we report our main empirical findings.
First, we can observe that the CATR weakly increases in general as the unemployment rate increases.
That is, as the number of unemployed people increases, the crime rate tends to increase, which is consistent with the findings in the prior studies.
Second, in the left and right panels, the CATR with large $S$ tends to be significantly greater than the CATR with small $S$, indicating that the average unemployment rate of surrounding cities does affect the city's own crime rate in such cases.
However, interestingly, the spillovers are no longer prominent when the city is in a rural area.
One interpretation is that the cities in such areas are relatively independent from other cities and villages, and thus the spillover effects may be less impactful.
A more comprehensive figure that summarizes the CATR estimates when $S$ is at 0.2, 0.3, \ldots, 0.8 quantiles is presented in Figure \ref{fig:catr_all} in Appendix \ref{app:empir_robust}, where we can more clearly observe these tendencies. 
Lastly, we find that ignoring the potential endogeneity of $(\textit{unempl}, \ol{\textit{unempl}})$ leads to certain differences in the estimates, indicating the importance of accounting for the endogeneity.

\begin{figure}[!ht]
	\centering
	\includegraphics[bb = 0 0 1304 362, width = 17cm]{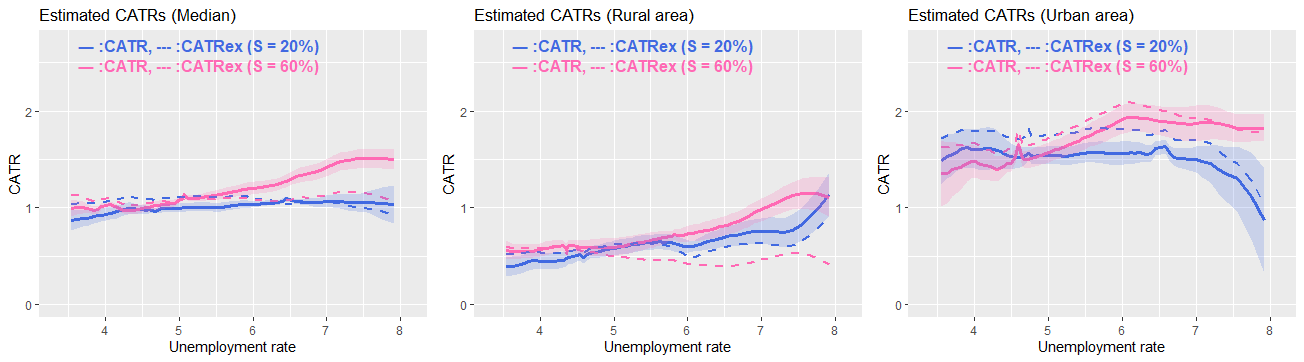}
	\caption{Estimated CATR.}
	\label{fig:CATR}
\end{figure}

\section{Conclusion}\label{sec:conclusion}

In this paper, we considered a continuous treatment effect model that admits potential treatment spillovers through social networks and the endogeneity for both one's own treatment and that of others.
We proved that the CATR, the conditional expectation of the potential outcome, can be nonparametrically identified under some functional form restrictions and the availability of appropriate IVs.
We also considered a more empirically tractable semiparametric treatment model and proposed a three-step procedure for estimating the CATR.
The consistency and asymptotic normality of the proposed estimator were established under certain regularity conditions.
As an empirical illustration, using Japanese city-level data, we investigated the causal effects of unemployment rate on the number of crime incidents.
As a result, we found that the unemployment rate indeed tends to increase the number of crimes and that, interestingly, the unemployment rates of neighboring cities matter more for non-rural cities than rural cities.
These results illustrate the usefulness of the proposed method.

\clearpage
\bibliography{references.bib}

\clearpage

\setcounter{page}{1}
\appendix

{\fontsize{10.5pt}{12pt}\selectfont

\begin{center}
	{\Large Supplementary Appendix for ``Estimating a Continuous Treatment Model with Spillovers: A Control Function Approach''} 
\end{center}

\begin{flushleft}
    {\large Tadao Hoshino}
    \bigskip
    
    School of Political Science and Economics, Waseda University.
    
    Email: \href{mailto:thoshino@waseda.jp}{thoshino@waseda.jp}
\end{flushleft}

\section{Appendix: Notations and Definitions}\label{app:note_def}

To facilitate the proofs of the theorems, we introduce the following notation, some of which have already been introduced in the main text.
\begin{itemize}
    \item $\bm{Y}_n = \mathbf{X}_n\bm{\beta}_n + \bm{G}_n \circ \bm{L}_n + \tilde{\mathcal{E}}_n$, where $\bm{Y}_n = (Y_1, \ldots, Y_n)^\top$, $\mathbf{X}_n\bm{\beta}_n = (X_1^\top \beta(T_1, S_1), \ldots, X_n^\top \beta(T_n, S_n))^\top$, $\bm{G}_n = (g(T_1, S_1), \ldots, g(T_n, S_n))^\top$, $\bm{L}_n = (\lambda(U_1, V_1), \ldots, \lambda(U_n, V_n))^\top$, and $\tilde{\mathcal{E}}_n = (\tilde \epsilon_1, \ldots , \tilde \epsilon_n)^\top$.
    \item Basis terms:
    \begin{align*}
       \begin{array}{lll}
        \multicolumn{2}{l}{\ol{P}_{UV}(u,v) = P_{UV}(u,v) - \bbE [P_{UV}(U,V)]}&\\
        \multicolumn{2}{l}{\hat{P}_{UV}(u,v) = P_{UV}(u,v) - \frac{1}{n}\sum_{i=1}^n P_{UV}(\hat U_i, \hat V_i)}&\\ 
        &&\\
        \Pi_{1,i} = P_X(X_i, T_i, S_i), & \Pi_{2,i} = \ol{P}(T_i, S_i, U_i, V_i), & \hat \Pi_{2,i} = \hat P(T_i, S_i, \hat U_i, \hat V_i) \\ 
       \Pi_i = (\Pi_{1,i}^\top,  \Pi_{2,i}^\top)^\top, & \hat \Pi_i = (\Pi_{1,i}^\top, \hat \Pi_{2,i}^\top)^\top & \\
        \bm{\Pi}_{1, n} = (\Pi_{1,1}, \ldots, \Pi_{1,n})^\top, & \bm{\Pi}_{2, n} = (\Pi_{2,1}, \ldots, \Pi_{2,n})^\top, & \hat{\bm{\Pi}}_{2, n} = (\hat \Pi_{2, 1}, \ldots, \hat \Pi_{2, n})^\top \\
        \bm{\Pi}_n = (\Pi_1, \ldots, \Pi_n)^\top, & \hat{\bm{\Pi}}_n = (\hat \Pi_1, \ldots, \hat \Pi_n)^\top & \\
        &&\\
        P^*_{TS} = \int_{\mathcal{TS}} P_{TS}(t,s)\mathrm{d}t\mathrm{d}s, & \ol{P}^*_{UV} = \int_0^1 \int_0^1 \ol{P}_{UV}(u,v) \omega(u,v) \mathrm{d}u\mathrm{d}v, & \hat{P}^*_{UV} = \int_0^1 \int_0^1 \hat{P}_{UV}(u,v) \hat \omega_n(u,v) \mathrm{d}u\mathrm{d}v
         \end{array}      
    \end{align*}
    \item The pseudo true coefficient vector: $\theta^* = (\theta_\beta^{*\top}, \theta^{*\top}_{g\lambda})^\top$ with $\theta_{g\lambda}^* = \theta_g^* \otimes \theta^*_\lambda$.
    \item An infeasible series coefficient estimator: $\tilde \theta_n = (\tilde \theta_{n,\beta}^\top,  \tilde \theta_{n,g\lambda}^\top)^\top \coloneqq \left[\bm{\Pi}_n^\top \bm{\Pi}_n + \tau_n\bm{D}n\right]^{-1}\bm{\Pi}_n^\top \bm{Y}_n$
    \item Series approximation bias: $b_i \coloneqq X_i^\top \beta(T_i, S_i) + g(T_i, S_i) \lambda(U_i, V_i) - \Pi_i^\top\theta^*$, and $\bm{B}_n = (b_1, \ldots, b_n)^\top$.
    \item Derivatives of $\beta(\cdot, \cdot)$:
    \begin{align*}
    \begin{array}{ll}
        \dot \beta_{l,T}(t,s) \coloneqq \partial \beta_l(t,s) / \partial t, & \dot \beta_T(t,s) = (\dot \beta_{1,T}(t,s), \ldots, \dot \beta_{dx,T}(t,s))^\top \\
        \dot \beta_{l,S}(t,s) \coloneqq \partial \beta_l(t,s) / \partial s, & \dot \beta_S(t,s) = (\dot \beta_{1,S}(t,s), \ldots, \dot \beta_{dx,S}(t,s))^\top \\
        \ddot \beta_{l,TT}(t,s) \coloneqq \partial^2 \beta_l(t,s) / (\partial t)^2, & \ddot \beta_{TT}(t,s) = (\ddot \beta_{1,TT}(t,s), \ldots, \ddot \beta_{dx,TT}(t,s))^\top \\
        \ddot \beta_{l,TS}(t,s) \coloneqq \partial^2 \beta_l(t,s) / (\partial t \partial s), & \ddot \beta_{TS}(t,s) = (\ddot \beta_{1,TS}(t,s), \ldots, \ddot \beta_{dx,TS}(t,s))^\top \\
        \ddot \beta_{l,SS}(t,s) \coloneqq \partial^2 \beta_l(t,s) / (\partial s)^2, & \ddot \beta_{SS}(t,s) = (\ddot \beta_{1,SS}(t,s), \ldots, \ddot \beta_{dx,SS}(t,s))^\top
    \end{array}      
    \end{align*}
    \item $\delta (t,s) \coloneqq (\beta(t,s)^\top, h_T \dot \beta_T(t,s)^\top, h_S \dot \beta_S(t,s)^\top)^\top$
    \item $\gamma_i(t,s) \coloneqq \beta(T_i, S_i) - \beta(t,s) - \dot \beta_T(t,s) \cdot (T_i - t) - \dot \beta_S(t,s) \cdot (S_i - s)$
\end{itemize}

\begin{definition}\label{def:mixing}
    Let $\mathcal{A}$ and $\mathcal{B}$ be two sub $\sigma$-algebras.
    The $\alpha$-mixing coefficient between  $\mathcal{A}$ and $\mathcal{B}$ is defined as
    \begin{align*}
        \alpha(\mathcal{A}, \mathcal{B}) \coloneqq \sup_{A \in \mathcal{A}, \: B \in \mathcal{B}}\left| \Pr(A \cap B) - \Pr(A)\Pr(B)\right|.
    \end{align*}
    For a random field $\{X_i : i \in \mathcal{D}_N, \: N \ge 1\}$ and a set $\mathcal{K} \subseteq \mathcal{D}_N$, let $\sigma_N(X, \mathcal{K})$ denote the $\sigma$-algebra generated by $\{X_i: i \in \mathcal{K}\}$.
    The $\alpha$-mixing coefficient of $\{X_i\}$ is defined as
    \begin{align*}
        \alpha(k,l,r) \coloneqq \sup_N \sup_{\mathcal{K}, \mathcal{L} \subseteq \mathcal{D}_N} \{\alpha(\sigma_N(X, \mathcal{K}), \sigma_N(X, \mathcal{L})): |\mathcal{K}| \le k, |\mathcal{L}| \le l, \Delta(\mathcal{K}, \mathcal{L}) \ge r \},
    \end{align*}
    where $\Delta(\mathcal{K}, \mathcal{L}) \coloneqq \inf\{\Delta(i,j): i \in \mathcal{K}, j \in \mathcal{L}\}$.
    The random field $\{X_i\}$ is said to be $\alpha$-mixing if $\alpha(k,l,r) \to 0$ as $r \to \infty$ for any $k, l \in \mathbb{N}$.
\end{definition}

Since $T_i$ is a measurable function of $(X_i, Z_i, U_i)$, $\{T_i\}$ is $\alpha$-mixing with the mixing coefficient $\alpha(k,l,r)$ by Assumption \ref{as:mixing}(i).
Recalling the definition $S_i = m_i( \sum_{j \in \mathcal{P}_i} a_{i,j} T_j)$ and $a_{i,j}$ is non-zero only if $j \in \mathcal{P}_i$, Assumption \ref{as:D_space}(ii) implies that
\begin{align*}
    \sigma_N(S, \mathcal{K}) \subseteq \sigma_N(T, \{j \in \ol{\mathcal{P}}_i, i \in \mathcal{K}\}) \subseteq \sigma_N(T, \{j : \Delta(i,j) \le \ol{\Delta}, i \in \mathcal{K}\}).
\end{align*} 
It follows from Lemma A.1(ii) of \cite{jenish2009central} that $|\{j : \Delta(i,j) \le \ol{\Delta}, i \in \mathcal{K}\}| = |\mathcal{K}| O(\ol{\Delta}^d)$.
Thus, we can find that $\{S_i\}$ is $\alpha$-mixing with the mixing coefficient $\alpha^\dagger(k,l,r)$:
\begin{align}\label{eq:mixing2}
        \alpha^\dagger(k,l,r) =
        \begin{cases}
        1 & \text{for} \;\; r \le 2 \ol{\Delta} \\
        \alpha(C k \ol{\Delta}^d,  C l \ol{\Delta}^d, r - 2 \ol{\Delta}) & \text{for} \;\; r > 2 \ol{\Delta}.
        \end{cases}
\end{align}
for some positive constant $C > 0$.
From the same discussion as above, $\{V_i\}$ and $\{\tilde \epsilon_i\}$ are also $\alpha$-mixing with the mixing coefficient $\alpha^\dagger(k,l,r)$.


\section{Appendix: Proofs}\label{app:proofs}

\begin{lemma}\label{lem:UandV}
    Under Assumption \ref{as:exclusion}(ii), $\pi(X_i,Z_i)$ and $\eta(u)$ can be identified up to a location normalization for any $u \in (0,1)$.
\end{lemma}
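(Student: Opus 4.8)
The plan is to exploit the additive separability of the treatment equation \eqref{eq:treament_eq} together with the monotonicity and uniformity in Assumption \ref{as:exclusion}(ii) so as to recast the identification of $\pi(X_i,Z_i)+\eta(u)$ as reading off a conditional quantile of $T_i$, and then to argue that the additive constant relating $\pi$ and $\eta$ is the sole remaining indeterminacy. First I would fix $u\in(0,1)$ and condition on $(Z_i,\bm X_{\ol{\mathcal P}_i})$. Since, given these variables, the only randomness left in $T_i=\pi(X_i,Z_i)+\eta(U_i)$ enters through $U_i$, the strict monotonicity of $\eta$ lets me rewrite $\{T_i\le \pi(X_i,Z_i)+\eta(u)\}$ as $\{\eta(U_i)\le\eta(u)\}=\{U_i\le u\}$. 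Taking conditional probabilities, using the exogeneity of Assumption \ref{as:exclusion}(i) to drop $Z_i$ from the conditioning set of $U_i$, and then the conditional uniformity of Assumption \ref{as:exclusion}(ii), I obtain
\begin{align*}
\Pr\!\left(T_i\le \pi(X_i,Z_i)+\eta(u)\mid Z_i,\bm X_{\ol{\mathcal P}_i}\right)=\Pr(U_i\le u\mid \bm X_{\ol{\mathcal P}_i})=u .
\end{align*}
Because $U_i$ is continuous and $\eta$ is continuous and strictly increasing, the conditional law of $T_i$ is strictly increasing on its support, so the conditional $u$-quantile is unique; hence the sum $\pi(X_i,Z_i)+\eta(u)$ is identified as this quantile for every $u\in(0,1)$ and every point of the support.

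It then remains to separate $\pi$ from $\eta$. Only their sum has been recovered, and for any constant $c$ the pair $(\pi+c,\eta-c)$ reproduces the same conditional quantiles while leaving $\eta-c$ continuous and strictly increasing, which is exactly the location indeterminacy stated in the lemma. I would eliminate it by a single normalization, say $\eta(u_0)=0$ at a fixed $u_0$ (e.g.\ $u_0=1/2$). Writing $Q_T(\cdot\mid\cdot)$ for the identified conditional quantile, this gives $\pi(X_i,Z_i)=Q_T(u_0\mid Z_i,\bm X_{\ol{\mathcal P}_i})$ and $\eta(u)=Q_T(u\mid Z_i,\bm X_{\ol{\mathcal P}_i})-Q_T(u_0\mid Z_i,\bm X_{\ol{\mathcal P}_i})$; in the latter $\pi$ cancels, so the right-hand side is invariant to the conditioning value and indeed defines $\eta$ as a function of $u$ alone.

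The quantile manipulations are routine; the steps that need care are the conditioning and the separability. Dropping $Z_i$ from $\Pr(U_i\le u\mid Z_i,\bm X_{\ol{\mathcal P}_i})$ uses the conditional independence in Assumption \ref{as:exclusion}(i), so the argument relies on (i) in addition to the monotonicity and uniformity highlighted in the statement. Moreover, I would verify that the identified quantile depends on $(X_i,Z_i)$ only and not on the peers' covariates $\bm X_{\mathcal P_i}$, so that it is consistent with $\pi$ being a function of $(X_i,Z_i)$; this is precisely the consequence of the additive separability between $(X_i,Z_i)$ and $U_i$ built into \eqref{eq:treament_eq}, and it is also what fails once that separability is dropped.
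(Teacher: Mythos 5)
Your proof is correct and follows essentially the same route as the paper's: both identify $\pi(X_i,Z_i)+\eta(u)$ as the conditional $u$-quantile of $T_i$ by rewriting $\{T_i \le \pi(X_i,Z_i)+\eta(u)\}$ as $\{U_i \le u\}$ via the strict monotonicity of $\eta$ and the conditional uniformity of $U_i$, and then eliminate the remaining additive indeterminacy by a normalization such as $\eta(0.5)=0$, recovering $\pi$ as the normalized quantile and $\eta(u)$ as a quantile difference. The only difference is that you condition on $(Z_i,\bm{X}_{\ol{\mathcal{P}}_i})$ and therefore also invoke Assumption \ref{as:exclusion}(i) to drop $Z_i$ from the conditioning set of $U_i$, whereas the paper's proof conditions on $\bm{X}_{\ol{\mathcal{P}}_i}$ alone; your version is arguably the more careful rendering of the same idea, since solving $\Pr(T_i \le a \mid Z_i, \bm{X}_{\ol{\mathcal{P}}_i})=u$ pointwise in $(X_i,Z_i)$ is what pins down $\pi(X_i,Z_i)+\eta(u)$ as a function of $(X_i,Z_i)$, exactly as in the paper's own footnote and its CQR preliminary step.
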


\begin{proof}
    Without loss of generality, suppose that $\eta(0.5) = 0$.
    For any $u \in (0,1)$, by Assumption \ref{as:exclusion}, we have
    \begin{align*}
        \Pr(T_i \le \pi(X_i,Z_i) + \eta(u) \mid \bm{X}_{\ol{\mathcal{P}}_i}) 
        & = \Pr(\eta(U_i) \le \eta(u) \mid \bm{X}_{\ol{\mathcal{P}}_i}) \\
        & = \Pr(U_i \le u \mid \bm{X}_{\ol{\mathcal{P}}_i}) = u.    
    \end{align*}
    Thus, by solving the moment equation $\Pr(T_i \le a \mid \bm{X}_{\ol{\mathcal{P}}_i}) = u$ with respect to $a$, we can identify $\pi(X_i,Z_i) + \eta(u)$.
    In particular, when $u = 0.5$, $\pi(X_i,Z_i)$ is identified, which automatically implies the identification of $\eta(u)$ for other $u$'s.
\end{proof}

\begin{flushleft}
    \textbf{Proof of Proposition \ref{prop:S_control}}
\end{flushleft}

We first show the uniqueness of $V_i$.
Define $\tilde T_i(v) \coloneqq \pi(X_i,Z_i) + \eta(v)$ and $\tilde S_i(v) \coloneqq m_i(\sum_{j \in \mathcal{P}_i} a_{i,j} \tilde T_j(v))$.
By assumption, $\tilde T_i(v)$ is strictly increasing.
Hence, $\tilde S_i(v)$ is also strictly increasing since $m_i$ is strictly increasing and $a_{i,j}$'s are all positive.
This implies that $S_i \in [\tilde S_i(0), \tilde S_i(1)]$.
In addition, by the intermediate value theorem, there exists $v^* \in [0,1]$ such that $\tilde S_i(v^*) = S_i$ holds, and such $v^*$ is unique.
Specifically, recalling that $\sum_{j \in \mathcal{P}_i} a_{i,j} = 1$, the $\tilde S^{-1}_i$ function can be implicitly written as
\begin{align*}
    \tilde S^{-1}_i(s) = \Biggl\{v \in [0,1] \; \Bigl| \; m_i^{-1}(s) - \sum_{j \in \mathcal{P}_i} a_{i,j} \pi(X_j, Z_j) = \eta(v) \Biggr\}.
\end{align*}
In particular, we can see that $V_i  = \tilde S^{-1}_i(S_i)$ holds.
As such, the value of $V_i$ is uniquely determined for each $S_i$ as a function of $\bm{U}_{\mathcal{P}_i}$.

This $V_i$ is a valid control variable to control for the endogeneity between $\epsilon_i$ and $S_i$:
\begin{align*}
    \bbE[\epsilon_i \mid S_i = s, V_i = v, \bm{X}_{\ol{\mathcal{P}}_i} = \bm{x}] 
    & = \bbE\left[ \epsilon_i \: \biggl| \: \sum_{j \in \mathcal{P}_i}a_{i,j} \pi(x_j, Z_j) = m_i^{-1}(s) - \eta(v), V_i = v , \bm{X}_{\ol{\mathcal{P}}_i} = \bm{x} \right] \\ 
    & = \bbE[ \epsilon_i \mid V_i = v, \bm{X}_{\ol{\mathcal{P}}_i} = \bm{x}].
\end{align*}
This completes the proof. \qed

\begin{flushleft}
    \textbf{Proof of Theorem \ref{thm:atr_ident}}
\end{flushleft}

By Lemma \ref{lem:UandV}, we can identify $\pi(X_i, Z_i)$ and $\eta(u)$ for any $u \in (0,1)$.
As discussed in the main manuscript, this leads to the identification of $U_i$ and $V_i$.
Then, by Proposition \ref{prop:S_control}, we have \eqref{eq:mtr_ident}, which implies
\begin{align*}
    \mathrm{CATR}(t, s, x) 
    & = \int_{\mathrm{supp}( U_i, V_i, \bm{X}_{\mathcal{P}_i} \mid X_i = x)} \mathrm{MTR}(t, s, u, v, \bm{x}_{-i}, x) f_{U_i V_i \bm{X}_{\mathcal{P}_i}|X_i}(u,v,\bm{x}_{-i} \mid X_i = x)\mathrm{d}u\mathrm{d}v\mathrm{d}\bm{x}_{-i},
\end{align*}
where $f_{U_i V_i \bm{X}_{\mathcal{P}_i}|X_i}$ is the conditional density function of $(U_i, V_i, \bm{X}_{\mathcal{P}_i})$ given $X_i$.
In order for the above integral to be well-defined, $\mathrm{MTR}(t, s, u, v, \bm{x}_{-i}, x)$ must be well-defined for all $(u,v,\bm{x}_{-i}) \in \mathrm{supp}( U_i, V_i, \bm{X}_{\mathcal{P}_i} \mid X_i = x)$.
This eventually requires that conditioning on $\{T_i = t, S_i = s\}$ does not limit the conditional support $\mathrm{supp}( U_i, V_i, \bm{X}_{\mathcal{P}_i} \mid X_i = x)$, i.e., Assumption \ref{as:full_support}.
\qed

\begin{lemma}\label{lem:matrix_LLN}
    Suppose that Assumptions \ref{as:D_space}, \ref{as:mixing},  \ref{as:basis}(i),(ii), \ref{as:eigen}, \ref{as:1st_stage}, and \ref{as:conv_rate}(i) hold. 
    Then, 
\begin{description}
    \item[(i)] $\left\| \bm{\Pi}_n^\top \bm{\Pi}_n/n - \bbE[\bm{\Pi}_n^\top \bm{\Pi}_n/n] \right\| = O_P\left(K_{TSUV} / \sqrt{n}\right) = o_P(1)$; 
    \item[(ii)] $\left\| \hat{\bm{\Pi}}_n^\top \hat{\bm{\Pi}}_n/n - \bbE[\bm{\Pi}_n^\top \bm{\Pi}_n/n]  \right\| = O_P\left(K_{TSUV} / \sqrt{n}\right) + O_P\left(\zeta_{\dagger} \sqrt{K_{TS}}  n^{-\nu}\right) = o_P(1)$; 
    \item[(iii)] $\left\| [\bm{\Pi}_n^\top \bm{\Pi}_n/n]^{-1} - \bbE[\bm{\Pi}_n^\top \bm{\Pi}_n/n]^{-1}  \right\| = O_P\left(K_{TSUV} / \sqrt{n}\right) = o_P(1)$; and
    \item[(iv)] $\left\|[ \hat{\bm{\Pi}}_n^\top \hat{\bm{\Pi}}_n/n]^{-1} - \bbE[\bm{\Pi}_n^\top \bm{\Pi}_n/n]^{-1} \right\| = O_P\left(K_{TSUV} / \sqrt{n}\right) + O_P\left(\zeta_{\dagger} \sqrt{K_{TS}}  n^{-\nu}\right) = o_P(1)$.
\end{description}
\end{lemma}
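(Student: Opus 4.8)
The plan is to establish (i) first as a law of large numbers for a bounded $\alpha$-mixing random field, then obtain (ii) by a perturbation argument that carefully separates Frobenius and operator norms, and finally deduce (iii) and (iv) from (i)--(ii) through the resolvent identity together with the eigenvalue bounds of Assumption \ref{as:eigen}.

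For (i), I would work entrywise. Writing $\bm{\Pi}_n^\top \bm{\Pi}_n/n = n^{-1}\sum_i \Pi_i \Pi_i^\top$, the squared Frobenius norm of the centered average equals $\sum_{a,b}(n^{-1}\sum_i(\Pi_{i,a}\Pi_{i,b} - \bbE[\Pi_{i,a}\Pi_{i,b}]))^2$. Each summand $\Pi_{i,a}\Pi_{i,b}$ is a bounded function (Assumptions \ref{as:mixing}(ii) and \ref{as:basis}(i)) of the variables indexed by $\ol{\mathcal{P}}_i$, hence by Assumption \ref{as:D_space}(ii) of a bounded neighborhood of $i$; as recorded around \eqref{eq:mixing2}, such arrays are $\alpha$-mixing with coefficient $\alpha^\dagger$. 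Taking expectations and applying the covariance inequality for bounded mixing variables gives $|\Cov(\Pi_{i,a}\Pi_{i,b}, \Pi_{j,a}\Pi_{j,b})| \le C\,\alpha^\dagger(\cdot,\cdot,\Delta(i,j))$; grouping the sites $j$ by distance to $i$ and using the summability $\sum_r (r+2\ol{\Delta})^{d-1}\hat\alpha(r) < \infty$ from Assumption \ref{as:mixing}(i), together with the site-counting bound of \cite{jenish2009central}, the inner sum over $j$ is $O(1)$ uniformly in $i$. Hence $\bbE\|\cdot\|^2 = O(K_{TSUV}^2/n)$, and Markov's inequality yields the $O_P(K_{TSUV}/\sqrt{n})$ rate, which is $o_P(1)$ since $\kappa_n^4/n \to 0$ (Assumption \ref{as:conv_rate}(i)). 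A useful byproduct is $\|\bm{\Pi}_n/\sqrt{n}\|_{\mathrm{op}} = O_P(1)$, and likewise for the blocks $\bm{\Pi}_{1,n}, \bm{\Pi}_{2,n}$, because the limiting mean has eigenvalues bounded by Assumption \ref{as:eigen}.

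For (ii), set $\bm{R}_n \coloneqq \hat{\bm{\Pi}}_{2,n} - \bm{\Pi}_{2,n}$, whose $i$-th row is $[P_{TS}(T_i,S_i)\otimes r_i]^\top$ with $r_i = [P_{UV}(\hat U_i,\hat V_i) - P_{UV}(U_i,V_i)] - n^{-1}\sum_j[P_{UV}(\hat U_j,\hat V_j) - \bbE P_{UV}(U,V)]$. A mean-value expansion together with the derivative bound $\sup\|\partial P_{UV}/\partial u\|, \sup\|\partial P_{UV}/\partial v\| = O(\zeta_\dagger)$ (Assumption \ref{as:basis}(ii) and the definition of $\zeta_\dagger$), the uniform rate $\sup_i(|\hat U_i - U_i| + |\hat V_i - V_i|) = O_P(n^{-\nu})$ (Assumption \ref{as:1st_stage}), and a mixing LLN for the centering average, yields $\sup_i\|r_i\| = O_P(\zeta_\dagger n^{-\nu} + \sqrt{K_{UV}/n})$, whence $\|\bm{R}_n/\sqrt{n}\|_F^2 = n^{-1}\sum_i\|P_{TS}(T_i,S_i)\|^2\|r_i\|^2 = O_P(K_{TS}\zeta_\dagger^2 n^{-2\nu} + K_{TSUV}/n)$. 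Since only the second block is perturbed, the difference $\hat{\bm{\Pi}}_n^\top\hat{\bm{\Pi}}_n/n - \bm{\Pi}_n^\top\bm{\Pi}_n/n$ is built from $\bm{R}_n^\top\bm{\Pi}_{1,n}/n$, $\bm{R}_n^\top\bm{\Pi}_{2,n}/n$, their transposes, and $\bm{R}_n^\top\bm{R}_n/n$. The crucial step is to bound each cross term by $\|\bm{R}_n^\top\bm{\Pi}_{k,n}/n\|_F \le \|\bm{R}_n/\sqrt{n}\|_F\,\|\bm{\Pi}_{k,n}/\sqrt{n}\|_{\mathrm{op}}$, invoking $\|A^\top B\|_F \le \|A\|_F\|B\|_{\mathrm{op}}$ and the operator-norm bound from (i); a naive use of $\|\bm{\Pi}_{2,n}/\sqrt{n}\|_F = O_P(\sqrt{K_{TSUV}})$ would lose an extra factor of order $\sqrt{\kappa_n}$ and break the stated rate. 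This gives $O_P(\zeta_\dagger\sqrt{K_{TS}}\,n^{-\nu} + \kappa_n/\sqrt{n})$, while $\|\bm{R}_n^\top\bm{R}_n/n\|_F \le \|\bm{R}_n/\sqrt{n}\|_F^2$ is of strictly smaller order under Assumption \ref{as:conv_rate}(i); combining with (i) via the triangle inequality delivers the claimed rate, which is $o_P(1)$ because $\zeta_\dagger\sqrt{\kappa_n}\,n^{-\nu}\to 0$.

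For (iii) and (iv), I would apply the identity $A^{-1} - B^{-1} = A^{-1}(B-A)B^{-1}$ with $B = \bbE[\bm{\Pi}_n^\top\bm{\Pi}_n/n]$, whose inverse satisfies $\|B^{-1}\|_{\mathrm{op}} \le 1/c_1$ by Assumption \ref{as:eigen}. Since the deviations in (i)/(ii) are $o_P(1)$, the random Gram matrices are invertible with probability tending to one and their inverses have operator norm bounded by, say, $2/c_1$; then $\|A^{-1} - B^{-1}\|_F \le \|A^{-1}\|_{\mathrm{op}}\|A - B\|_F\|B^{-1}\|_{\mathrm{op}}$ transfers the rates of (i) and (ii) directly to the inverses. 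The main obstacle throughout is precisely the cross-term bound in (ii): the perturbation $\bm{R}_n$ is genuinely full-rank, so its operator and Frobenius norms are of the same order, and the only way to avoid an extra $\sqrt{K_{UV}}$ factor is to pair the small Frobenius norm of $\bm{R}_n$ against the bounded operator norm of $\bm{\Pi}_n$ rather than its large Frobenius norm.
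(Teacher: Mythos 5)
Your proposal is correct and takes essentially the same approach as the paper: part (i) via an entrywise variance--covariance bound for bounded $\alpha$-mixing arrays (covariance inequality, distance-grouping of sites, summability from Assumption \ref{as:mixing}(i)) plus Markov's inequality; part (ii) via the triangle inequality, a mean-value expansion of the estimated basis rows with rates $\zeta_\dagger n^{-\nu}$ and $\sqrt{K_{UV}/n}$, and the crucial pairing of the Frobenius norm of $\hat{\bm{\Pi}}_n - \bm{\Pi}_n$ against the $O_P(1)$ operator norm of $\bm{\Pi}_n/\sqrt{n}$, which the paper implements as the trace inequality using $\rho_{\max}(\bm{\Pi}_n\bm{\Pi}_n^\top/n) \le c + o_P(1)$ from (i) and Assumption \ref{as:eigen}; and parts (iii)--(iv) via the identity $A^{-1}-B^{-1}=A^{-1}(B-A)B^{-1}$ together with the eigenvalue bounds. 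No gaps to flag.
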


\begin{proof}
    (i) By definition,
    \begin{align*}
         \bm{\Pi}_n^\top \bm{\Pi}_n/n - \bbE[\bm{\Pi}_n^\top \bm{\Pi}_n/n]
        = \left( 
        \begin{array}{cc}
        \frac{1}{n}\sum_{i=1}^n \left( \Pi_{1,i} \Pi_{1,i}^\top - \bbE[ \Pi_{1,i} \Pi_{1,i}^\top] \right) & \frac{1}{n}\sum_{i=1}^n \left( \Pi_{1,i} \Pi_{2,i}^\top - \bbE[ \Pi_{1,i} \Pi_{2,i}^\top] \right) \\
        \frac{1}{n}\sum_{i=1}^n \left( \Pi_{2,i} \Pi_{1,i}^\top - \bbE[ \Pi_{2,i} \Pi_{1,i}^\top] \right) & \frac{1}{n}\sum_{i=1}^n \left( \Pi_{2,i} \Pi_{2,i}^\top - \bbE[ \Pi_{2,i} \Pi_{2,i}^\top] \right) 
        \end{array}
        \right).
    \end{align*}
    We first derive the convergence rate for the upper-left block of the above matrix.
    Let $\Pi_{1,i}^{(k)}$ denote the $k$-th element of $\Pi_{1,i}$, whose exact form is $\Pi_{1,i}^{(k)} = X_{a,i} p_{T,b}(T_i) p_{S,c}(S_i)$ for some $a \in \{1, \ldots, dx\}$, $b \in \{1, \ldots, K_T\}$, and $c \in \{1, \ldots, K_S\}$.
    Thus, since $\{\Pi_{1,i}^{(k)}\}$ is a measurable function of $(X_i, \{T_j\}_{j \in \ol{\mathcal{P}}_i})$, it is $\alpha$-mixing with the mixing coefficient $\alpha^\dagger$ as given in \eqref{eq:mixing2}.
    
    Observe that
    \begin{align*}
    \begin{split}
        & \bbE \left\| \frac{1}{n}\sum_{i=1}^n \left( \Pi_{1,i} \Pi_{1,i}^\top - \bbE[ \Pi_{1,i} \Pi_{1,i}^\top] \right)\right\|^2 \\
    	& = \frac{1}{n^2} \sum_{k_2 =1}^{dx K_{TS}} \sum_{k_1 = 1}^{dx K_{TS}}	\bbE\left\{\sum_{i=1}^n \left( \Pi_{1,i}^{(k_1)} \Pi_{1,i}^{(k_2)} - \bbE[ \Pi_{1,i}^{(k_1)} \Pi_{1,i}^{(k_2)}] \right) \right\}^2 \\
		& = \frac{1}{n^2} \sum_{k_2 =1}^{dx K_{TS}} \sum_{k_1 = 1}^{dx K_{TS}} \sum_{i=1}^n \mathrm{Var}\left( \Pi_{1,i}^{(k_1)} \Pi_{1,i}^{(k_2)} \right) + \frac{1}{n^2} \sum_{k_2 =1}^{dx K_{TS}}\sum_{k_1 = 1}^{dx K_{TS}}	\sum_{i=1}^n \sum_{j \neq i}^n \mathrm{Cov}\left(\Pi_{1,i}^{(k_1)} \Pi_{1,i}^{(k_2)}, \Pi_{1,j}^{(k_1)} \Pi_{1,j}^{(k_2)} \right) \\
		& \le  \frac{1}{n^2} \sum_{i=1}^n  \bbE \left[  \Pi_{1,i}^\top \Pi_{1,i} \Pi_{1,i}^\top \Pi_{1,i} \right] + \frac{1}{n^2} \sum_{k_2 =1}^{dx K_{TS}}\sum_{k_1 = 1}^{dx K_{TS}}	\sum_{i=1}^n \sum_{j \neq i}^n \mathrm{Cov}\left(\Pi_{1,i}^{(k_1)} \Pi_{1,i}^{(k_2)}, \Pi_{1,j}^{(k_1)} \Pi_{1,j}^{(k_2)} \right).
	\end{split}
    \end{align*}
    By Assumptions \ref{as:mixing}(ii) and \ref{as:basis}(i), it holds that $\Pi_{1,i}^\top \Pi_{1,i} \le c d_x K_{TS}$.
    Thus, the first term on the right-hand side is $O(K_{TS}^2 /n)$.
    Next, for the second term, using Billingsley's covariance inequality,
    \begin{align}\label{eq:covineq}
    \begin{split}
        \sum_{j \neq i}^n \left| \mathrm{Cov}\left(\Pi_i^{(k_1)} \Pi_i^{(k_2)}, \Pi_j^{(k_1)} \Pi_j^{(k_2)} \right) \right|
        & = \sum_{r = 1}^\infty \sum_{1 \le j \le n: \: \Delta(i, j) \in [r, r + 1)}\left|  \mathrm{Cov}\left(\Pi_i^{(k_1)} \Pi_i^{(k_2)}, \Pi_j^{(k_1)} \Pi_j^{(k_2)} \right)\right| \\
        & \le c_1 \sum_{r = 1}^\infty r^{d-1} \alpha^\dagger (1,1,r) \\
        & = c_1 \sum_{r = 1}^{2 \ol{\Delta}} r^{d-1} + c_1 \sum_{r = 2 \ol{\Delta} + 1}^\infty r^{d-1} \alpha(C \ol{\Delta}^d, C \ol{\Delta}^d, r - 2\ol{\Delta}) \\
        & \le c_1 \sum_{r = 1}^{2 \ol{\Delta}} r^{d-1} + c_2 \sum_{r = 1}^\infty (r + 2\ol{\Delta})^{d-1} \hat\alpha(r) = O(1),
    \end{split}
    \end{align}
    where the first equality is due to Assumption \ref{as:D_space}(i), and we have used Lemma A.1(iii) of \cite{jenish2009central} to derive the second line.
    Thus, the second-term is also $O(K_{TS}^2 /n)$.
    Hence, by Markov's inequality, $|| n^{-1}\sum_{i=1}^n ( \Pi_{1,i} \Pi_{1,i}^\top - \bbE[ \Pi_{1,i} \Pi_{1,i}^\top]) ||^2 = O_P(K_{TS}^2 /n)$.
    
    In the same way as above, we can show that $|| n^{-1}\sum_{i=1}^n ( \Pi_{1,i} \Pi_{2,i}^\top - \bbE[ \Pi_{1,i} \Pi_{2,i}^\top]) ||^2 = O_P(K_{TS} K_{TSUV} /n)$ and $|| n^{-1}\sum_{i=1}^n ( \Pi_{2,i} \Pi_{2,i}^\top - \bbE[ \Pi_{2,i} \Pi_{2,i}^\top]) ||^2 = O_P(K_{TSUV}^2 /n)$.
    This completes the proof.
    \bigskip
    
    (ii) By the triangle inequality,
	\begin{align*}
		\left\| \hat{\bm{\Pi}}_n^\top \hat{\bm{\Pi}}_n/n - \bbE[\bm{\Pi}_n^\top \bm{\Pi}_n/n]   \right\|
		\le \left\| \hat{\bm{\Pi}}_n^\top \hat{\bm{\Pi}}_n/n - \bm{\Pi}_n^\top \bm{\Pi}_n/n \right\| + \left\|\bm{\Pi}_n^\top \bm{\Pi}_n/n - \bbE[\bm{\Pi}_n^\top \bm{\Pi}_n/n] \right\|.
	\end{align*}
	The second term is $O_P(K_{TSUV}/ \sqrt{n} )$ by (i).
	For the first term, 
	\begin{align}\label{eq:Pihat_decomp}
		\left\| \hat{\bm{\Pi}}_n^\top \hat{\bm{\Pi}}_n/n - \bm{\Pi}_n^\top \bm{\Pi}_n/n \right\|
		\le \left\| \left( \hat{\bm{\Pi}}_n^\top - \bm{\Pi}_n^\top \right) \left( \hat{\bm{\Pi}}_n - \bm{\Pi}_n \right) /n \right\| + 2 \left\| \bm{\Pi}_n^\top \left( \hat{\bm{\Pi}}_n - \bm{\Pi}_n \right) /n \right\|.
	\end{align}
    With Assumption \ref{as:1st_stage}, the mean value expansion gives
    \begin{align*}
        & \hat P_{UV}(\hat U_i, \hat V_i) - \ol{P}_{UV}(U_i, V_i) \\
        & = P_U(\hat U_i) \otimes P_V(\hat V_i) - P_U(U_i) \otimes P_V(V_i) + \bbE[P_{UV}(U, V)] - \frac{1}{n}\sum_{i=1}^n P_{UV}(\hat U_i, \hat V_i) \\
        & = [P_U(\hat U_i) - P_U(U_i)]\otimes P_V(\hat V_i) + P_U(U_i)\otimes [ P_V(\hat V_i) - P_V(V_i)] + \bbE[P_{UV}(U, V)] - \frac{1}{n}\sum_{i=1}^n P_{UV}(\hat U_i, \hat V_i) \\
        & = [\partial P_U(\bar U_i)/\partial u \otimes P_V(\hat V_i) + P_U(U_i)\otimes \partial P_V(\bar V_i)/\partial v ] \cdot O_P( n^{-\nu}) \\
        & \quad - \frac{1}{n}\sum_{i=1}^n [\partial P_U(\bar U_i)/\partial u \otimes P_V(\hat V_i) + P_U(U_i)\otimes \partial P_V(\bar V_i)/\partial v ] \cdot O_P( n^{-\nu}) + \bbE[P_{UV}(U, V)] - \frac{1}{n}\sum_{i=1}^n P_{UV}(U_i, V_i),
    \end{align*}
    where $\bar U_i \in [U_i, \hat U_i]$ and $\bar V_i \in [V_i, \hat V_i]$.
    For the last term of the right-hand side, we have 
    \begin{align*}
        \left\| \bbE[P_{UV}(U, V)] - \frac{1}{n}\sum_{i=1}^n P_{UV}(U_i, V_i) \right\| = O_P\left(\sqrt{K_{UV}/n}\right)
    \end{align*}
    by the same argument as in the proof of (i).
    Thus, 
    \begin{align*}
       \left\| \hat P_{UV}(\hat U_i, \hat V_i) - \ol{P}_{UV}(U_i, V_i) \right\| 
       = O_P(\zeta_{\dagger} n^{-\nu}) + O_P\left(\sqrt{K_{UV}/n}\right),
    \end{align*}
    for all $i$, where recall that $\zeta_{\dagger} = \zeta_U \sqrt{K_V} + \zeta_V \sqrt{K_U}$.
    Hence, since
	\begin{align*}
		\hat \Pi_i - \Pi_i 
		= \left[
		\begin{array}{c}
		\bm{0}_{dx K_{TS} \times 1} \\
		\hat \Pi_{2,i} - \Pi_{2,i}
		\end{array}
		\right]
		& = \left[
		\begin{array}{c}
		\bm{0}_{dx K_{TS} \times 1} \\
		P_{TS}(T_i, S_i) \otimes [\hat P_{UV}(\hat U_i, \hat V_i) - \ol{P}_{UV}(U_i, V_i)]
		\end{array}
		\right],
	\end{align*}
    we have $|| \hat \Pi_i - \Pi_i || = O_P\left(\zeta_{\dagger}\sqrt{K_{TS}} n^{-\nu}\right) + O_P\left(\sqrt{K_{TSUV}/n}\right)$ by Assumption \ref{as:basis}(i).
	Thus, 
	\begin{align*}
	\begin{split}
		\left\| \left( \hat{\bm{\Pi}}_n - \bm{\Pi}_n \right)/\sqrt{n} \right\| 
		& = O_P\left(\zeta_{\dagger}\sqrt{K_{TS}} n^{-\nu}\right) + O_P\left(\sqrt{K_{TSUV}/n}\right).
	\end{split}
	\end{align*}
	Hence, the first term of \eqref{eq:Pihat_decomp} satisfies 
    \begin{align*}
        \left\| \left( \hat{\bm{\Pi}}_n^\top - \bm{\Pi}_n^\top \right) \left( \hat{\bm{\Pi}}_n - \bm{\Pi}_n \right) /n \right\| \le \left\| \left( \hat{\bm{\Pi}}_n - \bm{\Pi}_n \right)/\sqrt{n} \right\|^2 = O_P\left(\zeta_{\dagger}^2 K_{TS} n^{-2\nu}\right) + O_P\left(K_{TSUV}/n\right).
    \end{align*}
	For the second term of \eqref{eq:Pihat_decomp}, we have
	\begin{align*}
	\begin{split}
		\left\| \bm{\Pi}_n^\top \left( \hat{\bm{\Pi}}_n - \bm{\Pi}_n \right) /n \right\|^2
		& = \mathrm{tr} \left\{ \left( \hat{\bm{\Pi}}_n - \bm{\Pi}_n \right)^\top \bm{\Pi}_n \bm{\Pi}_n^\top \left( \hat{\bm{\Pi}}_n - \bm{\Pi}_n \right) /n^2 \right\} \\
		& \leq [c + o_P(1)] \cdot \mathrm{tr} \left\{  \left( \hat{\bm{\Pi}}_n - \bm{\Pi}_n \right)^\top   \left( \hat{\bm{\Pi}}_n - \bm{\Pi}_n \right) /n \right\} \\
		& = [c + o_P(1)] \cdot \left\| \left( \hat{\bm{\Pi}}_n - \bm{\Pi}_n \right)/\sqrt{n} \right\|^2
		= O_P\left(\zeta_{\dagger}^2 K_{TS} n^{-2\nu}\right) + O_P\left(K_{TSUV}/n\right),
	\end{split}
	\end{align*}
	since result (i) and Assumption \ref{as:eigen} imply that $\rho_{\max}(\bm{\Pi}_n \bm{\Pi}_n^\top/n) \le c + o_P(1)$.
	Combining these results gives the desired result.
	\bigskip
	
	(iii), (iv) Noting the equality, $A^{-1} - B^{-1} = A^{-1}(B - A)B^{-1}$ for nonsingular matrices $A$ and $B$, the results follow from (i) and (ii) with Assumption \ref{as:eigen}.
\end{proof}


\begin{lemma}\label{lem:coef_conv}
    Suppose that Assumptions \ref{as:D_space}--\ref{as:error}, \ref{as:basis}(i), \ref{as:eigen}, \ref{as:series_approximation}, and \ref{as:conv_rate}(i) hold. 
    Then,
    \begin{description}
    \item[(i)] $|| \tilde \theta_{n, \beta} - \theta_\beta^* || = O_P\left(\sqrt{\frac{\kappa_n}{n}} + b_\mu + \tau_n^* \right)$; and
    \item[(ii)] $|| \tilde \theta_{n, g\lambda} - \theta_{g\lambda}^* || = O_P\left(\frac{\kappa_n}{\sqrt{n}} + b_\mu + \tau_n^*\right)$.
    \end{description}
\end{lemma}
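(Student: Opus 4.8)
The plan is to start from the exact decomposition $\bm{Y}_n = \bm{\Pi}_n\theta^* + \bm{B}_n + \tilde{\mathcal{E}}_n$, which holds by the definitions of $\bm{B}_n$ and $\tilde{\mathcal{E}}_n$ in Appendix \ref{app:note_def}. Writing $\hat{\bm{Q}}_n \coloneqq \bm{\Pi}_n^\top\bm{\Pi}_n/n$, substituting into the definition of $\tilde\theta_n$, and using the identity $[\hat{\bm{Q}}_n + \tau_n\bm{D}]^{-1}\hat{\bm{Q}}_n = I - \tau_n[\hat{\bm{Q}}_n + \tau_n\bm{D}]^{-1}\bm{D}$, I would obtain the three-term expansion
\begin{align*}
\tilde\theta_n - \theta^* = -\tau_n[\hat{\bm{Q}}_n + \tau_n\bm{D}]^{-1}\bm{D}\theta^* + [\hat{\bm{Q}}_n + \tau_n\bm{D}]^{-1}\tfrac{1}{n}\bm{\Pi}_n^\top\bm{B}_n + [\hat{\bm{Q}}_n + \tau_n\bm{D}]^{-1}\tfrac{1}{n}\bm{\Pi}_n^\top\tilde{\mathcal{E}}_n,
\end{align*}
whose three summands are a penalty bias, an approximation bias, and a stochastic term. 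The preliminary task is to control the spectrum of $\hat{\bm{Q}}_n + \tau_n\bm{D}$: since $\bm{D}$ is positive semidefinite, $\rho_{\min}(\hat{\bm{Q}}_n + \tau_n\bm{D}) \ge \rho_{\min}(\hat{\bm{Q}}_n)$, which is bounded away from zero with probability approaching one by Lemma \ref{lem:matrix_LLN}(i) and Assumption \ref{as:eigen}, while $\tau_n\rho_{\max}(\bm{D}) = o(1)$ keeps $\rho_{\max}(\hat{\bm{Q}}_n + \tau_n\bm{D}) = O_P(1)$; hence $\rho_{\max}([\hat{\bm{Q}}_n + \tau_n\bm{D}]^{-1}) = O_P(1)$ throughout.

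For the penalty term, factoring $\bm{D} = \bm{D}^{1/2}\bm{D}^{1/2}$ and using $\rho_{\max}(\bm{D}) = O(1)$ gives $\|\bm{D}\theta^*\| \le \sqrt{\rho_{\max}(\bm{D})}\,(\theta^{*\top}\bm{D}\theta^*)^{1/2}$, so this term is $O_P(\tau_n^*)$. For the approximation bias the key is to avoid a spurious dimension factor: writing $\|\tfrac{1}{n}\bm{\Pi}_n^\top\bm{B}_n\| = \sup_{\|\bm{a}\|=1}\tfrac{1}{n}\sum_i(\bm{a}^\top\Pi_i)b_i$ and applying Cauchy--Schwarz \emph{inside} the supremum bounds it by $\sup_{\|\bm{a}\|=1}(\bm{a}^\top\hat{\bm{Q}}_n\bm{a})^{1/2}(\tfrac{1}{n}\sum_i b_i^2)^{1/2}$; since $\rho_{\max}(\hat{\bm{Q}}_n) = O_P(1)$ and $\sup_i|b_i| = O(b_\mu)$ by Assumption \ref{as:series_approximation} (together with bounded $X_i$ and bounded $g$), this term is $O_P(b_\mu)$. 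Because the selection matrix $\mathbb{S}_\beta$ extracting the $\beta$-block has operator norm one, both bias terms contribute $O_P(b_\mu + \tau_n^*)$ to either block, matching the claimed bias rates in (i) and (ii).

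The stochastic term is where (i) and (ii) diverge, and this is the step I expect to require the most care. Conditioning on $\mathcal{F}_N$---with respect to which $\bm{\Pi}_n$ is measurable (as $T_i, S_i, V_i$ are functions of $\{(X_j,Z_j,U_j)\}$) and $\bbE[\tilde\epsilon_i\mid\mathcal{F}_N]=0$ with the $\tilde\epsilon_i$ conditionally independent by Assumption \ref{as:error}---all cross terms vanish, so $\bbE[\,\|[\hat{\bm{Q}}_n+\tau_n\bm{D}]^{-1}\tfrac{1}{n}\bm{\Pi}_n^\top\tilde{\mathcal{E}}_n\|^2\mid\mathcal{F}_N]$ reduces to $\tfrac{1}{n^2}\mathrm{tr}([\hat{\bm{Q}}_n+\tau_n\bm{D}]^{-2}\bm{\Pi}_n^\top\bm{\Sigma}\bm{\Pi}_n)$ with $\bm{\Sigma}=\mathrm{diag}(\bbE[\tilde\epsilon_i^2\mid\mathcal{F}_N])\preceq cI$. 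Using $\bm{\Pi}_n^\top\bm{\Pi}_n = n\hat{\bm{Q}}_n$, the cyclic property of the trace, and the matrix inequality $[\hat{\bm{Q}}_n+\tau_n\bm{D}]^{-1}\hat{\bm{Q}}_n[\hat{\bm{Q}}_n+\tau_n\bm{D}]^{-1}\preceq[\hat{\bm{Q}}_n+\tau_n\bm{D}]^{-1}$, this is bounded by $\tfrac{c}{n}\mathrm{tr}([\hat{\bm{Q}}_n+\tau_n\bm{D}]^{-1})=O_P(\kappa_n^2/n)$ for the full vector (total dimension $\asymp\kappa_n^2$), giving the $\kappa_n/\sqrt{n}$ rate in (ii). Sandwiching with $\mathbb{S}_\beta$ on both sides and bounding the trace by $\tfrac{c}{n}\mathrm{tr}(\mathbb{S}_\beta[\hat{\bm{Q}}_n+\tau_n\bm{D}]^{-1}\mathbb{S}_\beta^\top)\le \tfrac{c}{n}(dx\,K_{TS})\rho_{\max}([\hat{\bm{Q}}_n+\tau_n\bm{D}]^{-1})=O_P(\kappa_n/n)$ uses only the $\beta$-block dimension $dx\,K_{TS}\asymp\kappa_n$, yielding the sharper $\sqrt{\kappa_n/n}$ rate in (i). Collecting the three terms within each block and applying the conditional Markov inequality then delivers the stated rates.
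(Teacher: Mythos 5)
Your proposal is correct and follows essentially the same route as the paper's proof: the identical three-term decomposition into penalty bias, series-approximation bias, and stochastic term, the same spectral bounds via Lemma \ref{lem:matrix_LLN}(i) and Assumption \ref{as:eigen}, and the same conditional trace/Markov argument in which applying $\mathbb{S}_\beta$ (block dimension $dx\,K_{TS}\asymp\kappa_n$) versus keeping the full vector (dimension $\asymp\kappa_n^2$) produces the $\sqrt{\kappa_n/n}$ and $\kappa_n/\sqrt{n}$ rates, respectively. The only differences are presentational: you spell out the Cauchy--Schwarz step for the bias term and the full-vector computation for part (ii), which the paper dismisses as ``easy to see'' and ``analogous,'' respectively.
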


\begin{proof}

(i) Let $\mathbb{S}_\beta \coloneqq (I_{dxK_{TS}}, \: \bm{0}_{dxK_{TS} \times K_{TSUV}})$ so that $\tilde \theta_{n, \beta} = \mathbb{S}_\beta \tilde \theta_n$.
Observe that
\begin{align*}
\begin{split}
    \tilde \theta_{n, \beta} - \theta_\beta^*
    & = \mathbb{S}_\beta \left[\bm{\Pi}_n^\top \bm{\Pi}_n + \tau_n\bm{D}n\right]^{-1}\bm{\Pi}_n^\top(\mathbf{X}_n\bm{\beta}_n + \bm{G}_n \circ \bm{L}_n + \tilde{\mathcal{E}}_n ) - \theta_\beta^* \\
    & = \mathbb{S}_\beta \left[\bm{\Pi}_n^\top \bm{\Pi}_n + \tau_n\bm{D}n\right]^{-1}\bm{\Pi}_n^\top( \bm{\Pi}_n\theta^* +  \bm{B}_n + \tilde{\mathcal{E}}_n ) - \theta_\beta^*  \\
    & = \mathbb{S}_\beta \left[\bm{\Pi}_n^\top \bm{\Pi}_n + \tau_n\bm{D}n\right]^{-1}(\bm{\Pi}_n^\top\bm{\Pi}_n + \tau_n\bm{D}n - \tau_n\bm{D}n) \theta^* + \mathbb{S}_\beta \left[\bm{\Pi}_n^\top \bm{\Pi}_n + \tau_n\bm{D}n\right]^{-1}\bm{\Pi}_n^\top( \bm{B}_n + \tilde{\mathcal{E}}_n ) - \theta_\beta^*  \\
    & = -\tau_n\mathbb{S}_\beta \left[\bm{\Pi}_n^\top \bm{\Pi}_n + \tau_n\bm{D}n\right]^{-1} \bm{D}n\theta^* + \mathbb{S}_\beta \left[\bm{\Pi}_n^\top \bm{\Pi}_n + \tau_n\bm{D}n\right]^{-1}\bm{\Pi}_n^\top \bm{B}_n  + \mathbb{S}_\beta \left[\bm{\Pi}_n^\top \bm{\Pi}_n + \tau_n\bm{D}n\right]^{-1}\bm{\Pi}_n^\top \tilde{\mathcal{E}}_n \\
     & = \bm{C}_{1,n} + \bm{C}_{2,n} + \bm{C}_{3,n}, \;\; \text{say}.
\end{split}
\end{align*}
For $\bm{C}_{1,n}$, noting that $\rho_{\max}([\bm{\Pi}_n^\top \bm{\Pi}_n/n + \tau_n\bm{D}]^{-1}) = O_P(1)$ by Lemma \ref{lem:matrix_LLN}(i), Assumption \ref{as:eigen}, and the positive semidefiniteness of $\bm{D}$, we have
\begin{align}\label{eq:tau_bias}
\begin{split}
    \left\| \mathbb{S}_\beta \left[\bm{\Pi}_n^\top \bm{\Pi}_n + \tau_n\bm{D}n\right]^{-1} \bm{D}n\theta^* \right\|^2 
    & = \theta^{*\top}\bm{D}\left[\bm{\Pi}_n^\top \bm{\Pi}_n/n + \tau_n\bm{D}\right]^{-1}\mathbb{S}_\beta^\top \mathbb{S}_\beta \left[\bm{\Pi}_n^\top \bm{\Pi}_n/n + \tau_n\bm{D}\right]^{-1} \bm{D}\theta^* \\
    & \le O_P(1) \cdot \theta^{*\top}\bm{D}\theta^*,
\end{split}
\end{align}
where recall that we have assumed $\rho_{\max}(\bm{D}) = O(1)$.
Hence, $||\bm{C}_{1,n}|| = O_P(\tau_n^*)$.

Next, noting that $||\bm{B}_n||^2 = O(n b_\mu^2)$ by Assumptions \ref{as:mixing}(ii) and \ref{as:series_approximation}, then it is easy to see that $|| \bm{C}_{2,n}|| = O_P(b_\mu)$.

Finally, for $\bm{C}_{3,n}$, since $\bbE[\tilde \epsilon_i \mid \mathcal{F}_N] = 0$ and $\tilde \epsilon_i \indep \tilde \epsilon_j \mid \mathcal{F}_N$ by Assumption \ref{as:error}, we have
\begin{align}\label{eq:markov1}
    \begin{split}
    \bbE \left[ \left\| \bm{C}_{3,n} \right\|^2 \: \bigl| \: \mathcal{F}_N \right] 
    & =  \mathrm{tr}\left\{\mathbb{S}_\beta \left[\bm{\Pi}_n^\top \bm{\Pi}_n + \tau_n\bm{D}n\right]^{-1}\bm{\Pi}_n^\top \bbE\left[\tilde{\mathcal{E}}_n \tilde{\mathcal{E}}_n^\top \mid \mathcal{F}_N\right]\bm{\Pi}_n\left[\bm{\Pi}_n^\top \bm{\Pi}_n + \tau_n\bm{D}n\right]^{-1}\mathbb{S}_\beta^\top \right\}\\
    & \le O(1) \cdot \mathrm{tr}\left\{\mathbb{S}_\beta \left[\bm{\Pi}_n^\top \bm{\Pi}_n + \tau_n\bm{D}n\right]^{-1}\bm{\Pi}_n^\top \bm{\Pi}_n\left[\bm{\Pi}_n^\top \bm{\Pi}_n + \tau_n\bm{D}n\right]^{-1}\mathbb{S}_\beta^\top \right\}\\
    & \le O_P(1) \cdot \mathrm{tr}\left\{\mathbb{S}_\beta \mathbb{S}_\beta^\top /n\right\} = O_P(K_{TS}/n). 
    \end{split}
\end{align}
Thus, Markov's inequality implies that $||\bm{C}_{3,n}|| = O_P(\sqrt{K_{TS}/n})$.
This completes the proof.
\bigskip

(ii) The proof of (ii) is analogous and is omitted.
\end{proof}

\begin{flushleft}
    \textbf{Proof of Theorem \ref{thm:1st_stage}}
\end{flushleft}

    (i) Let $\hat b_i \coloneqq X_i^\top \beta(T_i, S_i) + g(T_i, S_i) \lambda(\hat U_i, \hat V_i) - - \Pi_{1,i}^\top \theta_\beta^* - \hat \Pi_{2,i}^\top\theta^*_{g\lambda}$.
    It is easy to see that 
    \begin{align*}
        \hat \Pi_{2,i}^\top \theta^*_{g\lambda} 
        & = (P_{TS}(T_i,S_i)^\top\theta_g^*) \cdot (\hat P_{UV}(\hat U_i, \hat V_i)^\top \theta_\lambda^*) \\
        & = (P_{TS}(T_i,S_i)^\top\theta_g^*) \cdot (\ol{P}_{UV}(\hat U_i, \hat V_i)^\top \theta_\lambda^*) + (P_{TS}(T_i,S_i)^\top\theta_g^*) \cdot ([\hat P_{UV}(\hat U_i, \hat V_i) - \ol{P}_{UV}(\hat U_i, \hat V_i)]^\top \theta_\lambda^*).
    \end{align*}
    Furthermore, noting that $\bbE[\lambda(U, V)] = 0$, for all $(u,v)$,
    \begin{align}\label{eq:P_diff_theta}
        \begin{split}
        [\hat P_{UV}(u, v) - \ol{P}_{UV}(u, v)]^\top \theta_\lambda^*
        & = \bbE[P_{UV}(U, V)^\top\theta_\lambda^*] - \frac{1}{n}\sum_{i=1}^n P_{UV}(\hat U_i, \hat V_i)^\top \theta_\lambda^*\\
        & = \bbE[\lambda(U, V)] - \frac{1}{n}\sum_{i=1}^n \biggl( \underbrace{P_{UV}(\hat U_i, \hat V_i) - \bbE[P_{UV}(U_i, V_i)]}_{= \: \ol{P}_{UV}(\hat U_i, \hat V_i)}\biggr)^\top  \theta_\lambda^* \\
        & = \bbE[\lambda(U, V)] - \frac{1}{n}\sum_{i=1}^n \lambda(\hat U_i, \hat V_i) + \frac{1}{n}\sum_{i=1}^n \left( \lambda(\hat U_i, \hat V_i) - \ol{P}_{UV}(\hat U_i, \hat V_i)^\top  \theta_\lambda^*\right) \\
        & =  \bbE[\lambda(U, V)] - \frac{1}{n}\sum_{i=1}^n \lambda( U_i, V_i) + \frac{1}{n}\sum_{i=1}^n \left[\lambda( U_i, V_i) - \lambda(\hat U_i, \hat V_i)\right] + O(K_{UV}^{-\mu_\lambda}) \\
        & = O_P(n^{-1/2}) + O_P(n^{-\nu}) + O(K_{UV}^{-\mu_\lambda}) 
        \end{split}
    \end{align}
    by the law of large numbers and Assumptions \ref{as:series_approximation} and \ref{as:1st_stage}.
    Hence, we have 
    \begin{align}\label{eq:bias_eval}
        \max_{1 \le i \le n} |\hat b_i| = O_P(b_\mu + n^{-\nu}).
    \end{align}
    Let $\hat{\bm{L}}_n = (\lambda(\hat U_1, \hat V_1), \ldots, \lambda(\hat U_n, \hat V_n))^\top$, and $\hat{\bm{B}}_n = \mathbf{X}_n\bm{\beta}_n + \bm{G}_n \circ \hat{\bm{L}}_n - \hat{\bm{\Pi}}_n\theta^*$.
    Noting that $\rho_{\max}([\hat{\bm{\Pi}}_n^\top \hat{\bm{\Pi}}_n/n + \tau_n\bm{D}]^{-1}) = O_P(1)$ by Lemma \ref{lem:matrix_LLN}(ii), Assumption \ref{as:eigen}, and the positive semidefiniteness of $\bm{D}$, we have
    \begin{align*}
        \hat \theta_{n, \beta} - \theta_\beta^*
        & = \mathbb{S}_\beta \left[\hat{\bm{\Pi}}_n^\top \hat{\bm{\Pi}}_n + \tau_n\bm{D}n\right]^{-1}\hat{\bm{\Pi}}_n^\top(\mathbf{X}_n\bm{\beta}_n + \bm{G}_n \circ \bm{L}_n + \tilde{\mathcal{E}}_n ) - \theta_\beta^* \\
        & = \mathbb{S}_\beta \left[\hat{\bm{\Pi}}_n^\top \hat{\bm{\Pi}}_n + \tau_n\bm{D}n\right]^{-1}\hat{\bm{\Pi}}_n^\top(\hat{\bm{\Pi}}_n \theta^* + \hat{\bm{B}}_n + \bm{G}_n \circ (\bm{L}_n - \hat{\bm{L}}_n)  + \tilde{\mathcal{E}}_n ) - \theta_\beta^* \\
        & = -\tau_n\mathbb{S}_\beta \left[\hat{\bm{\Pi}}_n^\top \hat{\bm{\Pi}}_n + \tau_n\bm{D}n\right]^{-1} \bm{D}n\theta^* + \mathbb{S}_\beta \left[\hat{\bm{\Pi}}_n^\top \hat{\bm{\Pi}}_n + \tau_n\bm{D}n\right]^{-1}\hat{\bm{\Pi}}_n^\top \hat{\bm{B}}_n \\
        & \quad + \mathbb{S}_\beta \left[\hat{\bm{\Pi}}_n^\top \hat{\bm{\Pi}}_n + \tau_n\bm{D}n\right]^{-1}\hat{\bm{\Pi}}_n^\top[\bm{G}_n \circ (\bm{L}_n - \hat{\bm{L}}_n)] + \mathbb{S}_\beta \left[\hat{\bm{\Pi}}_n^\top \hat{\bm{\Pi}}_n + \tau_n\bm{D}n\right]^{-1}\hat{\bm{\Pi}}_n^\top \tilde{\mathcal{E}}_n \\
         & = \mathbb{S}_\beta \left[\hat{\bm{\Pi}}_n^\top \hat{\bm{\Pi}}_n + \tau_n\bm{D}n\right]^{-1}\hat{\bm{\Pi}}_n^\top[\bm{G}_n \circ (\bm{L}_n - \hat{\bm{L}}_n)] + \mathbb{S}_\beta \left[\hat{\bm{\Pi}}_n^\top \hat{\bm{\Pi}}_n + \tau_n\bm{D}n\right]^{-1}\hat{\bm{\Pi}}_n^\top \tilde{\mathcal{E}}_n \\
         & \quad + O_P(\tau_n^*) + O_P(b_\mu + n^{-\nu}),
    \end{align*}
    where the last equality is from the same calculation as in \eqref{eq:tau_bias} and \eqref{eq:bias_eval}.
    Note that each $i$-th element of $\bm{G}_n \circ (\bm{L}_n - \hat{\bm{L}}_n)$ is bounded by $|g(T_i, S_i)[\lambda(U_i, V_i) - \lambda(\hat U_i, \hat V_i)]| = O_P(n^{-\nu})$.
    Thus, the first term of the right-hand side is $O_P(n^{-\nu})$.
    For the second term, by the same argument as in \eqref{eq:markov1}, we can see that it is of order $O_P(\sqrt{K_{TS}/n})$.
    \bigskip
    
    (ii) Let $\mathbb{S}_{g\lambda} \coloneqq (\mathbf{0}_{K_{TSUV} \times dxK_{TS}}, \: I_{K_{TSUV}})$ so that $\hat \theta_{n, g\lambda} = \mathbb{S}_{g\lambda} \hat \theta_n$.
    By the same discussion as above,
    \begin{align}\label{eq:theta_decomp1}
    \begin{split}
        \hat \theta_{n, g\lambda} - \theta_{g\lambda}^* 
        & = \mathbb{S}_{g\lambda}\left[\hat{\bm{\Pi}}_n^\top \hat{\bm{\Pi}}_n + \tau_n\bm{D}n\right]^{-1}\hat{\bm{\Pi}}_n^\top(\hat{\bm{\Pi}}_n \theta^* + \hat{\bm{B}}_n + \bm{G}_n \circ (\bm{L}_n - \hat{\bm{L}}_n)  + \tilde{\mathcal{E}}_n  ) - \theta_{g\lambda}^* \\
        & =  O_P(\tau_n^*) + O_P(b_\mu + n^{-\nu}) + O_P(n^{-\nu}) + \mathbb{S}_{g\lambda}\left[\hat{\bm{\Pi}}_n^\top \hat{\bm{\Pi}}_n + \tau_n\bm{D}n\right]^{-1}\hat{\bm{\Pi}}_n^\top \tilde{\mathcal{E}}_n.
    \end{split}
    \end{align}
    Although it is possible to derive the order of the remaining term by directly applying Markov's inequality, for later use, we further decompose it into three parts:
     \begin{align}\label{eq:theta_decomp2}
    \begin{split}
        \mathbb{S}_{g\lambda}\left[\hat{\bm{\Pi}}_n^\top \hat{\bm{\Pi}}_n + \tau_n\bm{D}n\right]^{-1}\hat{\bm{\Pi}}_n^\top \tilde{\mathcal{E}}_n
        & = \mathbb{S}_{g\lambda}\left[ \bm{\Pi}_n^\top \bm{\Pi}_n + \tau_n\bm{D}n\right]^{-1}\bm{\Pi}_n^\top \tilde{\mathcal{E}}_n \\
        & \quad + \mathbb{S}_{g\lambda}\left\{ \left[ \hat{\bm{\Pi}}_n^\top \hat{\bm{\Pi}}_n + \tau_n\bm{D}n\right]^{-1} - \left[ \bm{\Pi}_n^\top \bm{\Pi}_n + \tau_n\bm{D}n\right]^{-1} \right\}\bm{\Pi}_n^\top \tilde{\mathcal{E}}_n\\
        & \quad +\mathbb{S}_{g\lambda}\left[ \hat{\bm{\Pi}}_n^\top \hat{\bm{\Pi}}_n + \tau_n\bm{D}n\right]^{-1}(\hat{\bm{\Pi}}_n -\bm{\Pi}_n)^\top  \tilde{\mathcal{E}}_n\\
        & = \mathbb{S}_{g\lambda}\left[ \bm{\Pi}_n^\top \bm{\Pi}_n + \tau_n\bm{D}n\right]^{-1}\bm{\Pi}_n^\top \tilde{\mathcal{E}}_n + \bm{F}_{1,n} + \bm{F}_{2,n}, \;\; \text{say}.
    \end{split}
    \end{align}
    Clearly, the first term is $O_P\left(\sqrt{K_{TSUV}/n}\right)$ by Markov's inequality.

    Recalling that the left-hand side of \eqref{eq:Pihat_decomp} is $O_P\left(\zeta_{\dagger} \sqrt{K_{TS}}  n^{-\nu}\right) + O_P\left(\sqrt{K_{TSUV}/n}\right)$, we can see that the maximum eigenvalue of  $[ \hat{\bm{\Pi}}_n^\top \hat{\bm{\Pi}}_n/n + \tau_n\bm{D} ]^{-1} - [ \bm{\Pi}_n^\top \bm{\Pi}_n/n + \tau_n\bm{D}]^{-1}$ is bounded by $O_P\left(\zeta_{\dagger} \sqrt{K_{TS}}  n^{-\nu}\right) + O_P\left(\sqrt{K_{TSUV}/n}\right)$.
    Then using Markov's inequality in a similar way to \eqref{eq:markov1}, we obtain 
    \begin{align*}
        ||\bm{F}_{1,n}|| = O_P\left(\sqrt{ K_{TSUV} K_{TS}} \zeta_{\dagger} n^{-(\nu + 1/2)} \right) + O_P\left(K_{TSUV}/n\right) = o_P\left(\sqrt{K_{TSUV}/n}\right).
    \end{align*}
    Similarly, since the maximum eigenvalue of $( \hat{\bm{\Pi}}_n - \bm{\Pi}_n)^\top ( \hat{\bm{\Pi}}_n - \bm{\Pi}_n ) /n $ is $O_P\left(\zeta_{\dagger}^2 K_{TS} n^{-2\nu}\right) + O_P\left(K_{TSUV}/n\right)$, we have $||\bm{F}_{2,n}|| = O_P\left(\sqrt{ K_{TSUV} K_{TS}} \zeta_{\dagger} n^{-(\nu + 1/2)} \right) + O_P\left(K_{TSUV}/n\right) = o_P\left(\sqrt{K_{TSUV}/n}\right)$.
    \qed

\begin{lemma}\label{lem:IAE_lambda}
    Suppose that Assumptions \ref{as:D_space}--\ref{as:basis}(iii), \ref{as:eigen}--\ref{as:conv_rate}(i), and \ref{as:misc}(i)--(ii) are satisfied.
    Then, we have
    \begin{align*}
        \int_0^1 \int_0^1 | \hat \lambda_n(u, v) - \lambda(u, v)| \mathrm{d}u\mathrm{d}v = O_P\left(\sqrt{\kappa_n/n} + \tau_n^* + b_\mu + n^{-\nu}\right).
    \end{align*}
\end{lemma}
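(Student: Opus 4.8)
The plan is to pass from the integrated absolute error to the integrated squared error, which is cleaner to handle via the Kronecker structure. Since $[0,1]^2$ has unit Lebesgue measure, Jensen's inequality gives $\int_0^1\int_0^1|\hat\lambda_n(u,v)-\lambda(u,v)|\,\mathrm{d}u\mathrm{d}v \le \left(\int_0^1\int_0^1(\hat\lambda_n(u,v)-\lambda(u,v))^2\,\mathrm{d}u\mathrm{d}v\right)^{1/2}$, so it suffices to bound the $L^2$ error. I would decompose $\hat\lambda_n(u,v)-\lambda(u,v) = R_1(u,v) + \{[P^*_{TS}\otimes\hat P_{UV}(u,v)]^\top\theta^*_{g\lambda} - \lambda(u,v)\}$, where $R_1(u,v) \coloneqq [P^*_{TS}\otimes\hat P_{UV}(u,v)]^\top(\hat\theta_{n,g\lambda}-\theta^*_{g\lambda})$. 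For the normalization/approximation term, write $[P^*_{TS}\otimes\hat P_{UV}(u,v)]^\top\theta^*_{g\lambda} = (P^{*\top}_{TS}\theta_g^*)(\hat P_{UV}(u,v)^\top\theta_\lambda^*)$; the scale normalization \eqref{eq:scale_normal} together with Assumption \ref{as:series_approximation} gives $P^{*\top}_{TS}\theta_g^* = 1 + O(b_\mu)$, while \eqref{eq:P_diff_theta} and Assumption \ref{as:series_approximation} give $\hat P_{UV}(u,v)^\top\theta_\lambda^* = \lambda(u,v) + O_P(b_\mu + n^{-\nu})$ uniformly in $(u,v)$. Since $\lambda$ is bounded, this term is $O_P(b_\mu + n^{-\nu})$ uniformly, hence contributes the same order in $L^2$.

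The crux is $R_1$, and the key point is that the marginal integration over $(t,s)$ collapses the effective dimension from $K_{TSUV}\asymp\kappa_n^2$ to $K_{UV}\asymp\kappa_n$. Exploiting the Kronecker structure, I would rewrite $R_1(u,v) = (\hat a - a^*)^\top\hat P_{UV}(u,v)$ with the $K_{UV}$-vector $\hat a - a^* = (P^{*\top}_{TS}\otimes I_{K_{UV}})(\hat\theta_{n,g\lambda}-\theta^*_{g\lambda})$, so that $\int_0^1\int_0^1 R_1(u,v)^2\,\mathrm{d}u\mathrm{d}v = (\hat a - a^*)^\top M_{UV}(\hat a - a^*)$, where $M_{UV}\coloneqq\int_0^1\int_0^1\hat P_{UV}(u,v)\hat P_{UV}(u,v)^\top\,\mathrm{d}u\mathrm{d}v$. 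Because $\hat P_{UV}$ differs from $\ol P_{UV}$ only by a $(u,v)$-constant that is $o_P(1)$, Assumptions \ref{as:misc}(i)--(ii) (the former lets me pass from integration against $f_{UV}$ to Lebesgue measure, the latter bounds $\rho_{\max}(\bbE[\ol P_{UV}\ol P_{UV}^\top])$) yield $\rho_{\max}(M_{UV}) = O_P(1)$. It then remains to control $\|\hat a - a^*\|$.

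To bound $\|\hat a - a^*\|$, I would split $\hat\theta_{n,g\lambda}-\theta^*_{g\lambda}$ as in \eqref{eq:theta_decomp1} into the deterministic-rate pieces of order $O_P(\tau_n^* + b_\mu + n^{-\nu})$ and the stochastic piece $\mathbb{S}_{g\lambda}[\hat{\bm{\Pi}}_n^\top\hat{\bm{\Pi}}_n + \tau_n\bm Dn]^{-1}\hat{\bm{\Pi}}_n^\top\tilde{\mathcal{E}}_n$. The projection operator $(P^{*\top}_{TS}\otimes I_{K_{UV}})$ has operator norm $\|P^*_{TS}\| = O(1)$ by Assumption \ref{as:basis}(iii), so the deterministic pieces stay at $O_P(\tau_n^* + b_\mu + n^{-\nu})$. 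For the stochastic piece, note that $\hat{\bm{\Pi}}_n$ is $\mathcal{F}_N$-measurable and, by Assumption \ref{as:error}, $\bbE[\tilde{\mathcal{E}}_n\tilde{\mathcal{E}}_n^\top\mid\mathcal{F}_N]$ is diagonal with entries $O(1)$; a conditional second-moment calculation then gives, up to an $O(1)$ factor,
\begin{align*}
    \bbE\left[\left\|(P^{*\top}_{TS}\otimes I_{K_{UV}})\mathbb{S}_{g\lambda}[\hat{\bm{\Pi}}_n^\top\hat{\bm{\Pi}}_n + \tau_n\bm Dn]^{-1}\hat{\bm{\Pi}}_n^\top\tilde{\mathcal{E}}_n\right\|^2 \,\Big|\, \mathcal{F}_N\right] \le c\,\rho_{\max}\!\left([\hat{\bm{\Pi}}_n^\top\hat{\bm{\Pi}}_n + \tau_n\bm Dn]^{-1}\right)\,\mathrm{tr}\{\mathbb{T}\},
\end{align*}
where $\mathbb{T}\coloneqq (P^{*\top}_{TS}\otimes I_{K_{UV}})\mathbb{S}_{g\lambda}\mathbb{S}_{g\lambda}^\top(P^*_{TS}\otimes I_{K_{UV}})$ and I used that $\bm D$ is positive semidefinite, whence $\rho_{\max}\big([\hat{\bm{\Pi}}_n^\top\hat{\bm{\Pi}}_n + \tau_n\bm Dn]^{-1}\hat{\bm{\Pi}}_n^\top\hat{\bm{\Pi}}_n[\hat{\bm{\Pi}}_n^\top\hat{\bm{\Pi}}_n + \tau_n\bm Dn]^{-1}\big) \le \rho_{\max}\big([\hat{\bm{\Pi}}_n^\top\hat{\bm{\Pi}}_n + \tau_n\bm Dn]^{-1}\big)$. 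Since $\mathbb{S}_{g\lambda}\mathbb{S}_{g\lambda}^\top = I_{K_{TSUV}}$, one has $\mathrm{tr}\{\mathbb{T}\} = \|P^*_{TS}\|^2 K_{UV} = O(K_{UV})$, while Lemma \ref{lem:matrix_LLN}(ii) and Assumption \ref{as:eigen} give $\rho_{\max}([\hat{\bm{\Pi}}_n^\top\hat{\bm{\Pi}}_n + \tau_n\bm Dn]^{-1}) = O_P(1/n)$. Hence the stochastic piece is $O_P(\sqrt{K_{UV}/n}) = O_P(\sqrt{\kappa_n/n})$ by Markov's inequality, and collecting terms yields $\|\hat a - a^*\| = O_P(\sqrt{\kappa_n/n} + \tau_n^* + b_\mu + n^{-\nu})$. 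Combined with $\rho_{\max}(M_{UV}) = O_P(1)$ and the earlier remainder bound, this gives the claimed rate.

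The main obstacle is precisely this dimension-reduction step: the crude bound $\|R_1\|_{L^2}\le O_P(1)\,\|\hat\theta_{n,g\lambda}-\theta^*_{g\lambda}\| = O_P(\kappa_n/\sqrt n)$ inherited from Theorem \ref{thm:1st_stage}(ii) is too large by a factor $\sqrt{\kappa_n}$, and recovering the sharper $\sqrt{\kappa_n/n}$ hinges on the trace identity $\mathrm{tr}\{\mathbb{T}\} = O(K_{UV})$ produced by integrating out $(t,s)$. Getting the Kronecker reshaping consistent and justifying the $\mathcal{F}_N$-conditioning (so that $\hat{\bm{\Pi}}_n$ may be treated as fixed while $\tilde{\mathcal{E}}_n$ has conditional mean zero and bounded conditional variance) is where the care is required; the remaining estimates are routine given Lemma \ref{lem:matrix_LLN} and Assumption \ref{as:series_approximation}.
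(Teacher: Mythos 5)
Your proof is correct and follows essentially the same route as the paper's: your reshaping operator $(P^{*\top}_{TS}\otimes I_{K_{UV}})$ is exactly the paper's block matrix $\Gamma_{TS}$, and the crux --- collapsing the trace to $O(K_{UV})$ so that the stochastic piece is $O_P(\sqrt{\kappa_n/n})$ rather than the crude $O_P(\kappa_n/\sqrt{n})$ from Theorem \ref{thm:1st_stage}(ii) --- is precisely the paper's argument, as is the use of \eqref{eq:theta_decomp1}, \eqref{eq:P_diff_theta}, and Assumptions \ref{as:misc}(i)--(ii). The only cosmetic differences are that you pass to the Lebesgue $L^2$ norm by Jensen's inequality where the paper uses Assumption \ref{as:misc}(i) and H\"older's inequality with the density-weighted norm $\|\cdot\|_{UV,2}$, and that you absorb the estimated-control-variable error by conditioning on $\mathcal{F}_N$ (legitimate, since $\hat{\bm{\Pi}}_n$ is $\mathcal{F}_N$-measurable, and the paper itself uses this device in \eqref{eq:markov1}) instead of splitting off the correction terms $\bm{F}_{1,n}$, $\bm{F}_{2,n}$ as in \eqref{eq:theta_decomp2}.
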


\begin{proof}
    Observe that, by \eqref{eq:scale_normal},
    \begin{align*}
        \hat \lambda_n(u, v) - \lambda(u, v)
        & = P^*_{TS} \otimes \hat P_{UV}(u, v) - \int_{\mathcal{TS}} g(t,s)\lambda(u, v) \mathrm{d}t\mathrm{d}s \\
        & = [P^*_{TS} \otimes  \ol{P}_{UV}(u,v)]^\top [\hat \theta_{n,g\lambda} - \theta_{g\lambda}^*] + [P^*_{TS} \otimes (\hat P_{UV}(u, v) - \ol{P}_{UV}(u, v))]^\top [\hat \theta_{n, g\lambda} - \theta_{g\lambda}^*]\\
        & \quad  + (P^{*\top}_{TS} \theta_g^*) \cdot [\hat P_{UV}(u, v) - \ol{P}_{UV}(u, v)]^\top \theta_\lambda^* + \int_{\mathcal{TS}}\left( \ol{P}(t,s,u,v)^\top\theta_{g\lambda}^* - g(t,s)\lambda(u, v) \right)\mathrm{d}t\mathrm{d}s\\ 
        & = [P^*_{TS} \otimes  \ol{P}_{UV}(u,v)]^\top [\hat \theta_{n,g\lambda} - \theta_{g\lambda}^*] \\
        & \quad + O_P(\zeta_{\dagger}n^{-\nu} + \sqrt{K_{UV}/n} ) \cdot || \hat \theta_{n, g\lambda} - \theta_{g\lambda}^* || + O_P(n^{-\nu} + K_{TS}^{-\mu_g} + K_{UV}^{-\mu_\lambda})
    \end{align*}
    uniformly in $(u,v)$, by Assumption \ref{as:basis}(iii) and \eqref{eq:P_diff_theta}.
    
    Here, we define a $K_{UV} \times K_{TSUV}$ block matrix $\Gamma_{TS}$ as follows:
    \begin{align*}
        \Gamma_{TS} \coloneqq \left(
        \begin{array}{ccccccccccccc}
            P_{TS}^{*(1)} & 0 & \cdots & 0 & P_{TS}^{*(2)} & 0 & \cdots & 0 & & P_{TS}^{*(K_{TS})} & 0 & \cdots & 0 \\
            0 & P_{TS}^{*(1)} &        & \vdots  & 0 & P_{TS}^{*(2)} &        & \vdots & \cdots &  0 & P_{TS}^{*(K_{TS})} &        & \vdots \\
            \vdots &      & \ddots & \vdots  & \vdots &      & \ddots & \vdots & \cdots & \vdots &      & \ddots & \vdots \\
            0 & \cdots & \cdots &  P_{TS}^{*(1)} & 0 & \cdots & \cdots &  P_{TS}^{*(2)} & & 0 & \cdots & \cdots & P_{TS}^{*(K_{TS})} 
        \end{array}
        \right),
    \end{align*}
    where $P_{TS}^{*(k)}$ is the $k$-th element of $P^*_{TS}$.
    Then, notice that we can re-write $[P^*_{TS} \otimes  \ol{P}_{UV}(u,v)]^\top[\hat\theta_{n, g\lambda} - \theta_{g\lambda}^*] =  \ol{P}_{UV}(u,v)^\top \Gamma_{TS} [\hat\theta_{n, g\lambda} - \theta_{g\lambda}^*]$.
    Note that Assumption \ref{as:basis}(iii) implies that $\rho_{\max}( \Gamma_{TS}^\top  \Gamma_{TS}) = O(1)$.
    By \eqref{eq:theta_decomp1} and \eqref{eq:theta_decomp2}, we have
    \begin{align*}
        \ol{P}_{UV}(u,v)^\top \Gamma_{TS} [ \hat\theta_{n, g\lambda} - \theta_{g\lambda}^*] 
        & = \ol{P}_{UV}(u,v)^\top R_n\\
        & \quad + \ol{P}_{UV}(u,v)^\top \Gamma_{TS}\bm{F}_{1,n} + \ol{P}_{UV}(u,v)^\top \Gamma_{TS}\bm{F}_{2,n} + \ol{P}_{UV}(u,v)^\top \Gamma_{TS}\bm{F}_{3,n},
    \end{align*}
    uniformly in $(u, v)$, where $\bm{F}_{1,n}$ and $\bm{F}_{2,n}$ are as defined in \eqref{eq:theta_decomp1}, $||\bm{F}_{3,n}|| = O_P(\tau^*_n + b_\mu + n^{-\nu})$, and 
    \begin{align*}
        R_n = \Gamma_{TS}\mathbb{S}_{g\lambda}\left[ \bm{\Pi}_n^\top \bm{\Pi}_n/n + \tau_n\bm{D}\right]^{-1}\bm{\Pi}_n^\top \tilde{\mathcal{E}}_n/n.
    \end{align*}
    Similarly as in the proof of Theorem \ref{thm:1st_stage}(ii), we can easily find that $||\ol{P}_{UV}(u,v)^\top \Gamma_{TS}\bm{F}_{1,n}|| = O_P\left(\zeta_{\dagger} \sqrt{K_{TSUV}} n^{-(\nu + 1/2)} \right) + O_P\left(\sqrt{K_{TS}} K_{UV}/n\right)$ and  $||\ol{P}_{UV}(u,v)^\top \Gamma_{TS}\bm{F}_{2,n}|| = O_P\left(\zeta_{\dagger} \sqrt{K_{TSUV}} n^{-(\nu + 1/2)} \right) + O_P\left(\sqrt{K_{TS}} K_{UV}/n\right)$.

    Summarizing the results so far,
    \begin{align*}
        \left| \hat \lambda_n(u, v) - \lambda(u, v) \right|
        & \le O_P\left(n^{-\nu} + K_{TS}^{-\mu_g} + K_{UV}^{-\mu_\lambda}\right) + O_P\left(\zeta_{\dagger} \sqrt{K_{TSUV}} n^{-(\nu + 1/2)} \right) + O_P\left(\sqrt{K_{TS}} K_{UV}/n\right) \\
        & \quad + O_P\left(\zeta_{\dagger}n^{-\nu} + \sqrt{K_{UV}/n}\right) \cdot || \hat \theta_{n,g\lambda} - \theta^*_{g\lambda} ||  + \left| \ol{P}_{UV}(u,v)^\top R_n \right| + \left|\ol{P}_{UV}(u,v)^\top \Gamma_{TS}\bm{F}_{3,n} \right|
    \end{align*}
    uniformly in $(u,v)$.
    By H\"older's inequality and Assumption \ref{as:misc}(i),
    \begin{align*}
        \int_0^1 \int_0^1 \left| \hat \lambda_n(u, v) - \lambda(u, v) \right|\mathrm{d}u\mathrm{d}v 
        & \le c \int_0^1 \int_0^1 \left| \hat \lambda_n(u, v) - \lambda(u, v) \right| f_{UV}(u,v)\mathrm{d}u\mathrm{d}v \\
        & \le c \left\| \hat \lambda_n(U, V) - \lambda(U, V) \right\|_{UV,2},
    \end{align*}
    where $||a(U,V)||_{UV,2} = \sqrt{\int_0^1\int_0^1 |a(u,v)|^2 f_{UV}(u,v)\mathrm{d}u\mathrm{d}v}$.
    Then, by Assumption \ref{as:misc}(ii), we can easily find that $\left\| \ol{P}_{UV}(U,V)^\top R_n\right\|_{UV,2} = O_P(\sqrt{K_{UV}/n})$ and $\left\|\ol{P}_{UV}(U,V)^\top \Gamma_{TS}\bm{F}_{3,n}\right\|_{UV,2} = O_P(\tau^*_n + b_\mu + n^{-\nu})$.
    Thus,
    \begin{align*}
        \int_0^1 \int_0^1 \left| \hat \lambda_n(u, v) - \lambda(u, v) \right|\mathrm{d}u\mathrm{d}v 
        & =  O_P\left(\sqrt{K_{UV}/n} + \tau_n^* + b_\mu + n^{-\nu}\right) +  O_P\left(\zeta_{\dagger}n^{-\nu} + \sqrt{K_{UV}/n}\right) \cdot || \hat \theta_{n,g\lambda} - \theta^*_{g\lambda} ||  \\
        & = O_P\left(\sqrt{K_{UV}/n} + \tau_n^* + b_\mu + n^{-\nu}\right),
    \end{align*}
    by Theorem \ref{thm:1st_stage}(ii).
\end{proof}


\begin{flushleft}
    \textbf{Proof of Theorem \ref{thm:2nd_stage}}
\end{flushleft}

    (i) Using the decomposition in the proof of Lemma \ref{lem:IAE_lambda}, we can easily see that
    \begin{align*}
        \hat \lambda_n(u, v) - \lambda(u, v)
        & = O_P\left(\sqrt{K_{UV}}(\tau_n^* + b_\mu + n^{-\nu})\right) + o_P\left(\sqrt{K_{UV}/n}\right) + \ol{P}_{UV}(u,v)^\top R_n 
    \end{align*}
    uniformly in $(u, v) \in [0, 1]^2$.

    Here, we construct a partition $\mathcal{U}_n \coloneqq \{0, u_1,  u_2, \ldots, 1\}$ of $[0,1]$ in a way that, for any $(u, v) \in [0,1]^2$, there exists a point $(u^*, v^*) \in \mathcal{U}_n^2$ satisfying $||(u, v) - (u^*, v^*)|| = O(K_{UV}^{-\xi_1})$, where $\xi_1$ is as given in Assumption \ref{as:basis}(iv).
	Then, we have
	\begin{align*}
		\sup_{(u,v) \in [0,1]^2}\left| \ol{P}_{UV}(u,v)^\top R_n \right|
		&\le  \max_{(u, v) \in \mathcal{U}_n^2} \left| \ol{P}_{UV}(u,v)^\top R_n  \right|  +  \sup_{(u, v) \in [0,1]^2}\left| \left\{ P_{UV}(u, v) - P_{UV}(u^*, v^*) \right\}^\top  R_n \right|  \\
		& \le \max_{(u, v) \in \mathcal{U}_n^2} \left| \ol{P}_{UV}(u,v)^\top R_n \right|  +  O(1) \cdot \left\| R_n \right\| \\
		& = \max_{(u, v) \in \mathcal{U}_n^2} \left| \ol{P}_{UV}(u,v)^\top R_n \right| +  O_P\left(\sqrt{K_{UV}/n}\right),
	\end{align*}
	where the last equality is due to Markov's inequality.
    To derive the bound on the first term on the right-hand side, decompose $\tilde\epsilon_i = e_{1,i} + e_{2,i}$, where
	\begin{align*}
		e_{1,i} & \coloneqq \tilde\epsilon_i \mathbf{1}\{|\tilde\epsilon_i| \le M_n \} - \bbE[\tilde\epsilon_i \mathbf{1}\{|\tilde\epsilon_i| \le M_n \} \mid \mathcal{F}_N], \\
		e_{2,i} & \coloneqq \tilde\epsilon_i \mathbf{1}\{|\tilde\epsilon_i| > M_n \} - \bbE[\tilde\epsilon_i \mathbf{1}\{|\tilde\epsilon_i| > M_n \} \mid \mathcal{F}_N],
	\end{align*}
	and $M_n$ is a sequence of positive constants tending to infinity.
	Let $E_{1,n} = (e_{1,1}, \ldots , e_{1,n})^\top$ and $E_{2,n} = (e_{2,1}, \ldots , e_{2,n})^\top$.
	Further, let 
	\begin{align*}
	    q_i(u,v) 
	    & \coloneqq \ol{P}_{UV}(u,v)^\top \Gamma_{TS}\mathbb{S}_{g\lambda}\left[ \bm{\Pi}_n^\top \bm{\Pi}_n/n + \tau_n\bm{D}\right]^{-1} \Pi_i/n
	\end{align*} so that
	\begin{align*}
			\sum_{i=1}^n q_i(u,v)e_{1,i} = \ol{P}_{UV}(u,v)^\top \Gamma_{TS}\mathbb{S}_{g\lambda}\left[ \bm{\Pi}_n^\top \bm{\Pi}_n/n + \tau_n\bm{D}\right]^{-1} \bm{\Pi}_n^\top E_{1,n} /n.
	\end{align*}
	Note that $\bbE[q_i(u,v)e_{1,i} \mid \mathcal{F}_N] = 0$.
	Furthermore, it is straightforward to see that there exists a positive constant $c_1 > 0$ such that
	\begin{align*}
		|q_i(u,v)| 
		& \le || \ol{P}_{UV}(u,v) || \cdot  || \Gamma_{TS}\mathbb{S}_{g\lambda}\left[ \bm{\Pi}_n^\top \bm{\Pi}_n / n+ \tau_n\bm{D}\right]^{-1} \Pi_i || /n  
        \le  c_1 K_{UV}\sqrt{K_{TS}}/n 
	\end{align*} 
	uniformly in $u$ with probability approaching one.
	Moreover, uniformly in $u$,
	\begin{align*}
		\sum_{i=1}^n q_i(u, v)^2 
		\le c_2 K_{UV}/n 
	\end{align*} 
    for some $c_2 > 0$ wih probability approaching one.
	Therefore, for all $(u, v) \in \mathcal{U}_n^2$, we have $|q_i(u, v) e_{1,i}| \le c'_1 K_{UV} \sqrt{K_{TS}} M_n/n$ and $\sum_{i=1}^n \bbE[(q_i(u, v) e_{1,i})^2 \mid \mathcal{F}_N] = \sum_{i=1}^n q_i(u, v)^2  \bbE[e_{1,i}^2 \mid \mathcal{F}_N] \le c_2' K_{UV} /n$.
	Then, by Bernstein's inequality, we have
	\begin{align*}
		\Pr\left( \left| \sum_{i=1}^n q_i(u, v) e_{1,i} \right| > t \:\: \biggl| \:\:  \mathcal{F}_N\right) 
		& \le c \exp\left(- \frac{1}{2}\frac{t^2}{c_2' K_{UV} /n + t c'_1 K_{UV} \sqrt{K_{TS}} M_n /(3n)}\right) \\
		& \le c \exp\left(- \frac{t^2}{ c'(K_{UV} /n) ( 1 + t \sqrt{K_{TS}} M_n)}\right). 	
	\end{align*}
	Hence,
	\begin{align*}
		& \Pr\left(\max_{(u, v) \in \mathcal{U}_n^2}\left| \ol{P}_{UV}(u,v)^\top \Gamma_{TS}\mathbb{S}_{g\lambda}\left[ \bm{\Pi}_n^\top \bm{\Pi}_n/n + \tau_n\bm{D}\right]^{-1} \bm{\Pi}_n^\top E_{1,n} /n \right| > t \:\: \biggl| \:\:  \mathcal{F}_N \right) \\
		& \le c |\mathcal{U}_n|^2 \exp\left(- \frac{t^2}{ c'(K_{UV} /n) ( 1 + t \sqrt{K_{TS}} M_n)}\right) 
        = c \exp\left\{2 \ln |\mathcal{U}_n| - \frac{t^2}{ c'(K_{UV} /n) ( 1 + t \sqrt{K_{TS}} M_n)}\right\}.
	\end{align*}
	Then, setting $t = C \sqrt{(\kappa_n \ln \kappa_n) / n)}$ for a large constant $C > 0$, provided that $M_n$ grows sufficiently slowly so that $t \sqrt{\kappa_n} M_n = o(1)$, we have
	\begin{align*}
	    2 \ln |\mathcal{U}_n| - \frac{ C^2 (\kappa_n \ln \kappa_n)/n }{c' (K_{UV}/n) [1 + t \sqrt{K_{TS}} M_n)] }
		\asymp \ln\left(\frac{ |\mathcal{U}_n|^2 }{\kappa_n^{C^2} } \right).
	\end{align*}
	Here, recall that the length of each sub-interval in $\mathcal{U}_n$ is of order $O(K_{UV}^{-\xi_1})$, and thus the cardinality of $\mathcal{U}_n$ grows at rate $K_{UV}^{\xi_1}$.
	Thus, for sufficiently large $C$, we have $|\mathcal{U}_n|^2/\kappa_n^{C^2}  \to 0$, implying that 
	\begin{align*}
		\max_{(u, v) \in \mathcal{U}_n^2}\left| \ol{P}_{UV}(u,v)^\top \Gamma_{TS}\mathbb{S}_{g\lambda}\left[ \bm{\Pi}_n^\top \bm{\Pi}_n/n + \tau_n\bm{D}\right]^{-1} \bm{\Pi}_n^\top E_{1,n} /n \right| = O_P\left(\sqrt{(\kappa_n \ln \kappa_n) / n)}\right).
	\end{align*}
	
	Next, by Markov's inequality and Assumption \ref{as:error}, it holds that
	\begin{align*}
		& \Pr\left(\max_{(u,v) \in \mathcal{U}_n^2}\left| \ol{P}_{UV}(u,v)^\top \Gamma_{TS}\mathbb{S}_{g\lambda}\left[ \bm{\Pi}_n^\top \bm{\Pi}_n/n + \tau_n\bm{D}\right]^{-1} \bm{\Pi}_n^\top E_{2,n} /n \right| > t  \:\: \biggl| \:\:  \mathcal{F}_N \right) \\
		& \le \Pr\left(c \sqrt{K_{UV}}\left\| \Gamma_{TS}\mathbb{S}_{g\lambda}\left[ \bm{\Pi}_n^\top \bm{\Pi}_n/n + \tau_n\bm{D}\right]^{-1} \bm{\Pi}_n^\top E_{2,n} /n \right\| > t  \:\: \biggl| \:\:  \mathcal{F}_N \right) \\
		& \leq \Pr\left( \frac{c' \sqrt{K_{TS}} K_{UV}}{n} \sum_{i = 1}^n |e_{2,i}| > t  \:\: \biggl| \:\:  \mathcal{F}_N \right)\\
		& \leq \frac{2 c' \sqrt{K_{TS}} K_{UV}}{t n} \sum_{i=1}^n \bbE[|\tilde \epsilon_i| \cdot \mathbf{1}\{|\tilde \epsilon_i| > M_n \} \mid \mathcal{F}_N] \\
		& \leq \frac{2 c' \sqrt{K_{TS}} K_{UV}}{t n M_n^3} \sum_{i=1}^n \bbE[|\tilde \epsilon_i|^4 \cdot \mathbf{1}\{|\tilde \epsilon_i| > M_n \} \mid \mathcal{F}_N] = O\left( \frac{\sqrt{K_{TS}} K_{UV}}{t M_n^3} \right).
	\end{align*}
	Again, setting $t = C \sqrt{(\kappa_n \ln \kappa_n) / n)}$, if $\sqrt{n}\kappa_n / \sqrt{\ln \kappa_n} = O(M_n^3)$,
	\begin{align*}
		\frac{\sqrt{K_{TS}} K_{UV}}{t M_n^3}
		& \asymp \frac{1}{C}\frac{\sqrt{n}\kappa_n}{\sqrt{\ln \kappa_n} M_n^3} \to 0,
	\end{align*}
	as $C \to \infty$, which implies 
	\begin{align*}
		\max_{(u,v) \in \mathcal{U}_n^2}\left|\ol{P}_{UV}(u,v)^\top \Gamma_{TS}\mathbb{S}_{g\lambda}\left[ \bm{\Pi}_n^\top \bm{\Pi}_n/n + \tau_n\bm{D}\right]^{-1} \bm{\Pi}_n^\top E_{2,n} /n \right| =  O_P\left(\sqrt{(\kappa_n \ln \kappa_n) / n)}\right).
	\end{align*}
	It should be noted that the two requirements on $M_n$, $t \sqrt{\kappa_n} M_n = o(1)$ and $\sqrt{n} \kappa_n / \sqrt{\ln \kappa_n} = O(M_n^3)$, can be satisfied simultaneously: that is, when we set specifically $c^3 \sqrt{n} \kappa_n / \sqrt{\ln \kappa_n} = M_n^3$,
	\begin{align*}
        M_n = \frac{ c n^{1/6} \kappa_n^{1/3}}{ (\ln \kappa_n)^{1/6}} 
        \Rightarrow \sqrt{\frac{\kappa_n^2 \ln \kappa_n}{n}} M_n = \frac{c \kappa_n^{4/3} (\ln \kappa_n)^{2/6}}{n^{2/6}} \to 0
	\end{align*}
	under Assumption \ref{as:conv_rate}(ii).
	Thus, we have
	\begin{align*}
		\max_{(u, v) \in \mathcal{U}_n^2} \left| \ol{P}_{UV}(u,v)^\top R_n \right| = O_P\left(\sqrt{(\kappa_n \ln \kappa_n) / n)}\right).
	\end{align*}
	By combining these results, the proof is completed.
	
    \bigskip

    (ii) The proof is almost the same as that of (i).
    Decompose
    \begin{align*}
        \hat g_n(t,s) - g(t,s) 
        & = [P_{TS}(t,s) \otimes \hat{P}^*_{UV}]^\top \hat \theta_{n, g\lambda} - \int_0^1 \int_0^1 g(t,s) \lambda(u,v) \omega(u,v) \mathrm{d}u\mathrm{d}v \\
        & = [P_{TS}(t,s) \otimes \ol{P}^*_{UV}]^\top[\hat \theta_{n, g\lambda} - \theta_{g\lambda}^*] + [P_{TS}(t,s) \otimes (\hat{P}^*_{UV} -  \ol{P}^*_{UV})]^\top [\hat \theta_{n, g\lambda} - \theta_{g\lambda}^*] \\
        & \quad + (P_{TS}(t,s)^\top \theta_g^*) \cdot (\hat{P}^*_{UV} -  \ol{P}^*_{UV})^\top \theta_\lambda^* \\
        & \quad + \int_0^1 \int_0^1 \left( \ol{P}(t,s,u,v)^\top \theta_{g\lambda}^* - g(t,s)\lambda(u,v)\right) \omega(u,v) \mathrm{d}u\mathrm{d}v \\    
        & = [P_{TS}(t,s) \otimes \ol{P}^*_{UV}]^\top[\hat \theta_{n, g\lambda} - \theta_{g\lambda}^*] + [P_{TS}(t,s) \otimes (\hat{P}^*_{UV} -  \ol{P}^*_{UV})]^\top [\hat \theta_{n, g\lambda} - \theta_{g\lambda}^*] \\
        & \quad + (P_{TS}(t,s)^\top \theta_g^*) \cdot (\hat{P}^*_{UV} -  \ol{P}^*_{UV})^\top \theta_\lambda^* + O(K_{TS}^{-\mu_g} + K_{UV}^{-\mu_\lambda}).
    \end{align*}
    Observe that
    \begin{align*}
        & \left\| \hat P_{UV}(u,v)\hat \omega_n(u, v) - \ol{P}_{UV}(u,v) \omega(u, v) \right\|\\
        & \le  \left\| \hat P_{UV}(u,v) - \ol{P}_{UV}(u,v) \right\| \cdot |\hat \omega_n(u, v)| +  \left\| \ol{P}_{UV}(u,v) \right\| \cdot |\hat \omega_n(u, v) - \omega(u, v)| \\
        & = O_P\left(\zeta_{\dagger}n^{-\nu} + \sqrt{K_{UV}/n}\right) + O_P\left(\sqrt{K_{UV}}\right) \cdot |\hat \lambda_n(u,v) - \lambda(u, v)| \\
        & \quad + O_P\left(\sqrt{K_{UV}}\right) \cdot O_P\left(\sqrt{K_{UV}/n} + \tau_n^* + b_\mu + n^{-\nu}\right)
    \end{align*}
    uniformly in $(u,v)$, where the last equality follows Theorem \ref{thm:2nd_stage}(i), Lemma \ref{lem:IAE_lambda}, and
    \begin{align*}
        |\hat \omega_n(u, v) - \omega(u, v)|
        & \le  \frac{\left| \hat \lambda_n (u, v) - \lambda (u, v) \right| }{\int_0^1\int_0^1 \hat \lambda_n (u', v')^2 \mathrm{d}u'\mathrm{d}v'} + \frac{\left| \lambda(u, v) \int_0^1\int_0^1 [\lambda(u', v')^2 - \hat \lambda_n(u', v')^2] \mathrm{d}u'\mathrm{d}v' \right|}{ \int_0^1\int_0^1 \hat \lambda_n(u', v')^2 \mathrm{d}u'\mathrm{d}v'\int_0^1\int_0^1 \lambda(u', v')^2 \mathrm{d}u'\mathrm{d}v'} \\
        & \le  \frac{\left| \hat \lambda_n (u, v) - \lambda (u, v) \right| }{\int_0^1\int_0^1 \hat \lambda_n (u', v')^2 \mathrm{d}u'\mathrm{d}v'} + \frac{ |\lambda(u, v)| \int_0^1\int_0^1 |\lambda(u', v') - \hat \lambda_n(u', v')| \cdot |\lambda(u', v') + \hat \lambda_n(u', v')| \mathrm{d}u'\mathrm{d}v' }{ \int_0^1\int_0^1 \hat \lambda_n(u', v')^2 \mathrm{d}u'\mathrm{d}v'\int_0^1\int_0^1 \lambda(u', v')^2 \mathrm{d}u'\mathrm{d}v'} \\
        & = O_P(1)\cdot \left| \hat \lambda_n (u, v) - \lambda (u, v) \right| + O_P(1) \cdot \int_0^1\int_0^1 |\lambda(u, v) - \hat \lambda_n(u, v)| \mathrm{d}u \mathrm{d}v.
    \end{align*}
    Thus, 
    \begin{align*}
        \left\| P_{TS}(t,s) \otimes (\hat{P}^*_{UV} -  \ol{P}^*_{UV}) \right\|
        & = \left\| P_{TS}(t,s) \otimes \int_0^1 \int_0^1 \left(\hat P_{UV}(u,v)\hat \omega_n(u, v) - \ol{P}_{UV}(u,v) \omega(u, v)\right)\mathrm{d}u\mathrm{d}v \right\| \\
        & \le O(\sqrt{K_{TS}}) \int_0^1 \int_0^1 \left\| \hat P_{UV}(u,v)\hat \omega_n(u, v) - \ol{P}_{UV}(u,v) \omega(u, v) \right\| \mathrm{d}u\mathrm{d}v\\
        & \le O_P\left(\zeta_{\dagger}\sqrt{\kappa_n}n^{-\nu} + \kappa_n/\sqrt{n}\right) + O_P\left(\kappa_n\right) \int_0^1 \int_0^1 |\hat \lambda_n(u,v) - \lambda(u, v)| \mathrm{d}u\mathrm{d}v \\
        & \quad + O_P\left(\sqrt{\kappa_n^3/n} + \kappa_n(\tau_n^* + b_\mu + n^{-\nu})\right)\\
        & = O_P\left(\zeta_{\dagger}\sqrt{\kappa_n}n^{-\nu} + \kappa_n/\sqrt{n}\right) + O_P\left(\sqrt{\kappa_n^3/n} + \kappa_n(\tau_n^* + b_\mu + n^{-\nu})\right)
    \end{align*} 
    by Lemma \ref{lem:IAE_lambda}, implying that    
    \begin{align*}
        \left| [P_{TS}(t,s) \otimes (\hat{P}^*_{UV} - \ol{P}^*_{UV})]^\top [\hat \theta_{n, g\lambda} - \theta_{g\lambda}^*] \right|
        & \le  O_P\left(\zeta_{\dagger}\sqrt{\kappa_n}n^{-\nu} + \kappa_n/\sqrt{n}\right)  \cdot || \hat \theta_{n, g\lambda} - \theta_{g\lambda}^*|| \\
        & \quad + O_P\left(\sqrt{\kappa_n^3/n} + \kappa_n(\tau_n^* + b_\mu + n^{-\nu})\right) \cdot ||\hat \theta_{n, g\lambda} - \theta_{g\lambda}^*|| \\
        & = O_P\left(\sqrt{\kappa_n/n} + \tau_n^* + b_\mu + n^{-\nu}\right)
    \end{align*}
    by Assumption \ref{as:misc}(iii) and Theorem \ref{thm:1st_stage}(ii).
    Similarly, by \eqref{eq:P_diff_theta}, we have
    \begin{align*}
        & \left| [\hat P_{UV}(u,v)\hat \omega_n(u, v) - \ol{P}_{UV}(u,v) \omega(u, v)]^\top \theta_\lambda^* \right| \\
        & \le \left| [\hat P_{UV}(u,v) - \ol{P}_{UV}(u,v)]^\top \theta_\lambda^* \right| \cdot |\hat\omega_n(u, v)|  + (\ol{P}_{UV}(u,v)^\top \theta_\lambda^*) \cdot |\hat\omega_n(u, v) - \omega(u, v)| \\
        & = O_P(n^{-\nu} + K_{UV}^{-\mu_\lambda}) + O_P(1) \cdot |\hat \lambda_n(u,v) - \lambda(u, v)| + O_P\left(\sqrt{\kappa_n/n} + \tau_n^* + b_\mu + n^{-\nu}\right)
    \end{align*}
    uniformly in $(u,v)$, implying that 
    \begin{align*}
        \left| (P_{TS}(t,s)^\top \theta_g^*) \cdot (\hat{P}^*_{UV} -  \ol{P}^*_{UV})^\top \theta_\lambda^* \right|
        & = O(1) \cdot \int_0^1 \int_0^1 |\hat \lambda_n(u,v) - \lambda(u, v)| \mathrm{d}u\mathrm{d}v + O_P\left(\sqrt{\kappa_n/n} + \tau_n^* + b_\mu + n^{-\nu}\right)\\
        & = O_P\left(\sqrt{\kappa_n/n} + \tau_n^* + b_\mu + n^{-\nu}\right)
    \end{align*}
    by Lemma \ref{lem:IAE_lambda}.

    Next, we define a $K_{TS} \times K_{TSUV}$ block matrix $\Gamma_{UV}$ as follows:
        \begin{align*}
            \Gamma_{UV} \coloneqq \left(
            \begin{array}{cccc}
                \ol{P}^{*\top}_{UV} & \bm{0}_{1 \times K_{UV}} & \cdots & \bm{0}_{1 \times K_{UV}}\\
                \bm{0}_{1 \times K_{UV}} & \ol{P}^{*\top}_{UV} &        & \vdots \\
                \vdots                   &                  & \ddots & \vdots \\
                \bm{0}_{1 \times K_{UV}} & \cdots & \cdots &  \ol{P}^{*\top}_{UV}
            \end{array}
            \right).
        \end{align*}
    Then, we can re-write $[P_{TS}(t,s) \otimes \ol{P}^*_{UV}]^\top[\hat \theta_{n, g\lambda} - \theta_{g\lambda}^*] = P_{TS}(t,s)^\top \Gamma_{UV} [\hat \theta_{n, g\lambda} - \theta_{g\lambda}^*]$.
    By the same argument as above, we have
    \begin{align*}
        P_{TS}(t,s)^\top \Gamma_{UV} [\hat \theta_{n, g\lambda} - \theta_{g\lambda}^*]
        & = O_P\left(\sqrt{K_{TS}}(\tau_n^* + b_\mu + n^{-\nu})\right) + o_P\left(\sqrt{K_{TS}/n}\right) + P_{TS}(t,s)^\top R_n,
    \end{align*}
    uniformly in $(t,s)$, where (with an abuse of notation) $ R_n = \Gamma_{UV}\mathbb{S}_{g\lambda}\left[ \bm{\Pi}_n^\top \bm{\Pi}_n / n + \tau_n\bm{D}\right]^{-1}\bm{\Pi}_n^\top \tilde{\mathcal{E}}_n/n$.
    
    Similarly as before, we construct a partition $\mathcal{T}_n \mathcal{S}_n$ of $\mathcal{TS}$ such that, for any $(t,s) \in \mathcal{TS}$, there exists a point $(t^*, s^*) \in \mathcal{T}_n \mathcal{S}_n$ satisfying $||(t,s) - (t^*,s^*)|| = O(K_{TS}^{-\xi_2})$, where $\xi_2$ is as given in Assumption \ref{as:basis}(v).
	Then, we have
	\begin{align*}
		\sup_{(t,s) \in \mathcal{TS}}\left| P_{TS}(t,s)^\top R_n \right|
		& \le \max_{(t,s) \in \mathcal{T}_n \mathcal{S}_n} \left| P_{TS}(t,s)^\top R_n \right| +  O_P\left(\sqrt{K_{TS}/n}\right).
	\end{align*}
    Letting
	\begin{align*}
	    q_{i}(t,s) 
	    & \coloneqq  P_{TS}(t,s)^\top \Gamma_{UV}\mathbb{S}_{g\lambda}\left[ \bm{\Pi}_n^\top \bm{\Pi}_n / n + \tau_n\bm{D}\right]^{-1} \Pi_i/n,
	\end{align*} 
    we have $|q_i(t,s)| \le c_1 K_{TS}\sqrt{K_{UV}}/n$ and  $\sum_{i=1}^n q_i(t,s)^2 \le c_2 K_{TS}/n$ with probability approaching one.
	Then, by Bernstein's inequality, we have
	\begin{align*}
		& \Pr\left(\max_{(t,s) \in \mathcal{T}_n \mathcal{S}_n}\left| P_{TS}(t,s)^\top \Gamma_{UV}\mathbb{S}_{g\lambda}\left[ \bm{\Pi}_n^\top \bm{\Pi}_n/n + \tau_n\bm{D}\right]^{-1} \bm{\Pi}_n^\top E_{1,n} /n \right| > u \:\: \biggl| \:\:  \mathcal{F}_N \right) \\
		& \le c \exp\left\{\ln |\mathcal{T}_n \mathcal{S}_n| - \frac{u^2}{ c(K_{TS} /n) ( 1 + u \sqrt{K_{UV}} M_n)}\right\}.
	\end{align*}
	Then, setting $u = C \sqrt{(\kappa_n \ln \kappa_n) / n)}$ for a large constant $C > 0$, provided that $M_n$ grows sufficiently slowly so that $u \sqrt{\kappa_n} M_n = o(1)$, we have
	\begin{align*}
		\max_{(t,s) \in \mathcal{T}_n \mathcal{S}_n}\left| P_{TS}(t,s)^\top \Gamma_{UV}\mathbb{S}_{g\lambda}\left[ \bm{\Pi}_n^\top \bm{\Pi}_n/n + \tau_n\bm{D}\right]^{-1} \bm{\Pi}_n^\top E_{1,n} /n \right| = O_P\left(\sqrt{(\kappa_n \ln \kappa_n) / n)}\right).
	\end{align*}
	Next, by Markov's inequality, it holds that
	\begin{align*}
		& \Pr\left(\max_{(t,s) \in \mathcal{T}_n \mathcal{S}_n}\left|P_{TS}(t,s)^\top \Gamma_{UV}\mathbb{S}_{g\lambda}\left[ \bm{\Pi}_n^\top \bm{\Pi}_n/n + \tau_n\bm{D}\right]^{-1} \bm{\Pi}_n^\top E_{2,n} /n \right| > u  \:\: \biggl| \:\:  \mathcal{F}_N \right) \\
		& \le \Pr\left( \frac{c K_{TS} \sqrt{K_{UV}}}{n} \sum_{i = 1}^n |e_{2,i}| > u  \:\: \biggl| \:\:  \mathcal{F}_N \right) 
        = O\left( \frac{K_{TS} \sqrt{K_{UV}}}{u M_n^3} \right).
	\end{align*}
	Again, setting $u = C \sqrt{(\kappa_n \ln \kappa_n) / n)}$ for a large $C$, if $ \sqrt{n}\kappa_n / \sqrt{\ln \kappa_n} = O(M_n^3)$,
	\begin{align*}
		\max_{(t,s) \in \mathcal{T}_n \mathcal{S}_n}\left|P_{TS}(t,s)^\top \Gamma_{UV}\mathbb{S}_{g\lambda}\left[ \bm{\Pi}_n^\top \bm{\Pi}_n/n + \tau_n\bm{D}\right]^{-1} \bm{\Pi}_n^\top E_{2,n} /n \right| =  O_P\left(\sqrt{(\kappa_n \ln \kappa_n) / n)}\right).
	\end{align*}
	Thus, we have $\max_{(t,s) \in \mathcal{T}_n \mathcal{S}_n} \left| P_{TS}(t,s)^\top R_n \right| = O_P\left(\sqrt{(\kappa_n \ln \kappa_n) / n)}\right)$.
	By combining these results, the proof is completed.
    \qed


\bigskip

In what follows, we derive the asymptotic distribution of $\sqrt{n h_T h_S}(\tilde \beta_n(t, s) - \beta(t,s))$.
Noting that $X_i^\top \beta(T_i, S_i) = X_i^\top \gamma_i(t,s) + X_i(t,s)^\top \delta(t, s)$,
\begin{align*}
    & \sqrt{n h_T h_S}(\tilde \beta_n(t, s) - \beta(t,s)) = \mathbb{S}_{dx} \left[ \frac{1}{n}\sum_{i=1}^n X_i(t,s) X_i(t,s)^\top W_{i}(t,s) \right]^{-1} \\
    & \qquad\qquad \times \frac{\sqrt{h_T h_S}}{\sqrt{n}}\sum_{i=1}^n X_i(t,s) [X_i^\top \gamma_i(t,s) + g(T_i, S_i)\lambda(U_i, V_i) + \tilde \epsilon_i - \hat g_n(T_i, S_i) \hat \lambda_n(\hat U_i, \hat V_i)] W_{i}(t,s) \\
    & \eqqcolon \bm{H}_{1,n} + \bm{H}_{2,n} + \bm{H}_{3,n}, 
\end{align*}
where 
\begin{align*}
    \bm{H}_{1,n}
    & \coloneqq \mathbb{S}_{dx} \left[ \frac{1}{n}\sum_{i=1}^n X_i(t,s) X_i(t,s)^\top W_{i}(t,s) \right]^{-1} \frac{\sqrt{h_T h_S}}{\sqrt{n}}\sum_{i=1}^n X_i(t,s) X_i^\top \gamma_i(t,s)W_{i}(t,s) \\
    \bm{H}_{2,n}
    & \coloneqq \mathbb{S}_{dx} \left[ \frac{1}{n}\sum_{i=1}^n X_i(t,s) X_i(t,s)^\top W_{i}(t,s) \right]^{-1} \frac{\sqrt{h_T h_S}}{\sqrt{n}}\sum_{i=1}^n X_i(t,s) [g(T_i, S_i)\lambda(U_i, V_i) - \hat g_n(T_i, S_i) \hat \lambda_n(\hat U_i, \hat V_i)]W_{i}(t,s) \\
    \bm{H}_{3,n}
    & \coloneqq \mathbb{S}_{dx} \left[ \frac{1}{n}\sum_{i=1}^n X_i(t,s) X_i(t,s)^\top W_{i}(t,s) \right]^{-1} \frac{\sqrt{h_T h_S}}{\sqrt{n}}\sum_{i=1}^n X_i(t,s) \tilde \epsilon_i W_{i}(t,s).    
\end{align*}


\begin{lemma}\label{lem:kernel_LLN}
    Suppose that Assumptions \ref{as:D_space}, \ref{as:mixing}, \ref{as:omega_f}--\ref{as:band}(i), and \ref{as:alpha}(i) hold.
    Then,
\begin{align*}
    \frac{1}{n}\sum_{i = 1}^n X_i(t,s) X_i(t,s)^\top W_{i}(t,s) =
    \left(\begin{array}{ccc}
        \ol{\Omega}_{1}(t,s) & \mathbf{0}_{dx \times dx} & \mathbf{0}_{dx \times dx} \\
        \mathbf{0}_{dx \times dx} & \ol{\Omega}_{1}(t,s) \varphi_2^1 & \mathbf{0}_{dx \times dx} \\
        \mathbf{0}_{dx \times dx} & \mathbf{0}_{dx \times dx} & \ol{\Omega}_{1}(t,s) \varphi_2^1
    \end{array}\right)
    + o_P(1).
\end{align*}
\end{lemma}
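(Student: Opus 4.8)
The plan is to treat this as a kernel law of large numbers for a nonstationary $\alpha$-mixing random field. First I would exploit the block structure of $X_i(t,s)X_i(t,s)^\top$: each of its nine $dx\times dx$ blocks has the form $X_iX_i^\top\big((T_i-t)/h_T\big)^{a}\big((S_i-s)/h_S\big)^{b}$ with $(a,b)\in\{0,1,2\}^2$, so it suffices to analyze, for each such pair, the average $M_n^{(a,b)}(t,s)\coloneqq \frac1n\sum_{i=1}^n X_iX_i^\top\big(\tfrac{T_i-t}{h_T}\big)^{a}\big(\tfrac{S_i-s}{h_S}\big)^{b} W_i(t,s)$ and then reassemble. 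For every $(a,b)$ I would split $M_n^{(a,b)}=\bbE[M_n^{(a,b)}] + \big(M_n^{(a,b)}-\bbE[M_n^{(a,b)}]\big)$ and show that the first term converges to the asserted block while the second is $o_P(1)$; since the ambient dimension $3dx$ is fixed, entrywise $o_P(1)$ gives the stated result in Frobenius norm.

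For the mean, conditioning on $(T_i,S_i)$ and changing variables $\phi=(T_i-t)/h_T$, $\psi=(S_i-s)/h_S$ yields
\begin{align*}
\bbE\!\left[X_iX_i^\top\Big(\tfrac{T_i-t}{h_T}\Big)^{a}\Big(\tfrac{S_i-s}{h_S}\Big)^{b} W_i(t,s)\right]
&= \int\!\!\int \Omega_{1,i}(t+h_T\phi,\, s+h_S\psi)\, f_i(t+h_T\phi,\, s+h_S\psi)\\
&\qquad \times \phi^{a}\psi^{b}\, W(\phi)W(\psi)\,\mathrm{d}\phi\,\mathrm{d}\psi.
\end{align*}
Using the $C^1$ smoothness of $\Omega_{1,i}$ and $f_i$ (Assumption \ref{as:omega_f}(i)), a first-order expansion shows this equals $\Omega_{1,i}(t,s)f_i(t,s)\,\varphi_a^1\varphi_b^1 + O(h_T+h_S)$, where the remainder is uniform in $i$ because $\sup_i\|X_i\|<\infty$ (Assumption \ref{as:mixing}(ii)) bounds $\Omega_{1,i}$ and its derivatives on a compact neighborhood of $(t,s)$. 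Averaging over $i$, invoking $\varphi_0^1=1$, $\varphi_1^1=0$ (kernel symmetry, Assumption \ref{as:kernel}), and $\varphi_2^1=\int\phi^2W(\phi)\,\mathrm{d}\phi$, and using $\ol{\Omega}_{1,n}(t,s)\to\ol{\Omega}_1(t,s)$ (Assumption \ref{as:omega_f}(ii)) with $h_T,h_S\to0$, gives the claimed limit: the $(1,1)$ block tends to $\ol{\Omega}_1(t,s)$, the $(2,2)$ and $(3,3)$ blocks to $\ol{\Omega}_1(t,s)\varphi_2^1$, and every off-diagonal block vanishes because it carries a factor $\varphi_1^1=0$.

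For the deviation it suffices to show $\Var(M_n^{(a,b)})\to0$ entrywise, after which Markov's inequality delivers $o_P(1)$. Writing $\xi_i$ for a generic scalar entry, the diagonal contribution $\frac1{n^2}\sum_i\Var(\xi_i)$ is $O\big(1/(nh_Th_S)\big)$, since $\bbE[\xi_i^2]=O\big((h_Th_S)^{-1}\big)$ by the same change of variables; this is $o(1)$ under $h_T,h_S\asymp n^{-1/6}$ (Assumption \ref{as:band}(i)), for which $nh_Th_S\asymp n^{2/3}$. The off-diagonal contribution is the delicate part and is where the spatial-dependence machinery enters. Because $\xi_i$ is a bounded-coefficient function of $(X_i,\{T_j,S_j\}_{j\in\ol{\mathcal{P}}_i})$, it inherits the mixing coefficient $\alpha^\dagger$ of \eqref{eq:mixing2}; since $\|X_i\|$ and $W$ are bounded, $\xi_i$ has finite fourth moment with $\|\xi_i\|_4=O\big((h_Th_S)^{-3/4}\big)$. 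Davydov's covariance inequality with exponents $(4,4,2)$ then gives $|\Cov(\xi_i,\xi_j)|=O\big((h_Th_S)^{-3/2}\,\alpha^\dagger(1,1,\Delta(i,j))^{1/2}\big)$, and summing over shells via $|\{j:\Delta(i,j)\in[r,r+1)\}|=O(r^{d-1})$ (Lemma A.1 of \citealp{jenish2009central}) bounds $\frac1{n^2}\sum_{i\neq j}|\Cov(\xi_i,\xi_j)|$ by $O\big((h_Th_S)^{-3/2}n^{-1}\sum_r r^{d-1}\alpha^\dagger(1,1,r)^{1/2}\big)$. The series is finite by Assumption \ref{as:alpha}(i), and $(h_Th_S)^{-3/2}n^{-1}\asymp n^{-1/2}\to0$ under Assumption \ref{as:band}(i). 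The main obstacle is precisely this balancing act: the fourth-moment bound blows up like $(h_Th_S)^{-3/2}$ as the bandwidth shrinks, and the argument only closes because the bandwidth rate in Assumption \ref{as:band}(i) and the mixing summability in Assumption \ref{as:alpha}(i) are jointly calibrated so that the product still vanishes. Combining the diagonal and off-diagonal bounds gives $\Var(M_n^{(a,b)})=o(1)$, which together with the mean computation completes the proof.
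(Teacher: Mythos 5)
Your proposal is correct and follows essentially the same route as the paper's proof: blockwise reduction to terms indexed by kernel-weighted powers $((T_i-t)/h_T)^a((S_i-s)/h_S)^b$, a change-of-variables computation of the mean using $\varphi_0^1=1$, $\varphi_1^1=0$ and Assumption \ref{as:omega_f}, and a variance bound splitting diagonal terms ($O((nh_Th_S)^{-1})$) from covariances handled by Davydov's inequality with the inherited mixing coefficient $\alpha^\dagger$, shell-counting via Lemma A.1 of \cite{jenish2009central}, and the summability in Assumption \ref{as:alpha}(i), yielding $O(n^{-1/2})$ under $h_T,h_S\asymp n^{-1/6}$. The only differences are bookkeeping (you keep the $(h_Th_S)^{-1}$ normalization inside the summands, so your $L_4$ bound $(h_Th_S)^{-3/4}$ matches the paper's $(h_Th_S)^{1/4}$ for the un-normalized terms, and you use a first-order expansion where the paper invokes bounded convergence), which do not change the argument.
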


\begin{proof}

For $(a,b) \in \{0,1,2\}$ such that $a + b \le 2$, by Assumption \ref{as:mixing}(ii), \ref{as:omega_f}, and \ref{as:kernel}, the bounded convergence theorem gives
\begin{align*}
    & \frac{1}{n}\sum_{i = 1}^n \bbE\left[  X_i X_i^\top W_{i}(t,s) \left( \frac{T_i - t}{h_T} \right)^a  \left( \frac{S_i - s}{h_T} \right)^b  \right] \\
    & = \frac{1}{n h_T h_S}\sum_{i = 1}^n \int_{\mathcal{TS}} \Omega_{1,i}(t',s') W\left(\frac{t' - t}{h_T} \right) W\left(\frac{s' - s}{h_S} \right) \left( \frac{t' - t}{h_T} \right)^a  \left( \frac{s' - s}{h_T} \right)^b f_i(t',s')\mathrm{d}t'\mathrm{d}s' \\
    & = \frac{1}{n}\sum_{i = 1}^n \int\int \Omega_{1,i}(t + h_T \phi_1, s + h_S \phi_2) W\left(\phi_1 \right) W\left(\phi_2 \right) \phi_1^a \phi_2^b f_i(t + h_T \phi_1, s + h_S \phi_2)\mathrm{d}\phi_1\mathrm{d}\phi_2 \to \ol{\Omega}_{1}(t,s) \varphi_a^1 \varphi_b^1
\end{align*}
as $n \to \infty$.
Note that by Assumption \ref{as:kernel}, $\varphi_0^1 = 1$ and $\varphi_1^1 = 0$ hold.

Next, for $1 \le l,k \le dx$, letting $W_{i,t,a} = W((T_i -t)/h_T)((T_i - t)/h_T)^a$ and $W_{i,s,b} = W((S_i -s)/h_S)((S_i - s)/h_S)^b$ for simplicity,
\begin{align*}
    \Var\left(\frac{1}{n h_T h_S}\sum_{i = 1}^n X_{l,i} X_{k,i} W_{i,t,a} W_{i,s,b}\right)
    & = \frac{1}{n^2 h_T^2 h_S^2}\sum_{i = 1}^n \Var\left(X_{l,i} X_{k,i} W_{i,t,a} W_{i,s,b}\right) \\
    & \quad + \frac{1}{n^2 h_T^2 h_S^2}\sum_{i = 1}^n\sum_{j \neq i}^n \Cov\left(X_{l,i} X_{k,i} W_{i,t,a} W_{i,s,b}, X_{l,j} X_{k,j} W_{j,t,a} W_{j,s,b}\right).
\end{align*}
For the first term,
\begin{align*}
    & \frac{1}{n^2 h_T^2 h_S^2}\sum_{i = 1}^n \Var\left(X_{l,i} X_{k,i} W_{i,t,a} W_{i,s,b}\right) \\ 
    & \le \frac{1}{n^2 h_T^2 h_S^2}\sum_{i = 1}^n \bbE( (X_{l,i} X_{k,i})^2 W_{i,t,a}^2 W_{i,s,b}^2 ) \\
    & \le \frac{c}{n^2 h_T^2 h_S^2}\sum_{i = 1}^n \int_{\mathcal{TS}}W\left(\frac{t' - t}{h_T} \right)^2\left| \frac{t' - t}{h_T} \right|^{2a} W\left(\frac{s' - s}{h_S} \right)^2 \left| \frac{s' - s}{h_S} \right|^{2b} f_i(t', s')\mathrm{d}t'\mathrm{d}s' \\
    & = \frac{c}{n^2 h_T h_S}\sum_{i = 1}^n  \int\int W\left(\phi_1 \right)^2 |\phi_1|^{2a} W\left(\phi_2 \right)^2 |\phi_2|^{2b} f_i(t + h_T\phi_1, s + h_S\phi_2)\mathrm{d}\phi_1\mathrm{d}\phi_2 = O((nh_T h_S)^{-1}).
\end{align*}
For the second term, note that $\{X_{l,i} X_{k,i} W_{i,t,a} W_{i,s,b}\}$ is a uniformly bounded $\alpha$-mixing process with the mixing coefficient $\alpha^\dagger$ in \eqref{eq:mixing2}.
Moreover, 
\begin{align*}
    \left\| X_{l,i} X_{k,i} W_{i,t,a} W_{i,s,b} \right\|_4^4
    & \le c \int_{\mathcal{TS}} \left| W\left(\frac{t' - t}{h_T} \right) \right|^4 \left| W\left(\frac{s' - s}{h_S} \right) \right|^4  \left| \frac{t' - t}{h_T} \right|^{4a}  \left| \frac{s' - s}{h_T} \right|^{4b} f_i(t',s')\mathrm{d}t'\mathrm{d}s' \\
    & = O(h_T h_S).
\end{align*}
Thus, by Davydov's inequality,
\begin{align}\label{eq:cov_davydov}
    \begin{split}
    & \sum_{j \neq i}^n \left| \Cov\left(X_{l,i} X_{k,i} W_{i,t,a} W_{i,s,b}, X_{l,j} X_{k,j} W_{j,t,a} W_{j,s,b}\right) \right|\\
    & = \sum_{r = 1}^\infty \sum_{1 \le j \le n: \: \Delta(i, j) \in [r, r + 1)}\left|  \Cov\left(X_{l,i} X_{k,i} W_{i,t,a} W_{i,s,b}, X_{l,j} X_{k,j} W_{j,t,a} W_{j,s,b}\right)\right| \\
    & \le c \sum_{r = 1}^\infty \sum_{1 \le j \le n: \: \Delta(i,j) \in [r, r + 1)} \alpha^\dagger(1,1,r)^{1/2}\left\| X_{l,i} X_{k,i} W_{i,t,a} W_{i,s,b} \right\|_4 \left\| X_{l,j} X_{k,j} W_{j,t,a} W_{j,s,b} \right\|_4 \\
    & \le O(\sqrt{h_T h_S}) \sum_{r = 1}^\infty r^{d-1}\alpha^\dagger(1,1,r)^{1/2} \\
    & \le O(\sqrt{h_T h_S}) \sum_{r = 1}^{2 \ol{\Delta}} r^{d-1} + O(\sqrt{h_T h_S}) \sum_{r = 1}^\infty (r + 2\ol{\Delta})^{d-1} \hat\alpha(r)^{1/2} = O(\sqrt{h_T h_S}),
    \end{split}
\end{align}
by Assumption \ref{as:alpha}(i).\footnote{
    Comparing the general terms of the sequences $\{r^{e + d - 1}\hat\alpha(r)^{1/2}\}$ and $\{(r + 2\ol{\Delta})^{d - 1}\hat\alpha(r)^{1/2} \}$, we have 
    \begin{align*}
        \frac{r^{e + d - 1}\hat\alpha(r)^{1/2}}{(r + 2\ol{\Delta})^{d - 1}\hat\alpha(r)^{1/2}}
        & = \left( \frac{r}{r + 2\ol{\Delta}} \right)^{e + d - 1}(r + 2\ol{\Delta})^e \to \infty
    \end{align*}
    as $r \to \infty$.
} 
Hence, $\Var((n h_T h_S)^{-1}\sum_{i = 1}^n X_{l,i} X_{k,i} W_{i,t,a} W_{i,s,b}) = O((n h_T^{3/2} h_S^{3/2})^{-1}) = O(n^{-1/2})$ by Assumption \ref{as:band}(i), for all $a$ and $b$.
Then, the result follows from Chebyshev's inequality.
\end{proof}


\begin{lemma}\label{lem:bias}
    Suppose that Assumptions \ref{as:D_space}, \ref{as:mixing}, \ref{as:series_approximation}, \ref{as:omega_f}--\ref{as:band}(i), and \ref{as:alpha}(i) hold.
    Then,
    \begin{align*}
        \bm{H}_{1,n} = \sqrt{n h_T h_S}\left( \frac{\varphi_2^1}{2}[h_T^2 \ddot \beta_{TT}(t,s) + h_S^2 \ddot \beta_{SS}(t,s) ] + o_P(h_T^2 + h_S^2)\right) + o_P(1).
    \end{align*}
\end{lemma}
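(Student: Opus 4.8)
The plan is to exploit the block structure already established in Lemma \ref{lem:kernel_LLN} to reduce $\bm{H}_{1,n}$ to the analysis of a single averaged vector, then to split that vector into its deterministic mean and a stochastic fluctuation. Write $\hat{\bm{M}}_n \coloneqq \frac{1}{n}\sum_{i=1}^n X_i(t,s)X_i(t,s)^\top W_i(t,s)$ and $\bm{a}_n \coloneqq \frac{1}{n}\sum_{i=1}^n X_i(t,s)X_i^\top\gamma_i(t,s)W_i(t,s)$, so that $\bm{H}_{1,n} = \sqrt{nh_Th_S}\,\mathbb{S}_{dx}\hat{\bm{M}}_n^{-1}\bm{a}_n$. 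By Lemma \ref{lem:kernel_LLN} and the positive definiteness of $\ol{\Omega}_1(t,s)$ (Assumption \ref{as:omega_f}(ii)), one has $\hat{\bm{M}}_n^{-1} = \bm{M}^{-1} + o_P(1)$, where $\bm{M}$ is the block-diagonal limit, so that $\mathbb{S}_{dx}\bm{M}^{-1} = [\ol{\Omega}_1(t,s)^{-1}, \bm{0}_{dx\times dx}, \bm{0}_{dx\times dx}]$.

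Next I would compute $\bar{\bm{a}}_n \coloneqq \bbE[\bm{a}_n]$. Since $\beta$ is twice continuously differentiable (Assumption \ref{as:series_approximation}), on the support of $W_i(t,s)$ a second-order Taylor expansion gives $\gamma_i(t,s) = \frac12\ddot\beta_{TT}(t,s)(T_i-t)^2 + \ddot\beta_{TS}(t,s)(T_i-t)(S_i-s) + \frac12\ddot\beta_{SS}(t,s)(S_i-s)^2 + o(h_T^2+h_S^2)$ uniformly. Conditioning on $(T_i,S_i)$ through $\Omega_{1,i}(t,s)$, applying the change of variables $t'=t+h_T\phi_1$, $s'=s+h_S\phi_2$, and invoking the continuity of $\Omega_{1,i}$ and $f_i$ (Assumption \ref{as:omega_f}(i)), the leading block of $\bar{\bm{a}}_n$ becomes $\ol{\Omega}_{1,n}(t,s)\big[\frac{\varphi_2^1}{2}(h_T^2\ddot\beta_{TT}(t,s)+h_S^2\ddot\beta_{SS}(t,s))\big]+o(h_T^2+h_S^2)$, the cross term dropping out because $\varphi_1^1=0$. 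For the second and third blocks, the extra odd factor $\phi_1$ or $\phi_2$ forces every surviving moment to contain $\varphi_1^1=0$ or $\varphi_3^1=0$, so those blocks are $o(h_T^2+h_S^2)$. Because $\bm{M}^{-1}$ is exactly block-diagonal, $\mathbb{S}_{dx}\bm{M}^{-1}\bar{\bm{a}}_n = \frac{\varphi_2^1}{2}(h_T^2\ddot\beta_{TT}(t,s)+h_S^2\ddot\beta_{SS}(t,s)) + o(h_T^2+h_S^2)$, precisely the claimed bias.

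I then must show the fluctuation is negligible, i.e. $\sqrt{nh_Th_S}\,\mathbb{S}_{dx}\hat{\bm{M}}_n^{-1}(\bm{a}_n-\bar{\bm{a}}_n)=o_P(1)$. Since $\mathbb{S}_{dx}\hat{\bm{M}}_n^{-1}=O_P(1)$, it suffices to bound the variance of each entry of $\bm{a}_n$. Writing that entry as $\frac{1}{nh_Th_S}\sum_i\tilde\xi_i$, with $\tilde\xi_i$ a bounded function of $(X_i,T_i,S_i)$ carrying the factor $\gamma_i=O(h_T^2+h_S^2)$ on the kernel support, the same change of variables gives $\bbE[\tilde\xi_i^2]=O(h_Th_S(h_T^2+h_S^2)^2)$ and $\|\tilde\xi_i\|_4=O((h_Th_S)^{1/4}(h_T^2+h_S^2))$. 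The diagonal part contributes $O((h_T^2+h_S^2)^2/(nh_Th_S))$; for the off-diagonal part I would reuse verbatim the Davydov estimate of Lemma \ref{lem:kernel_LLN} — the process is $\alpha$-mixing with coefficient $\alpha^\dagger$ in \eqref{eq:mixing2}, and Assumption \ref{as:alpha}(i) renders $\sum_r r^{d-1}\alpha^\dagger(1,1,r)^{1/2}=O(1)$, yielding $O((h_T^2+h_S^2)^2/(n(h_Th_S)^{3/2}))$. Multiplying by $nh_Th_S$ gives variance $O((h_T^2+h_S^2)^2/\sqrt{h_Th_S})$, which under $h_T,h_S\asymp n^{-1/6}$ (Assumption \ref{as:band}(i)) equals $O(n^{-1/2})\to0$, so Chebyshev delivers the $o_P(1)$.

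Finally I would assemble the pieces: $\bm{H}_{1,n}=\sqrt{nh_Th_S}\,\mathbb{S}_{dx}\hat{\bm{M}}_n^{-1}\bar{\bm{a}}_n + o_P(1)$, and the first term is $\sqrt{nh_Th_S}(\mathbb{S}_{dx}\bm{M}^{-1}+o_P(1))\bar{\bm{a}}_n$; since $\bar{\bm{a}}_n=O(h_T^2+h_S^2)$ and $\sqrt{nh_Th_S}(h_T^2+h_S^2)=O(1)$ at the optimal bandwidth, the $o_P(1)$ perturbation of the inverse contributes only $o_P(1)$, producing the stated expansion. The main obstacle I anticipate is the third step: making the variance of $\bm{a}_n$ small enough to survive the $\sqrt{nh_Th_S}$ inflation, which hinges on tracking the exact powers of $h_T$ and $h_S$ through the change of variables and on the Davydov/mixing bound — essentially the same delicate bookkeeping as in Lemma \ref{lem:kernel_LLN}, now aided by the extra $O(h_T^2+h_S^2)$ decay supplied by the Taylor remainder $\gamma_i$.
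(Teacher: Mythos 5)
Your proposal is correct and follows essentially the same route as the paper's proof: reduce the problem via Lemma \ref{lem:kernel_LLN} and Assumption \ref{as:omega_f}(ii) to the numerator sum, Taylor-expand $\gamma_i(t,s)$, evaluate the expectation by the change of variables $t'=t+h_T\phi_1$, $s'=s+h_S\phi_2$ using kernel symmetry ($\varphi_1^1=\varphi_3^1=0$) to isolate the $\frac{\varphi_2^1}{2}[h_T^2\ddot\beta_{TT}+h_S^2\ddot\beta_{SS}]$ bias, and kill the fluctuation by the same Davydov/mixing variance bound as in Lemma \ref{lem:kernel_LLN} followed by Chebyshev. The only difference is organizational (you split $\bm{a}_n$ into mean plus fluctuation, while the paper splits by Taylor terms $\bm{H}_{11,n},\bm{H}_{12,n},\bm{H}_{13,n}$ and treats each separately), and your explicit tracking of the $O(h_T^2+h_S^2)$ factor through the variance bound is a valid filling-in of what the paper leaves as "the same argument as in Lemma \ref{lem:kernel_LLN}".
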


\begin{proof}

    In view of Assumption \ref{as:omega_f}(ii) and Lemma \ref{lem:kernel_LLN}, it suffices to show that 
    \begin{align*}
        & \frac{\sqrt{h_T h_S}}{\sqrt{n}}\sum_{i=1}^n X_i(t,s) X_i^\top \gamma_i(t,s)W_{i}(t,s) \\ 
        & \quad = \sqrt{n h_T h_S} \left[ \ol{\Omega}_{1,n}(t,s)\left( \frac{\varphi_2^1}{2}[h_T^2 \ddot \beta_{TT}(t,s) + h_S^2 \ddot \beta_{SS}(t,s) ] \right) + o_P(h_T^2 + h_S^2)\right] + o_P(1).
    \end{align*}
    For $1 \le l \le dx$ and $i$ such that $|T_i - t| \le c_W h_T$ and $|S_i - s| \le c_W h_S$, by Assumption \ref{as:series_approximation},
    \begin{align*}
        \beta_l(T_i, S_i)
        & = \beta_l(t,s) + \dot \beta_{l,T}(t,s) (T_i - t) + \dot \beta_{l,S}(t,s) (S_i - s)\\
        & \quad + \frac{1}{2} \left\{ (T_i - t)^2 \ddot \beta_{l,TT}(t,s) + 2(T_i - t)(S_i - s) \ddot \beta_{l,TS}(t,s) + (S_i-s)^2 \ddot \beta_{l,SS}(t,s) \right\} + o(h_T^2 + h_S^2).
    \end{align*}
    Therefore,
    \begin{align*}
        \gamma_i(t,s) = \frac{1}{2} \left\{ (T_i - t)^2 \ddot \beta_{TT}(t,s) + 2(T_i - t)(S_i - s) \ddot \beta_{TS}(t,s) + (S_i-s)^2 \ddot \beta_{SS}(t,s) \right\} + o(h_T^2 + h_S^2),
    \end{align*}
    and, by Assumption \ref{as:band}(i),
    \begin{align*}
        & \frac{\sqrt{h_T h_S}}{\sqrt{n}}\sum_{i=1}^n X_i(t,s) X_i^\top \gamma_i(t,s)W_{i}(t,s) \\
        & = \frac{h_T^2}{2\sqrt{n h_T h_S}} \sum_{i = 1}^n W\left(\frac{T_i - t}{h_T}\right)W\left(\frac{S_i - s}{h_S}\right) \left(\frac{T_i - t}{h_T}\right)^2 \left(\begin{array}{c} 1 \\ (T_i - t)/h_T \\ (S_i - s)/h_S \end{array}\right) \otimes X_i X_i^\top \ddot \beta_{TT}(t,s) \\
        & \quad + \frac{h_T h_S}{\sqrt{n h_T h_S}} \sum_{i = 1}^n W\left(\frac{T_i - t}{h_T}\right)W\left(\frac{S_i - s}{h_S}\right) \left(\frac{T_i - t}{h_T}\right) \left(\frac{S_i - s}{h_S}\right) \left(\begin{array}{c} 1 \\ (T_i - t)/h_T \\ (S_i - s)/h_S \end{array}\right) \otimes X_i X_i^\top \ddot \beta_{TS}(t,s) \\
        & \quad + \frac{h_S^2}{2\sqrt{n h_T h_S}} \sum_{i = 1}^n W\left(\frac{T_i - t}{h_T}\right)W\left(\frac{S_i - s}{h_S}\right) \left(\frac{S_i - s}{h_S}\right)^2 \left(\begin{array}{c} 1 \\ (T_i - t)/h_T \\ (S_i - s)/h_S \end{array}\right) \otimes X_i X_i^\top \ddot \beta_{SS}(t,s) + o_P(1) \\
        & \eqqcolon \bm{H}_{11,n} + \bm{H}_{12,n} + \bm{H}_{13,n} + o_P(1), \text{ say}.
    \end{align*}
    For $\bm{H}_{11,n}$, by Assumption \ref{as:omega_f}(i),
    \begin{align*}
        & \bbE \bm{H}_{11,n} \\
        & = \frac{h_T^2}{2\sqrt{n h_T h_S}} \sum_{i = 1}^n \int_{\mathcal{TS}}W\left(\frac{t' - t}{h_T}\right)W\left(\frac{s' - s}{h_S}\right)\left(\frac{t' - t}{h_T}\right)^2 \left(\begin{array}{c} 1 \\ (t' - t)/h_T \\ (s' - s)/h_S \end{array}\right) \otimes \Omega_{1,i}(t',s')\ddot \beta_{TT}(t,s) f_i(t',s') \mathrm{d}t' \mathrm{d}s' \\
        & = \frac{h_T^2\sqrt{n h_T h_S}}{2n} \sum_{i = 1}^n \int\int W\left(\phi_1\right)W\left(\phi_2\right)\left(\begin{array}{c}  \phi_1^2 \\ \phi_1^3 \\ \phi_1^2 \phi_2 \end{array}\right) \otimes \Omega_{1,i}(t + h_T \phi_1, s + h_S \phi_2) f_i(t + h_T \phi_1, s + h_S \phi_2) \ddot \beta_{TT}(t,s) \mathrm{d}\phi_1 \mathrm{d}\phi_2 \\
        & = \frac{h_T^2\sqrt{n h_T h_S}}{2} \left(\begin{array}{c}  \ol{\Omega}_{1,n}(t,s) \ddot \beta_{TT}(t,s) \varphi_2^1  \\ \mathbf{0}_{dx \times dx} \\ \mathbf{0}_{dx \times dx} \end{array}\right) + h_T^2 \sqrt{n h_T h_S} O(h_T + h_S).
    \end{align*}
    Similarly, we can find that $\bbE \bm{H}_{12,n} = h_T h_S \sqrt{n h_T h_S} O(h_T + h_S)$, and
    \begin{align*}
        \bbE \bm{H}_{13,n} = \frac{h_S^2\sqrt{n h_T h_S}}{2} \left(\begin{array}{c}  \ol{\Omega}_{1,n}(t,s) \ddot \beta_{SS}(t,s)\varphi_2^1  \\ \mathbf{0}_{dx \times dx} \\ \mathbf{0}_{dx \times dx} \end{array}\right) + h_S^2 \sqrt{n h_T h_S} O(h_T + h_S). 
    \end{align*}
    Further, by the same argument as in the proof of Lemma \ref{lem:kernel_LLN}, we can easily show that $\Var\left(\bm{H}_{11,n}\right)$, $\Var\left(\bm{H}_{12,n}\right)$, and $\Var\left( \bm{H}_{13,n}  \right)$ are all $o(1)$.
    With Assumption \ref{as:band}(i), the lemma follows from Chebyshev's inequality. 
\end{proof}


\begin{lemma}\label{lem:matrix_LLN2}
    Suppose that Assumptions \ref{as:D_space}, \ref{as:mixing}, \ref{as:basis}(i)--(ii), \ref{as:eigen}, \ref{as:1st_stage}, \ref{as:conv_rate}(i), \ref{as:kernel}--\ref{as:eigen2}, and \ref{as:alpha}(i) hold.
    Then, 
\begin{description}
    \item[(i)] $\left\|\bm{P}_{n,UV}^\top \bm{I}_n(t,s) \bm{P}_{n,UV} /n - \bbE[ \bm{P}_{n,UV}^\top \bm{I}_n(t,s) \bm{P}_{n,UV}/n] \right\| = O_P(K_{UV}/n^{1/4}) = o_P(1)$; 
    \item[(ii)] $\left\| \hat{\bm{P}}_{n,UV}^\top \bm{I}_n(t,s) \hat{\bm{P}}_{n,UV} /n -\bbE[ \bm{P}_{n,UV}^\top \bm{I}_n(t,s) \bm{P}_{n,UV}/n] \right\| = O_P(\zeta_\dagger n^{(1/6)-\nu}) + O_P(K_{UV}/n^{1/4}) = o_P(1)$, where  $\hat{\bm{P}}_{n,UV} \coloneqq (\ol{P}_{UV}(\hat U_1, \hat V_1), \ldots, \ol{P}_{UV}(\hat U_n, \hat V_n))^\top$;
    \item[(iii)] $\left\| \bm{P}_{n,TS}^\top \bm{I}_n(t,s) \bm{P}_{n,TS} /n - \bbE[ \bm{P}_{n,TS}^\top \bm{I}_n(t,s) \bm{P}_{n,TS}/n] \right\| = O_P(K_{TS}/n^{1/4}) = o_P(1)$.
\end{description}
\end{lemma}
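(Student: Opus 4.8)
The plan is to treat parts (i) and (iii) in parallel --- they are the ``oracle'' statements in which the basis is evaluated at the true $(U_i,V_i)$ (respectively the observed $(T_i,S_i)$) --- and then to obtain (ii) by layering the estimation error of $(\hat U_i,\hat V_i)$ on top of (i). Throughout, the only genuinely new feature relative to Lemmas \ref{lem:matrix_LLN} and \ref{lem:kernel_LLN} is the kernel weight $\bm{I}_n(t,s)$, whose diagonal entries carry the blow-up factor $(h_Th_S)^{-1} \asymp n^{1/3}$ under Assumption \ref{as:band}(i); the basis functions themselves are uniformly bounded by Assumption \ref{as:basis}(i), and both $\{\ol{P}_{UV}(U_i,V_i)\}$ and $\{P_{TS}(T_i,S_i)\}$ are $\alpha$-mixing with coefficient $\alpha^\dagger$ as in \eqref{eq:mixing2}.

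For (i), I would expand the squared Frobenius norm entrywise as $\bbE\| \bm{P}_{n,UV}^\top \bm{I}_n(t,s)\bm{P}_{n,UV}/n - \bbE[\cdot]\|^2 = \sum_{k_1,k_2} \Var( n^{-1}\sum_i \psi_i^{(k_1,k_2)})$, where $\psi_i^{(k_1,k_2)} \coloneqq \ol{P}_{UV}^{(k_1)}(U_i,V_i)\ol{P}_{UV}^{(k_2)}(U_i,V_i)\mathbf{1}_i(t,s)/(h_Th_S)$, and split each variance into a diagonal part and an off-diagonal part exactly as in the proof of Lemma \ref{lem:matrix_LLN}(i). The diagonal part is controlled by $\bbE[(\psi_i^{(k_1,k_2)})^2]\le c\,\bbE[\mathbf{1}_i(t,s)]/(h_Th_S)^2 = O((h_Th_S)^{-1})$, since the bounded joint density of $(T_i,S_i)$ gives $\bbE[\mathbf{1}_i(t,s)] = O(h_Th_S)$; this contributes $O(K_{UV}^2/(nh_Th_S)) = O(K_{UV}^2 n^{-2/3})$ to the squared norm. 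The off-diagonal part is the crux: writing $n^{-1}\sum_i\psi_i^{(k_1,k_2)} = (nh_Th_S)^{-1}\sum_i \ol{P}_{UV}^{(k_1)}(U_i,V_i)\ol{P}_{UV}^{(k_2)}(U_i,V_i)\mathbf{1}_i(t,s)$, I would apply Davydov's inequality to the un-normalized summands, whose $L_4$ norm is $O((h_Th_S)^{1/4})$, so that the covariance sum behaves exactly as in \eqref{eq:cov_davydov} and converges by the summability $\sum_r r^{d-1}\alpha^\dagger(1,1,r)^{1/2}<\infty$ from Assumption \ref{as:alpha}(i); after restoring the $(h_Th_S)^{-2}$ prefactor this yields a per-entry bound $O((h_Th_S)^{-3/2}/n)=O(n^{-1/2})$ and hence $O(K_{UV}^2 n^{-1/2})$ for the squared norm. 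Since this covariance part dominates, Markov's inequality gives $O_P(K_{UV}n^{-1/4})$, which is $o_P(1)$ because $\kappa_n^4/n\to0$ forces $K_{UV}\asymp\kappa_n=o(n^{1/4})$. Part (iii) is identical with $(U_i,V_i)$ replaced by $(T_i,S_i)$ and $K_{UV}$ by $K_{TS}$; here $P_{TS}(T_i,S_i)$ and $\mathbf{1}_i(t,s)$ share the same argument, but both are bounded and the moment computations are unchanged.

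For (ii), I would first split by the triangle inequality into $\hat{\bm{P}}_{n,UV}^\top\bm{I}_n\hat{\bm{P}}_{n,UV}/n - \bm{P}_{n,UV}^\top\bm{I}_n\bm{P}_{n,UV}/n$ plus the term handled by (i), and then decompose the difference into a quadratic piece $(\hat{\bm{P}}_{n,UV}-\bm{P}_{n,UV})^\top\bm{I}_n(\hat{\bm{P}}_{n,UV}-\bm{P}_{n,UV})/n$ and a cross piece, as in the proof of Lemma \ref{lem:matrix_LLN}(ii). A mean value expansion together with Assumptions \ref{as:basis}(ii) and \ref{as:1st_stage} gives $\sup_i\|\ol{P}_{UV}(\hat U_i,\hat V_i)-\ol{P}_{UV}(U_i,V_i)\|=O_P(\zeta_\dagger n^{-\nu})$. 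For the quadratic piece I would use the crude operator-norm bound $\rho_{\max}(\bm{I}_n/n)=(nh_Th_S)^{-1}$ together with $\|\hat{\bm{P}}_{n,UV}-\bm{P}_{n,UV}\|_F^2=O_P(n\zeta_\dagger^2 n^{-2\nu})$ to obtain $O_P(\zeta_\dagger^2 n^{1/3-2\nu})=O_P((\zeta_\dagger n^{1/6-\nu})^2)$ --- this is precisely where the factor $n^{1/6}=(h_Th_S)^{-1/2}$ enters --- and for the cross piece I would factor $\bm{I}_n=\bm{I}_n^{1/2}\bm{I}_n^{1/2}$ and invoke part (i) with Assumption \ref{as:eigen2} to bound $\|\bm{I}_n^{1/2}\bm{P}_{n,UV}/\sqrt{n}\|_{\mathrm{op}}=O_P(1)$, giving $O_P(\zeta_\dagger n^{-\nu})$. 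Under Assumption \ref{as:band}(ii) (i.e. $\zeta_\dagger n^{1/6-\nu}\to0$) both pieces are $O_P(\zeta_\dagger n^{1/6-\nu})$, so combining with (i) yields the stated rate and its $o_P(1)$ conclusion.

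The hard part is the off-diagonal covariance bound in (i) and (iii): the kernel normalization inflates each $L_4$ norm by $(h_Th_S)^{-3/4}$, so one must choose the Davydov exponent $p=4$ precisely to balance this blow-up against the mixing decay --- any smaller $p$ would break the summability in Assumption \ref{as:alpha}(i), while any larger $p$ would worsen the variance rate past the target $n^{-1/4}$. The second, milder, subtlety is the bookkeeping of the $n^{1/6}$ factor in (ii), which is harmless only because Assumption \ref{as:band}(ii) is imposed.
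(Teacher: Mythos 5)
Your proposal is correct and follows essentially the same route as the paper's proof: for parts (i) and (iii), the entrywise variance/covariance decomposition with the diagonal term controlled by bounded bases and $\bbE[\mathbf{1}_i(t,s)]=O(h_Th_S)$, and the off-diagonal term controlled by Davydov's inequality at exponent four with $L_4$ norms of order $(h_Th_S)^{1/4}$ and the mixing summability from Assumption \ref{as:alpha}(i), followed by Markov's inequality; and for part (ii), the same triangle-inequality split into a quadratic piece (bounded via the crude $(nh_Th_S)^{-1}$ operator bound and the mean value expansion under Assumptions \ref{as:basis}(ii) and \ref{as:1st_stage}) and a cross piece controlled through part (i) and Assumption \ref{as:eigen2}. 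The one deviation is your claimed cross-piece rate $O_P(\zeta_\dagger n^{-\nu})$, sharper than the paper's $O_P(\zeta_\dagger n^{(1/6)-\nu})$, which silently requires the additional (valid, under local boundedness of the average density of $(T_i,S_i)$) fact that $n^{-1}\sum_{i=1}^n \mathbf{1}_i(t,s)/(h_Th_S)=O_P(1)$; either bound yields the lemma's stated rate, so this does not affect the conclusion.
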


\begin{proof}
    (i) 
    Letting $\ol{P}_{i,UV}^{(a)}$ be the $a$-th element of $\ol{P}_{UV}(U_i, V_i)$, for $1 \le l, k \le K_{UV}$, observe that
    \begin{align*}
        & \bbE \left\|  \bm{P}_{n,UV}^\top \bm{I}_n(t,s) \bm{P}_{n,UV}/n - \bbE[ \bm{P}_{n,UV}^\top \bm{I}_n(t,s) \bm{P}_{n,UV}/n] \right\|^2 \\
    	& = \frac{1}{n^2 h_T^2 h_S^2} \sum_{l = 1}^{K_{UV}} \sum_{k = 1}^{K_{UV}} \bbE\left\{\sum_{i=1}^n \left( \ol{P}_{i,UV}^{(l)} \ol{P}_{i,UV}^{(k)} \mathbf{1}_i(t,s) - \bbE[\ol{P}_{i,UV}^{(l)} \ol{P}_{i,UV}^{(k)} \mathbf{1}_i(t,s)] \right) \right\}^2 \\
		& = \frac{1}{n^2 h_T^2 h_S^2} \sum_{l =1}^{K_{UV}} \sum_{k = 1}^{K_{UV}} \sum_{i=1}^n \mathrm{Var}\left( \ol{P}_{i,UV}^{(l)} \ol{P}_{i,UV}^{(k)} \mathbf{1}_i(t,s) \right) + \frac{1}{n^2 h_T^2 h_S^2} \sum_{l = 1}^{K_{UV}}\sum_{k = 1}^{K_{UV}} \sum_{i=1}^n \sum_{j \neq i}^n \mathrm{Cov}\left(\ol{P}_{i,UV}^{(l)} \ol{P}_{i,UV}^{(k)} \mathbf{1}_i(t,s), \ol{P}_{j,UV}^{(l)} \ol{P}_{j,UV}^{(k)} \mathbf{1}_j(t,s) \right) \\
		& \le  \frac{1}{n^2 h_T^2 h_S^2} \sum_{i=1}^n  \bbE \left[  \ol{P}_{UV}(U_i, V_i)^\top \ol{P}_{UV}(U_i, V_i) \ol{P}_{UV}(U_i, V_i)^\top \ol{P}_{UV}(U_i, V_i) \mathbf{1}_i(t,s) \right] \\
        & \quad  + \frac{1}{n^2 h_T^2 h_S^2} \sum_{l = 1}^{K_{UV}}\sum_{k = 1}^{K_{UV}} \sum_{i=1}^n \sum_{j \neq i}^n \mathrm{Cov}\left(\ol{P}_{i,UV}^{(l)} \ol{P}_{i,UV}^{(k)} \mathbf{1}_i(t,s), \ol{P}_{j,UV}^{(l)} \ol{P}_{j,UV}^{(k)} \mathbf{1}_j(t,s) \right).
    \end{align*}
    Further, for the first term on the right-hand side,
    \begin{align*}
        & \bbE \left[ \ol{P}_{UV}(U_i, V_i)^\top \ol{P}_{UV}(U_i, V_i) \ol{P}_{UV}(U_i, V_i)^\top \ol{P}_{UV}(U_i, V_i) \mathbf{1}_i(t,s) \right] \\
        & = O(K_{UV}) \cdot \int_{\mathcal{TS}} \bbE \left[  \ol{P}_{UV}(U_i, V_i)^\top \ol{P}_{UV}(U_i, V_i) \mid T_i = t', S_i = s' \right] \mathbf{1}\{(t' - t)/h_T, (s' - s)/h_S \in [-c_W, c_W]\}f_i(t',s')\text{d}t'\text{d}s' \\
        & = O(K_{UV} h_T h_S) \int_{-c_W}^{c_W} \int_{-c_W}^{c_W} \bbE \left[ \ol{P}_{UV}(U_i, V_i)^\top \ol{P}_{UV}(U_i, V_i) \mid T_i = t + h_T \phi_1, S_i = s + h_S \phi_2\right]f_i(t + h_T \phi_1,s + h_S \phi_2)\text{d}\phi_1\text{d}\phi_2 \\
        & = O(K_{UV}^2h_T h_S).
    \end{align*}
    For the second term, noting that $||\ol{P}_{i,UV}^{(l)} \ol{P}_{i,UV}^{(k)} \mathbf{1}_i(t,s)||_4^4 = O(h_T h_S)$, the same calculation as in \eqref{eq:cov_davydov} yields $\sum_{j \neq i}^n |\mathrm{Cov}(\ol{P}_{i,UV}^{(l)} \ol{P}_{i,UV}^{(k)} \mathbf{1}_i(t,s), \ol{P}_{j,UV}^{(l)} \ol{P}_{j,UV}^{(k)} \mathbf{1}_j(t,s) )| = O(\sqrt{h_T h_S})$.
    Hence, the left-hand side of the above is $O(K_{UV}^2/(n h_T^{3/2} h_S^{3/2})) = O(K_{UV}^2/\sqrt{n})$, and the result follows from Markov's inequality.
    \bigskip

    (ii)
    By result (i) and the triangle inequality,
    \begin{align*}
        & \left\|\hat{\bm{P}}_{n,UV}^\top \bm{I}_n(t,s) \hat{\bm{P}}_{n,UV} /n - \bbE[ \bm{P}_{n,UV}^\top \bm{I}_n(t,s) \bm{P}_{n,UV}/n]  \right\| \\
        & \qquad \le \left\| \hat{\bm{P}}_{n,UV}^\top \bm{I}_n(t,s) \hat{\bm{P}}_{n,UV}/n -  \bm{P}_{n,UV}^\top \bm{I}_n(t,s) \bm{P}_{n,UV}/n \right\| + O_P(K_{UV}/n^{1/4}).
    \end{align*}
    Similar to \eqref{eq:Pihat_decomp},
	\begin{align*}
		& \left\| \hat{\bm{P}}_{n,UV}^\top \bm{I}_n(t,s) \hat{\bm{P}}_{n,UV}/n - \bm{P}_{n,UV}^\top \bm{I}_n(t,s) \bm{P}_{n,UV}/n \right\| \\
		& \le \left\| \left( \hat{\bm{P}}_{n,UV}^\top - \bm{P}_{n,UV}^\top \right) \bm{I}_n(t,s) \left( \hat{\bm{P}}_{n,UV} - \bm{P}_{n,UV} \right) /n \right\| + 2 \left\| \bm{P}_{n,UV}^\top  \bm{I}_n(t,s) \left( \hat{\bm{P}}_{n,UV} - \bm{P}_{n,UV} \right) /n \right\|.
	\end{align*}
    Since $||\mathbf{1}_i(t,s)[\ol{P}_{UV}(\hat U_i, \hat V_i) - \ol{P}_{UV}(U_i, V_i)] || = O_P(\zeta_\dagger n^{-\nu})$ uniformly in $i$, we have 
    \begin{align*}
        \left\| \left( \hat{\bm{P}}_{n,UV}^\top - \bm{P}_{n,UV}^\top \right) \bm{I}_n(t,s) \left( \hat{\bm{P}}_{n,UV} - \bm{P}_{n,UV} \right) /n \right\| 
        & = O_P(\zeta_\dagger^2 n^{-2\nu}/(h_T h_S)) \\
        & = O_P(\zeta_\dagger^2 n^{(1/3)-2\nu})
    \end{align*}
    and 
    \begin{align*}
        & \left\| \bm{P}_{n,UV}^\top  \bm{I}_n(t,s) \left( \hat{\bm{P}}_{n,UV} - \bm{P}_{n,UV} \right) /n \right\|^2 \\
        & = \mathrm{tr}\left\{ \left( \hat{\bm{P}}_{n,UV} - \bm{P}_{n,UV} \right)^\top \bm{I}_n(t,s) \bm{P}_{n,UV} \bm{P}_{n,UV}^\top  \bm{I}_n(t,s) \left( \hat{\bm{P}}_{n,UV} - \bm{P}_{n,UV} \right) /n^2\right\} \\
        & \le O_P(1) \cdot \mathrm{tr}\left\{ \left( \hat{\bm{P}}_{n,UV} - \bm{P}_{n,UV} \right)^\top \bm{I}_n(t,s) \left( \hat{\bm{P}}_{n,UV} - \bm{P}_{n,UV} \right) /n\right\} = O_P(\zeta_\dagger^2 n^{(1/3)-2\nu}),
    \end{align*}
    where we have used result (i) and Assumption \ref{as:eigen2} to derive the above inequality.
    This completes the proof.
    \bigskip

    (iii) The proof is analogous to that of (i) and is omitted.
\end{proof}


\begin{lemma}\label{lem:gl_diff}
    Suppose that Assumptions \ref{as:D_space}--\ref{as:eigen2}, and \ref{as:alpha}(i) hold.
    Then, $\bm{H}_{2,n} = o_P(1)$.
\end{lemma}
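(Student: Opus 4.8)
The plan is to strip off the $O_P(1)$ weighting matrix and then show that the residual kernel-weighted average of the nuisance estimation error is $o_P((n h_T h_S)^{-1/2})$. By Lemma~\ref{lem:kernel_LLN} and Assumption~\ref{as:omega_f}(ii), the matrix $\frac1n\sum_i X_i(t,s)X_i(t,s)^\top W_i(t,s)$ converges to a block-diagonal matrix with positive definite blocks $\ol\Omega_1(t,s)$, $\ol\Omega_1(t,s)\varphi_2^1$, $\ol\Omega_1(t,s)\varphi_2^1$, so its inverse is $O_P(1)$; hence $\bm H_{2,n}=o_P(1)$ reduces to showing
\[
\frac{\sqrt{h_T h_S}}{\sqrt n}\sum_{i=1}^n X_i(t,s)\,\Delta_i\, W_i(t,s)=\sqrt{n h_T h_S}\cdot\frac1n\sum_{i=1}^n X_i(t,s)\,\Delta_i\,W_i(t,s)=o_P(1),
\]
where $\Delta_i\coloneqq g(T_i,S_i)\lambda(U_i,V_i)-\hat g_n(T_i,S_i)\hat\lambda_n(\hat U_i,\hat V_i)$. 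On the kernel support $\|X_i(t,s)\|$ is uniformly bounded (Assumption~\ref{as:mixing}(ii), with $|T_i-t|/h_T,|S_i-s|/h_S\le c_W$), and $\sqrt{n h_T h_S}\asymp n^{1/3}$ by Assumption~\ref{as:band}(i), so it suffices to show $\frac1n\sum_i\|X_i(t,s)\|\,|\Delta_i|\,W_i(t,s)=o_P(n^{-1/3})$.

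I would split $\Delta_i$ into a control-variable substitution error, a $\lambda$-estimation error, and a $g$-estimation error,
\[
\Delta_i=[g-\hat g_n](T_i,S_i)\lambda(U_i,V_i)+\hat g_n(T_i,S_i)[\lambda(U_i,V_i)-\lambda(\hat U_i,\hat V_i)]+\hat g_n(T_i,S_i)[\lambda-\hat\lambda_n](\hat U_i,\hat V_i),
\]
where $\hat g_n$ is uniformly bounded with probability approaching one (being uniformly consistent for the bounded $g$ by Theorem~\ref{thm:2nd_stage}(ii)). The middle term is $O_P(n^{-\nu})$ uniformly in $i$ by Assumption~\ref{as:1st_stage} and the Lipschitz continuity of $\lambda$ (Assumption~\ref{as:series_approximation}), so after multiplying by $\sqrt{n h_T h_S}$ it is $O_P(n^{1/3-\nu})=o_P(1)$ directly from Assumption~\ref{as:band}(ii); any product-of-two-errors piece produced when replacing $\hat g_n$ by $g$ plus its error is of even higher order. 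The genuine difficulty lies in the first and third terms: the crude route of bounding $|\Delta_i|$ by the uniform rates of Theorem~\ref{thm:2nd_stage} fails, because those rates carry a factor $\sqrt{\kappa_n}$, and $n^{1/3}\sqrt{\kappa_n}(b_\mu+\tau_n^*+n^{-\nu})$ is not guaranteed to vanish under the stated assumptions (Assumption~\ref{as:misc}(iii) only yields $\kappa_n^{3/2}(b_\mu+\tau_n^*+n^{-\nu})=O(1)$, whence this product is $O(n^{1/3}/\kappa_n)$, which diverges since $\kappa_n=o(n^{1/4})$). The sup-norm rates must therefore be upgraded to integrated rates.

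My strategy for these two terms is to substitute the first-stage series representations, writing $[\lambda-\hat\lambda_n](\hat U_i,\hat V_i)$ and $[g-\hat g_n](T_i,S_i)$ as contractions of $\hat\theta_{n,g\lambda}-\theta^*_{g\lambda}$ against kernel-weighted basis matrices, and then to exploit the marginal-integration structure. Because the integration weights $P^*_{TS}$ and $\ol P^*_{UV}$ have norm $O(1)$ (Assumption~\ref{as:basis}(iii)), one tensor direction is effectively integrated out in each error term, so only the remaining localized direction survives; its (estimated) Gram matrix is controlled by Lemma~\ref{lem:matrix_LLN2} and Assumption~\ref{as:eigen2}, and the sharper integrated bound of Lemma~\ref{lem:IAE_lambda}, namely $\sqrt{\kappa_n/n}+\tau_n^*+b_\mu+n^{-\nu}$ \emph{without} the extra $\sqrt{\kappa_n}$, governs the bias part, while the variance part, being linear in $\tilde{\mathcal E}_n$, is handled by a conditional second-moment computation in the spirit of \eqref{eq:markov1} and \eqref{eq:cov_davydov}. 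Multiplying by $\sqrt{n h_T h_S}\asymp n^{1/3}$ then gives $n^{1/3}(\sqrt{\kappa_n/n}+\tau_n^*+b_\mu+n^{-\nu})\to0$ by Assumptions~\ref{as:conv_rate} and~\ref{as:band}(ii), closing the bound.

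The main obstacle, and the step I would treat most carefully, is exactly this replacement of the lossy sup-norm rates by integrated rates inside a sum localized in $(T_i,S_i)$ while the $\lambda$-error lives in $(U_i,V_i)$-space: passing from the kernel-weighted sample average to the $(U,V)$-integral requires controlling the behavior of $(U_i,V_i)$ conditional on $(T_i,S_i)\approx(t,s)$ and invoking the matrix laws of large numbers for mixing fields used earlier, and the substitution of $(\hat U_i,\hat V_i)$ for $(U_i,V_i)$ inside the basis functions introduces the derivative magnitude $\zeta_\dagger$—which is precisely why the conditions $\zeta_\dagger n^{(1/6)-\nu}\to0$ (Assumption~\ref{as:band}(ii)), $\zeta_\dagger\sqrt{\kappa_n}n^{-\nu}\to0$ (Assumption~\ref{as:conv_rate}(i)), and $\zeta_\dagger\kappa_n n^{-\nu}=O(1)$ (Assumption~\ref{as:misc}(iii)) are imposed and must be used here.
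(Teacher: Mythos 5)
Your proposal is correct and follows essentially the same route as the paper's proof: the same reduction via Lemma \ref{lem:kernel_LLN} and Assumption \ref{as:omega_f}(ii), the same three-term telescoping (up to which bounded factor multiplies which error), the same $O_P(n^{-\nu})$ treatment of the control-variable substitution term, and—crucially—the same workaround for the lossy sup-norm rates, namely representing the $g$- and $\lambda$-estimation errors through the series coefficients, isolating the $\bm{F}_{3,n}$ direction, and bounding the resulting kernel-localized quadratic forms via Lemma \ref{lem:matrix_LLN2} and Assumption \ref{as:eigen2} before concluding with Assumptions \ref{as:conv_rate}(ii) and \ref{as:band}(ii). Your explicit diagnosis of why the uniform rates of Theorem \ref{thm:2nd_stage} cannot be applied directly (the extra $\sqrt{\kappa_n}$ attached to $b_\mu + \tau_n^* + n^{-\nu}$) is exactly the difficulty the paper's argument is built to circumvent.
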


\begin{proof}
    In view of Assumption \ref{as:omega_f}(ii) and Lemma \ref{lem:kernel_LLN}, it suffices to show that 
    \begin{align*}
        \frac{\sqrt{h_T h_S}}{\sqrt{n}}\sum_{i=1}^n W_{i}(t,s) X_i(t,s) [\hat g_n(T_i, S_i) \hat \lambda_n(\hat U_i, \hat V_i) - g(T_i, S_i)\lambda(U_i, V_i) ] = o_P(1).
    \end{align*}
    Decompose
    \begin{align*}
        \frac{1}{n}\sum_{i=1}^n W_{i}(t,s) X_i(t,s) [\hat g_n(T_i, S_i) \hat \lambda_n(\hat U_i, \hat V_i) - g(T_i, S_i)\lambda(U_i, V_i) ] \eqqcolon \bm{H}_{21,n} + \bm{H}_{22,n} + \bm{H}_{23,n}, 
    \end{align*}
    where
    \begin{align*}
        \bm{H}_{21,n}
        & \coloneqq \frac{1}{n} \sum_{i = 1}^n W_{i}(t,s) X_i(t,s) \hat \lambda_n(\hat U_i, \hat V_i) [\hat g_n(T_i, S_i)  - g(T_i, S_i)]\\
        \bm{H}_{22,n}
        & \coloneqq \frac{1}{n} \sum_{i = 1}^n W_{i}(t,s) X_i(t,s) g(T_i, S_i) [\hat \lambda_n(\hat U_i, \hat V_i)  - \lambda(\hat U_i, \hat V_i)]\\
        \bm{H}_{23,n}
        & \coloneqq \frac{1}{n} \sum_{i = 1}^n W_{i}(t,s) X_i(t,s) g(T_i, S_i) [\lambda(\hat U_i, \hat V_i)  - \lambda(U_i, V_i)].
    \end{align*}
    First, it is easy to see that $\bm{H}_{23,n} = O_P(n^{-\nu})$.
    We next consider $\bm{H}_{22,n}$.
    As in the proof of Lemma \ref{lem:IAE_lambda} and Theorem \ref{thm:2nd_stage}(i), we have
    \begin{align*}
        \hat \lambda_n(u,v) - \lambda(u,v) 
        & = \ol{P}_{UV}(u, v)^\top \Gamma_{TS} \bm{F}_{3,n} + O_P\left( \sqrt{(\kappa_n \ln \kappa_n)/n}\right) + O_P(\tau_n^* + b_\mu + n^{-\nu})
    \end{align*}
    uniformly in $(u,v) \in [0,1]^2$.
    Note that $W_{i}(t,s) = W_{i}(t,s)\mathbf{1}_i(t,s)$ by Assumption \ref{as:kernel}.
    By Cauchy-Schwarz inequality, Assumptions \ref{as:basis}(iii) and \ref{as:eigen2}, and Lemma \ref{lem:matrix_LLN2}(ii),
    \begin{align*}
        & \left\| \frac{1}{n}\sum_{i = 1}^n  W_{i}(t,s) X_i(t,s) g(T_i, S_i) \ol{P}_{UV}(\hat U_i, \hat V_i)^\top \Gamma_{TS} \bm{F}_{3,n} \right\|^2 \\
        & \le \frac{h_T h_S}{n} \sum_{i = 1}^n || W_{i}(t,s) X_i(t,s) g(T_i, S_i)||^2 \cdot \frac{1}{n h_T h_S} \sum_{i = 1}^n (\mathbf{1}_i(t,s) \ol{P}_{UV}(\hat U_i, \hat V_i)^\top \Gamma_{TS} \bm{F}_{3,n})^2 \\
        & = O_P(1)\cdot \frac{1}{n h_T h_S} \sum_{i = 1}^n (\mathbf{1}_i(t,s) \ol{P}_{UV}(\hat U_i, \hat V_i)^\top \Gamma_{TS} \bm{F}_{3,n})^2\\
        & = O_P(1)\cdot  \bm{F}_{3,n}^\top \Gamma_{TS}^\top \left(\hat{\bm{P}}_{n,UV}^\top \bm{I}_n(t,s) \hat{\bm{P}}_{n,UV} /n \right)\Gamma_{TS} \bm{F}_{3,n} = O_P(1)\cdot  ||\bm{F}_{3,n}||^2 = O_P((\tau_n^*)^2+ b_\mu^2 + n^{-2\nu}).
    \end{align*}
    This proves that $\bm{H}_{22,n} = O_P( \sqrt{(\kappa_n \ln \kappa_n)/n} + \tau_n^* + b_\mu + n^{-\nu})$.

    For $\bm{H}_{21,n}$, note that $\hat \lambda_n$ is uniformly bounded in view of Theorem \ref{thm:2nd_stage}(i).
    As in the proof of Theorem \ref{thm:2nd_stage}(ii), we have
    \begin{align*}
        \hat g_n(t,s) - g(t,s) 
        & = P_{TS}(t, s)^\top \Gamma_{UV} \bm{F}_{3,n} + O_P( \sqrt{(\kappa_n \ln \kappa_n)/n}) + O_P(\tau_n^* + b_\mu + n^{-\nu}).
    \end{align*}
    Then, by the same argument as above with Assumption \ref{as:eigen2} and Lemma \ref{lem:matrix_LLN2}(iii), we have $\bm{H}_{21,n} = O_P( \sqrt{(\kappa_n \ln \kappa_n)/n} + \tau_n^* + b_\mu + n^{-\nu})$.
    Finally, the result follows from Assumptions \ref{as:conv_rate}(ii) and \ref{as:band}(ii).
\end{proof}

\begin{lemma}\label{lem:beta_normal}
    Suppose that Assumptions \ref{as:D_space}--\ref{as:alpha} hold.
    Then, we have
    \begin{align*}
        & \sqrt{n h_T h_S}\left(\tilde \beta_n(t, s) - \beta(t,s) - \frac{\varphi_2^1}{2}[h_T^2 \ddot \beta_{TT}(t,s) + h_S^2 \ddot \beta_{SS}(t,s) ]\right)\\
        & \quad  \overset{d}{\to} N\left(\mathbf{0}_{dx \times 1}, (\varphi_0^2)^2 [\ol{\Omega}_1(t,s)]^{-1} \ol{\Omega}_2(t,s) [\ol{\Omega}_1(t,s)]^{-1}\right). 
    \end{align*}
\end{lemma}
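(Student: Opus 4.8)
The plan is to work from the decomposition $\sqrt{n h_T h_S}(\tilde\beta_n(t,s) - \beta(t,s)) = \bm{H}_{1,n} + \bm{H}_{2,n} + \bm{H}_{3,n}$ established just before the statement, and to assemble the three pieces. Lemma \ref{lem:bias} shows that $\bm{H}_{1,n}$ contributes exactly the asymptotic bias $\sqrt{n h_T h_S}\frac{\varphi_2^1}{2}[h_T^2\ddot\beta_{TT}(t,s) + h_S^2\ddot\beta_{SS}(t,s)]$ up to $o_P(1)$, while Lemma \ref{lem:gl_diff} delivers the oracle property $\bm{H}_{2,n} = o_P(1)$, so that the preliminary estimation of $g$ and $\lambda$ is asymptotically negligible. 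Hence the whole task reduces to showing that the stochastic term $\bm{H}_{3,n}$ is asymptotically normal with the stated sandwich covariance.

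For $\bm{H}_{3,n}$ I would first replace the random weighting matrix by its limit. By Lemma \ref{lem:kernel_LLN} the matrix $\frac1n\sum_i X_i(t,s)X_i(t,s)^\top W_i(t,s)$ converges in probability to the block-diagonal matrix $\mathrm{diag}(\ol\Omega_1(t,s), \varphi_2^1\ol\Omega_1(t,s), \varphi_2^1\ol\Omega_1(t,s))$, which is invertible by Assumption \ref{as:omega_f}(ii). Since $\mathbb{S}_{dx}$ extracts the first $dx$ rows, the limiting inverse is block diagonal, and the full vector $\frac{\sqrt{h_T h_S}}{\sqrt n}\sum_i X_i(t,s)\tilde\epsilon_i W_i(t,s)$ is $O_P(1)$ (each block has bounded variance by the computation below), multiplying out gives $\bm{H}_{3,n} = [\ol\Omega_1(t,s)]^{-1}\bm{\Psi}_n + o_P(1)$, where $\bm{\Psi}_n \coloneqq \frac{\sqrt{h_T h_S}}{\sqrt n}\sum_{i=1}^n X_i \tilde\epsilon_i W_i(t,s)$ is the leading block; the second and third blocks, carrying the factors $(T_i-t)/h_T$ and $(S_i-s)/h_S$, are annihilated by $\mathbb{S}_{dx}$. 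It thus suffices to prove $\bm{\Psi}_n \overset{d}{\to} N(\mathbf{0}_{dx\times 1}, (\varphi_0^2)^2\ol\Omega_2(t,s))$ and conclude by Slutsky's theorem.

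I would then compute the first two moments of $\bm{\Psi}_n$. Because $X_i$ and $W_i(t,s)$ are $\mathcal{F}_N$-measurable and $\bbE[\tilde\epsilon_i\mid\mathcal{F}_N]=0$ (Assumption \ref{as:error}), we have $\bbE[\bm{\Psi}_n]=\mathbf{0}$. For the variance, the conditional independence $\tilde\epsilon_i\indep\tilde\epsilon_j\mid\mathcal{F}_N$ together with the conditional mean-zero property annihilates every off-diagonal term, leaving $\Var(\bm{\Psi}_n) = \frac{1}{n h_T h_S}\sum_i \bbE[X_i X_i^\top\tilde\epsilon_i^2\, W((T_i-t)/h_T)^2 W((S_i-s)/h_S)^2]$; a kernel change of variables identical to the one in Lemma \ref{lem:kernel_LLN}, combined with the continuity of $\Omega_{2,i}$ and $f_i$ (Assumption \ref{as:omega_f}) and the definition $\varphi_0^2 = \int W(\phi)^2\mathrm{d}\phi$, shows that this converges to $(\varphi_0^2)^2\ol\Omega_2(t,s)$.

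The core of the argument, and the main obstacle, is the central limit theorem itself. I would reduce the $dx$-dimensional statement to a scalar one via the Cram\'er--Wold device, applying the theorem to $c^\top\bm{\Psi}_n$ for arbitrary $c\in\mathbb{R}^{dx}$. The summands $c^\top X_i\tilde\epsilon_i W_i(t,s)$ form a triangular array of $\alpha$-mixing random-field variables—the dependence on $n$ entering through the bandwidths $h_T,h_S$—with mixing coefficient $\alpha^\dagger$ of \eqref{eq:mixing2}. I would invoke the CLT for mixing random fields of \cite{jenish2009central}: nondegeneracy of the limiting variance follows from Assumption \ref{as:omega_f}(ii); the fourth-moment bound of Assumption \ref{as:error} and the uniform joint-density bound of Assumption \ref{as:ij} supply the moment control needed for a Lindeberg-type condition under the $\sqrt{h_T h_S}$ normalization; and the mixing-rate requirements are precisely Assumption \ref{as:alpha}(ii)--(iii), where the summability with $d'>d$ and the exponent $\ell\in[3,4)$ are tailored to the Jenish--Prucha hypotheses for a nonstationary array. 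Verifying these conditions carefully—especially controlling the triangular-array variance blocks and the Lindeberg condition under the degenerating kernel weights—is where the real work lies. Once checked, the scalar CLT gives $c^\top\bm{\Psi}_n\overset{d}{\to} N(0,(\varphi_0^2)^2 c^\top\ol\Omega_2(t,s)c)$, and assembling $\bm{H}_{1,n}+\bm{H}_{2,n}+\bm{H}_{3,n}$ yields the lemma.
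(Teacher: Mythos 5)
Your proposal is correct and follows essentially the same route as the paper: the same decomposition $\bm{H}_{1,n}+\bm{H}_{2,n}+\bm{H}_{3,n}$, the same invocation of Lemmas \ref{lem:kernel_LLN}, \ref{lem:bias}, and \ref{lem:gl_diff}, and a Cram\'er--Wold reduction feeding into the CLT of \cite{jenish2009central}. Two execution details differ. First, you factor out the probability limit of the weighting matrix and apply Cram\'er--Wold only to the leading $dx$-dimensional block $\bm{\Psi}_n$, arguing that $\mathbb{S}_{dx}$ times the block-diagonal limit inverse annihilates the blocks carrying $(T_i-t)/h_T$ and $(S_i-s)/h_S$; the paper instead applies Cram\'er--Wold to the full three-block vector (its $\chi_i(t,s)$ uses $\bm{c}^\top X_i(t,s)$ with $\bm{c}$ of dimension $3dx$), computes the full covariance $\ol{\Phi}_{2,n}(t,s)$ including the $\varphi_2^2\varphi_0^2$ blocks, and then applies Slutsky. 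Your reduction is sound and spares you those extra blocks, at the cost of the (easy, and noted by you) step that the full score vector is $O_P(1)$. Second, and more substantively, you kill the cross-covariance terms exactly: since $T_i$, $S_i$, $U_i$, $V_i$ --- hence $X_i(t,s)$ and $W_i(t,s)$ --- are measurable with respect to $\mathcal{F}_N$, Assumption \ref{as:error} yields $\bbE[\tilde\epsilon_i\tilde\epsilon_j\mid\mathcal{F}_N]=\bbE[\tilde\epsilon_i\mid\mathcal{F}_N]\,\bbE[\tilde\epsilon_j\mid\mathcal{F}_N]=0$ for $i\neq j$, so every off-diagonal term vanishes identically. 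The paper does not exploit this: it bounds $\sum_{j\neq i}|\bbE[\chi_i(t,s)\chi_j(t,s)]|$ via Davydov's inequality with a near/far splitting sequence $\delta_n$, and this is precisely where Assumptions \ref{as:ij} and \ref{as:alpha}(i) enter its proof of this lemma. Your shortcut is valid under the stated assumptions and renders Assumption \ref{as:ij} superfluous for the variance computation, whereas the paper's mixing-based bound would remain valid if the conditional-independence part of Assumption \ref{as:error} were weakened; both yield the same limit $(\varphi_0^2)^2\ol{\Omega}_2(t,s)$, and your checklist for the Jenish--Prucha conditions (nondegeneracy from Assumption \ref{as:omega_f}(ii), moment control from the conditional fourth moments, mixing rates from Assumption \ref{as:alpha}(ii)--(iii)) matches the paper's verification.
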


\begin{proof}
    In view of Lemmas \ref{lem:kernel_LLN}, \ref{lem:bias}, and \ref{lem:gl_diff}, we only need to prove the asymptotic normality of $\bm{H}_{3,n}$.
    Let $\bm{c}$ be an arbitrary $3d_x \times 1$ vector such that $||\bm{c}|| = 1$.
    Also let 
    \begin{align*}
        \chi_i(t,s) \coloneqq \frac{1}{\sqrt{n h_T h_S}} \bm{c}^\top X_i(t,s) W\left( \frac{T_i - t}{h_T}\right) W\left( \frac{S_i - s}{h_S}\right) \tilde \epsilon_i
    \end{align*}
    so that $\sqrt{(h_T h_S)/n}\sum_{i = 1}^n  \bm{c}^\top X_i(t,s) \tilde \epsilon_i W_{i}(t,s)
    = \sum_{i = 1}^n \chi_i(t,s)$.
    Then, by Slutsky's theorem and the Cramer-Wold device, our task is further reduced to showing the asymptotic normality: 
    \begin{align*}
        \sum_{i = 1}^n \chi_i(t,s) \overset{d}{\to} N\left(0, \lim_{n \to \infty}\bm{c}^\top\ol{\Phi}_{2,n}(t,s)\bm{c}\right),
    \end{align*}
    where
    \begin{align*}
        \ol{\Phi}_{2,n}(t,s) \coloneqq \left(\begin{array}{ccc} 
            \ol{\Omega}_{2,n}(t,s) (\varphi_0^2)^2 & \bm{0}_{d_x \times d_x} & \bm{0}_{d_x \times d_x} \\
            \bm{0}_{d_x \times d_x} & \ol{\Omega}_{2,n}(t,s) \varphi_2^2 \varphi_0^2 & \bm{0}_{d_x \times d_x} \\
            \bm{0}_{d_x \times d_x} & \bm{0}_{d_x \times d_x} & \ol{\Omega}_{2,n}(t,s) \varphi_2^2 \varphi_0^2
            \end{array} \right).
    \end{align*}

    First, note that $\{\chi_i\}$ is an $\alpha$-mixing process with the mixing coefficient $\alpha^\dagger$.
    Moreover, by the law of iterated expectations, $\bbE\left[\sum_{i = 1}^n \chi_i(t,s)\right] = 0$.
    Next, we demonstrate $\Var\left(\sum_{i = 1}^n \chi_i(t,s)\right) = \bm{c}^\top\ol{\Phi}_{2,n}(t,s)\bm{c} + o(1)$ by showing $\sum_{i = 1}^n \bbE[\chi_i(t,s)^2] = \bm{c}^\top\ol{\Phi}_{2,n}(t,s)\bm{c} + o(1)$ and $\sum_{i = 1}^n\sum_{j \neq i}^n \bbE[\chi_i(t,s) \chi_j(t,s)] = o(1)$.
    Observe that
    \begin{align*}
        & \sum_{i = 1}^n \bbE[\chi_i(t,s)^2] \\
        & = \frac{1}{n h_T h_S} \sum_{i = 1}^n \bbE\left[  \bm{c}^\top \left[
            \left(\begin{array}{ccc} 
            1  &  \frac{T_i - t}{h_T} & \frac{S_i - s}{h_S} \\
            \frac{T_i - t}{h_T} &  \left(\frac{T_i - t}{h_T}\right)^2 & \left(\frac{T_i - t}{h_T}\right)\left(\frac{S_i - s}{h_S}\right) \\
            \frac{S_i - s}{h_S} & \left(\frac{T_i - t}{h_T}\right)\left(\frac{S_i - s}{h_S}\right) & \left(\frac{S_i - s}{h_S}\right)^2
            \end{array} \right) \otimes X_i X_i^\top \tilde \epsilon_i^2 \right] \bm{c} 
             W\left( \frac{T_i - t}{h_T}\right)^2 W\left( \frac{S_i - s}{h_S}\right)^2 \right] \\
        & = \frac{1}{n} \sum_{i = 1}^n \int \int \bm{c}^\top \left[
            \left(\begin{array}{ccc} 
            1  & \phi_1 & \phi_2 \\
            \phi_1 & \phi_1^2 & \phi_1 \phi_2 \\
            \phi_2 & \phi_1 \phi_2 & \phi_2^2
            \end{array} \right) \otimes \Omega_{2,i}(t + h_T \phi_1, s + h_S \phi_S) \right] \bm{c} W\left(\phi_1\right)^2 W\left(\phi_2\right)^2 f_i(t + h_T \phi_1, s + h_S \phi_2) \mathrm{d}\phi_1\mathrm{d}\phi_2 \\
        & = \bm{c}^\top\ol{\Phi}_{2,n}(t,s)\bm{c} + o(1).
    \end{align*}

    Here, by Assumptions \ref{as:mixing}(ii), \ref{as:ij}(i), and (ii), we have
    \begin{align*}
        \bbE[\chi_i(t,s) \chi_j(t,s)]
        & = \frac{1}{n h_T h_S} \int_{(\mathcal{TS})^2} \bbE[\bm{c}^\top X_i(t,s) X_j(t,s)^\top\bm{c} \tilde \epsilon_i \tilde \epsilon_j \mid T_i = t_i, S_i = s_i, T_j = t_j, S_j = s_j]   \\
        & \quad \times W\left( \frac{t_i - t}{h_T}\right) W\left( \frac{s_i - s}{h_S}\right) W\left( \frac{t_j - t}{h_T}\right) W\left( \frac{s_j - s}{h_S}\right)f_{i,j}(t_i, s_i, t_j, s_j)\mathrm{d}t_i\mathrm{d}s_i\mathrm{d}t_j\mathrm{d}s_j \\
        & \le \frac{c h_T h_S}{n} \int\int\int\int W\left( \phi_{1,i}\right) W\left( \phi_{1,j}\right) W\left(\phi_{2,i} \right) W\left( \phi_{2,j} \right)\mathrm{d}\phi_{1,i}\mathrm{d}\phi_{1,j}\mathrm{d}\phi_{2,i}\mathrm{d}\phi_{2,j}  = O((h_T h_S)/n).
    \end{align*}
    Further, observe that
    \begin{align}\label{eq:chi_qnorm}
        \begin{split}
        \left\| \chi_i(t,s) \right\|_4^4
        & = \frac{1}{n^2 h_T^2 h_S^2}\int_{\mathcal{TS}} \bbE[|\bm{c}^\top X_i(t',s')|^4 |\tilde \epsilon_i|^4 \mid T_i = t', S_i = s']W\left( \frac{T_i - t'}{h_T}\right)^4 W\left( \frac{S_i - s'}{h_S}\right)^4 f_i(t',s')\mathrm{d}t'\mathrm{d}s' \\
        & \le \frac{c}{n^2 h_T h_S} \int\int W\left( \phi_1 \right)^4 W\left( \phi_2 \right)^4 f_i(t + h_T \phi_1, s + h_S \phi_2)\mathrm{d}\phi_1\mathrm{d}\phi_2 = O((n^2 h_T h_S)^{-1})
        \end{split}
    \end{align}
    by Assumption \ref{as:error}.
    Then, by Davydov's inequality,
    \begin{align*}
        & \sum_{j \neq i}^n \left| \bbE[\chi_i(t,s) \chi_j(t,s)] \right| \\
        & = \sum_{r = 1}^\infty \sum_{1 \le j \le n: \: \Delta(i,j) \in [r, r + 1)}\left| \bbE[\chi_i(t,s) \chi_j(t,s)] \right|\\
        & = \sum_{r = 1}^{\delta_n} \sum_{1 \le j \le n: \: \Delta(i,j) \in [r, r + 1)}\left| \bbE[\chi_i(t,s) \chi_j(t,s)] \right| + \sum_{r = \delta_n + 1}^\infty \sum_{1 \le j \le n: \: \Delta(i,j) \in [r, r + 1)]}\left| \bbE[\chi_i(t,s) \chi_j(t,s)] \right|\\
        & \le \frac{c \delta_n^d h_T h_S}{n} + c \sum_{r = \delta_n + 1}^\infty \sum_{1 \le j \le n: \: \Delta(i,j) \in [r, r + 1)} \alpha^\dagger(1,1,r)^{1/2} \left\| \chi_i(t,s) \right\|_4 \left\| \chi_j(t,s) \right\|_4\\
        & \le \frac{c \delta_n^d h_T h_S}{n} + \frac{c'}{n \sqrt{h_T h_S}} \sum_{r = \delta_n + 1}^\infty \sum_{1 \le j \le n: \: \Delta(i,j) \in [r, r + 1)} \alpha^\dagger(1,1,r)^{1/2},
    \end{align*}
    where $\delta_n$ is a given sequence of integers such that $\delta_n \to \infty$ and $\delta_n^d h_T h_S = o(1)$.
    Moreover, for sufficiently large $n$ such that $\delta_n \ge 2\ol{\Delta}$,
    \begin{align*}
        \sum_{r = \delta_n + 1}^\infty \sum_{1 \le j \le n: \: \Delta(i,j) \in [r, r + 1)} \alpha^\dagger(1,1,r)^{1/2}
        & = \sum_{r = \delta_n + 1}^\infty \sum_{1 \le j \le n: \: \Delta(i,j) \in [r, r + 1)} \alpha(C \ol{\Delta}^d, C\ol{\Delta}^d,r - 2 \ol{\Delta})^{1/2} \\
        & \le \sum_{r = \delta_n + 1}^\infty \sum_{1 \le j \le n: \: \Delta(i,j) \in [r, r + 1)} [(2C \ol{\Delta}^d)^\vartheta \hat \alpha(r)]^{1/2} \\
        & \le c \sum_{r = \delta_n + 1}^\infty r^{d-1} \hat \alpha(r)^{1/2}  \\
        & \le \frac{c}{\delta_n^e} \sum_{r = \delta_n + 1}^\infty r^{e + d - 1} \hat \alpha(r)^{1/2},
    \end{align*}
    where $e$ is as given in Assumption \ref{as:alpha}(i).
    Now, we choose $\delta_n$ so that $(h_T h_S)^{-1/2}\delta_n^{-e} = O(1)$.
    Note that this is not inconsistent with $\delta_n^d h_T h_S = o(1)$ since $(h_T h_S)^{- d/(2e)}h_T h_S \to 0$ for $e > d/2$.
    Then, since $\sum_{r = \delta_n + 1}^\infty r^{e + d - 1} \hat \alpha(r)^{1/2} = o(1)$ as $n \to \infty$ under Assumption \ref{as:alpha}(i),
    \begin{align*}
        \sum_{j \neq i}^n \left| \bbE[\chi_i(t,s) \chi_j(t,s)] \right|
        & \le \frac{c \delta_n^d h_T h_S}{n} + \frac{c}{n \sqrt{h_T h_S}\delta_n^e} \sum_{r = \delta_n + 1}^\infty r^{e + d - 1} \hat \alpha(r)^{1/2} = o(n^{-1}).
    \end{align*}
    Combining these results, we have $\Var\left(\sum_{i = 1}^n \chi_i(t,s)\right) = \bm{c}^\top\ol{\Phi}_{2,n}(t,s)\bm{c} + o(1)$, as desired.

    Finally, we check the conditions for the central limit theorem of \cite{jenish2009central} (JP, hereinafter) are satisfied under the assumptions made here.
    First, Assumption 1 in JP is satisfied by Assumption \ref{as:D_space} noting the equivalence between the Euclidean metric and the maximum metric in $\mathbb{R}^d$.
    Assumptions 3(b) and 3(c) in JP are satisfied by Assumption \ref{as:mixing}(i) and \ref{as:alpha}(ii), respectively.
    Next, as shown in \eqref{eq:chi_qnorm}, $\bbE|\chi_i(t,s)|^4 = o(1)$ uniformly in $i$.
    This implies that $\lim_{k \to \infty} \sup_{1 \le i \le n} \bbE\left[ |\chi_i(t,s)|^\ell \mathbf{1}\{|\chi_i(t,s)| > k\} \right] = 0$ for some $3 \le \ell < 4$, where $\ell$ is as given in Assumption \ref{as:alpha}(iii).
    With this and Assumption \ref{as:alpha}(iii), the conditions (3) and (4) in Corollary 1 of JP are satisfied.
    This completes the proof.
\end{proof}

\begin{lemma}\label{lem:cov_estimation}
    Suppose that the assumptions in Theorem \ref{thm:catr} hold.
    If the following assumption additionally holds: $\sup_{i \in \mathcal{D}_N}\bbE[|\tilde \epsilon_i|^{8} \mid T_i, S_i] < \infty$, then we have 
    \begin{align*}
        \left\| \hat \Omega_{2,n}(t,s) - (\varphi_0^2)^2 \ol{\Omega}_2(t,s)\right\| = o_P(1).
    \end{align*}
\end{lemma}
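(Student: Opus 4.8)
The plan is to compare $\hat\Omega_{2,n}(t,s)$ with an infeasible analogue built from the true errors and then show that this analogue converges to $(\varphi_0^2)^2\ol\Omega_2(t,s)$ via a bias/variance argument that parallels the mean and variance computations of Lemma \ref{lem:kernel_LLN}. Define the infeasible estimator
\begin{align*}
    \Omega_{2,n}^{\mathrm{inf}}(t,s) \coloneqq \frac{h_T h_S}{n}\sum_{i=1}^n X_i X_i^\top \tilde\epsilon_i^2 W_i(t,s)^2,
\end{align*}
and decompose $\hat\Omega_{2,n}(t,s) - (\varphi_0^2)^2\ol\Omega_2(t,s) = [\hat\Omega_{2,n}(t,s) - \Omega_{2,n}^{\mathrm{inf}}(t,s)] + [\Omega_{2,n}^{\mathrm{inf}}(t,s) - \bbE\Omega_{2,n}^{\mathrm{inf}}(t,s)] + [\bbE\Omega_{2,n}^{\mathrm{inf}}(t,s) - (\varphi_0^2)^2\ol\Omega_2(t,s)]$, handling the replacement error, the stochastic fluctuation, and the smoothing bias in turn.

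First I would dispose of the smoothing bias. Since $h_T h_S W_i(t,s)^2 = (h_T h_S)^{-1}W((T_i-t)/h_T)^2 W((S_i-s)/h_S)^2$, the change of variables $t' = t + h_T\phi_1$, $s' = s + h_S\phi_2$, combined with the continuity of $\Omega_{2,i}$ and $f_i$ (Assumption \ref{as:omega_f}(i)), the existence of the limit $\ol\Omega_2$ (Assumption \ref{as:omega_f}(ii)), and $h_T,h_S\to 0$ (Assumption \ref{as:band}(i)), gives by bounded convergence $\bbE\Omega_{2,n}^{\mathrm{inf}}(t,s) \to \ol\Omega_2(t,s)\left(\int W(\phi)^2\mathrm{d}\phi\right)^2 = (\varphi_0^2)^2\ol\Omega_2(t,s)$, exactly as in the mean computation of Lemma \ref{lem:kernel_LLN}.

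Next, the stochastic fluctuation, which is the main obstacle and the reason for the added hypothesis. I would bound $\bbE\|\Omega_{2,n}^{\mathrm{inf}}(t,s)-\bbE\Omega_{2,n}^{\mathrm{inf}}(t,s)\|^2$ elementwise. The summands $X_iX_i^\top\tilde\epsilon_i^2 (h_T h_S)^{-1}W(\cdot)^2 W(\cdot)^2$ form an $\alpha$-mixing field with coefficient $\alpha^\dagger$ in \eqref{eq:mixing2}. The diagonal variance term only requires $\bbE[\tilde\epsilon_i^4\mid T_i,S_i]<\infty$ and is $O((nh_T h_S)^{-1})$, as in Lemma \ref{lem:kernel_LLN}. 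For the off-diagonal covariances I would apply Davydov's inequality; relative to Lemma \ref{lem:kernel_LLN} the summand now carries the extra factor $\tilde\epsilon_i^2$, so the $L_4$-norm entering the Davydov bound involves $\bbE[(\tilde\epsilon_i^2)^4\mid T_i,S_i]=\bbE[|\tilde\epsilon_i|^8\mid T_i,S_i]$, finite precisely by the added hypothesis. Apart from this, the covariance computation proceeds exactly as in \eqref{eq:cov_davydov} using Assumption \ref{as:alpha}(i), so that the total variance is $O((nh_T^{3/2}h_S^{3/2})^{-1}) = O(n^{-1/2})$ under Assumption \ref{as:band}(i), and Chebyshev's inequality gives $\Omega_{2,n}^{\mathrm{inf}}(t,s) - \bbE\Omega_{2,n}^{\mathrm{inf}}(t,s) = o_P(1)$.

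Finally, the replacement error. Writing $r_i \coloneqq \hat\epsilon_i(t,s) - \tilde\epsilon_i = X_i^\top[\beta(T_i,S_i)-\tilde\beta_n(t,s)] + [g(T_i,S_i)\lambda(U_i,V_i) - \hat g_n(T_i,S_i)\hat\lambda_n(\hat U_i,\hat V_i)]$, we have $\hat\epsilon_i^2 - \tilde\epsilon_i^2 = 2\tilde\epsilon_i r_i + r_i^2$. On the kernel support $\{W_i(t,s)\neq 0\}$ one has $|T_i-t|\le c_W h_T$ and $|S_i-s|\le c_W h_S$, so by the twice differentiability of $\beta$ (Assumption \ref{as:series_approximation}) $\beta(T_i,S_i)-\beta(t,s)=O(h_T+h_S)$; together with $\tilde\beta_n(t,s)-\beta(t,s)=o_P(1)$ (Lemma \ref{lem:beta_normal}), the uniform consistency of $\hat g_n,\hat\lambda_n$ (Theorem \ref{thm:2nd_stage}), and the control of $(\hat U_i,\hat V_i)$ exactly as in Lemma \ref{lem:gl_diff} (Assumptions \ref{as:1st_stage}, \ref{as:conv_rate}(ii), \ref{as:band}(ii)), this yields $\max_{i:\,W_i(t,s)\neq 0}|r_i| = o_P(1)$. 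The $r_i^2$ contribution is then bounded by $(\max_i r_i^2)\cdot\|(h_T h_S/n)\sum_i X_iX_i^\top W_i(t,s)^2\|$, whose second factor is $O_P(1)$ by the bias and variance steps above, hence $o_P(1)$; the cross term is bounded by $\max_i|r_i|$ times $(h_T h_S/n)\sum_i\|X_i\|^2|\tilde\epsilon_i|W_i(t,s)^2 = O_P(1)$ (using the finite first moment of $|\tilde\epsilon_i|$), hence $o_P(1)$ as well. Collecting the three pieces completes the proof; the genuinely new ingredient is the Davydov bound in the variance step, and everything else recycles the mixing-field calculations and uniform rates established in the preceding lemmas.
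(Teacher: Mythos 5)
Your proposal is correct and follows essentially the same route as the paper's proof: the same decomposition into a replacement error $\hat\Omega_{2,n}-\tilde\Omega_{2,n}$ (with the infeasible estimator built from the true $\tilde\epsilon_i$) plus a bias/variance argument for $\tilde\Omega_{2,n}-(\varphi_0^2)^2\ol\Omega_2(t,s)$, with the eighth-moment hypothesis entering exactly where you place it, namely in the $L_4$-norm of $X_{l,i}X_{k,i}\tilde\epsilon_i^2 W_{i,t,0}^2W_{i,s,0}^2$ inside the Davydov covariance bound, yielding the $O(n^{-1/2})$ total variance under Assumption \ref{as:band}(i). Your treatment of the replacement error (expanding $\hat\epsilon_i^2-\tilde\epsilon_i^2=2\tilde\epsilon_i r_i+r_i^2$ and using the kernel-support localization together with Lemma \ref{lem:beta_normal} and Theorem \ref{thm:2nd_stage}) is in fact more explicit than the paper, which leaves that step as "easily shown."
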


\begin{proof}
    Let us define $\tilde \Omega_{2,n}(t,s) \coloneqq (h_T h_S /n) \sum_{i=1}^n X_i X_i^\top \tilde \epsilon_i^2 W_{i}(t,s)^2$.
    By the triangle inequality,
    \begin{align}\label{eq:decomp_omega}
        \left\| \hat \Omega_{2,n}(t,s) - (\varphi_0^2)^2 \ol{\Omega}_2(t,s) \right\| \le \left\| \hat \Omega_{2,n}(t,s) - \tilde \Omega_{2,n}(t,s) \right\| + \left\| \tilde \Omega_{2,n}(t,s) - (\varphi_0^2)^2 \ol{\Omega}_2(t,s) \right\|.
    \end{align}
    
    Observe that
    \begin{align*}
        \hat \epsilon_i(t,s)^2 
        & = \left( X_i^\top[\beta(T_i, S_i) - \tilde \beta_n(t, s)] + g(T_i,S_i) \lambda(U_i,V_i) - \hat g_n(T_i, S_i) \hat \lambda_n(\hat U_i, \hat V_i) +  \tilde \epsilon_i \right)^2.
    \end{align*}
    In view of Theorem \ref{thm:2nd_stage}, we have $\max_{1 \le i \le n}|g(T_i,S_i) \lambda(U_i,V_i) - \hat g_n(T_i, S_i) \hat \lambda_n(\hat U_i, \hat V_i)| = o_P(1)$.
    Further, by Lemma \ref{lem:beta_normal}, $|X_i^\top[\beta(T_i, S_i) - \tilde \beta_n(t, s)]| \le c ||\beta(T_i, S_i) - \tilde \beta_n(t, s)|| = O(h_T + h_S) + o_P(1)$ for any $i$ satisfying $|T_i - t| \le c_W h_T$ and $|S_i - s| \le c_W h_S$.
    Using these results, it can be easily shown that the first term of \eqref{eq:decomp_omega} is $o_P(1)$.

    Next, it is straightforward to see that
    \begin{align*}
        \bbE[\tilde \Omega_{2,n}(t,s)]
        & = \frac{h_T h_S}{n}\sum_{i = 1}^n \bbE\left[  X_i X_i^\top \tilde \epsilon_i^2 W_{i}(t,s)^2 \right] \\
        & = \frac{1}{n h_T h_S}\sum_{i = 1}^n \int_{\mathcal{TS}} \Omega_{2,i}(t',s') W\left(\frac{t' - t}{h_T} \right)^2 W\left(\frac{s' - s}{h_S} \right)^2 f_i(t',s')\mathrm{d}t'\mathrm{d}s' 
        \to (\varphi_0^2)^2\ol{\Omega}_2(t,s).
    \end{align*}
    For $1 \le l,k \le dx$, 
    \begin{align*}
        \Var \left(\frac{1}{n h_T h_S}\sum_{i = 1}^n X_{l,i} X_{k,i} \tilde \epsilon_i^2 W_{i,t,0}^2 W_{i,s,0}^2  \right)
        & = \frac{1}{n^2 h_T^2 h_S^2}\sum_{i = 1}^n \Var\left(X_{l,i} X_{k,i} \tilde \epsilon_i^2 W_{i,t,0}^2 W_{i,s,0}^2 \right) \\
        & \quad + \frac{1}{n^2 h_T^2 h_S^2}\sum_{i = 1}^n\sum_{j \neq i}^n \Cov\left(X_{l,i} X_{k,i} \tilde \epsilon_i^2 W_{i,t,0}^2 W_{i,s,0}^2, X_{l,j} X_{k,j} \tilde \epsilon_j^2 W_{j,t,0}^2 W_{j,s,0}^2\right).
    \end{align*}
    By similar calculation as in the proof of Lemma \ref{lem:kernel_LLN}, we have $\frac{1}{n^2 h_T^2 h_S^2}\sum_{i = 1}^n \Var\left(X_{l,i} X_{k,i} \tilde \epsilon_i^2 W_{i,t,0}^2 W_{i,s,0}^2\right) = O((nh_T h_S)^{-1})$.
    Moreover, since
    \begin{align*}
         \left\| X_{l,i} X_{k,i} \tilde \epsilon_i^2 W_{i,t,0}^2 W_{i,s,0}^2 \right\|_4^4
        & = \int_{\mathcal{TS}} \bbE[| X_{l,i} X_{k,i} |^4 |\tilde \epsilon_i|^8 \mid T_i = t', S_i = s']K\left( \frac{T_i - t'}{h_T}\right)^8 K\left( \frac{S_i - s'}{h_S}\right)^8 f_i(t',s')\mathrm{d}t'\mathrm{d}s' \\
        & = O(h_T h_S)
    \end{align*}
    by the additional assumption made here, $\sum_{j \neq i}^n \left| \Cov\left(X_{l,i} X_{k,i} \tilde \epsilon_i^2 W_{i,t,0}^2 W_{i,s,0}^2, X_{l,j} X_{k,j} \tilde \epsilon_j^2 W_{j,t,0}^2 W_{j,s,0}^2\right) \right| = O(\sqrt{h_T h_S})$ by the same argument as in \eqref{eq:cov_davydov}.
    Hence, $\Var((n h_T h_S)^{-1}\sum_{i = 1}^n X_{l,i} X_{k,i} \tilde \epsilon_i^2 W_{i,t,0}^2 W_{i,s,0}^2) = O((n h_T^{3/2} h_S^{3/2})^{-1}) = O(n^{-1/2})$ by Assumption \ref{as:band}(i).
    Thus, by Chebyshev's inequality, the second term of \eqref{eq:decomp_omega} is $o_P(1)$.
\end{proof}


\section{Monte Carlo Experiments}\label{sec:MC}

In this appendix, we investigate the finite sample performance of the proposed estimator.
We consider the following data-generating process (DGP):
\begin{align*}
    Y_i 
    & = \beta_1(T_i, S_i) + X_i \beta_2(T_i, S_i) + g(T_i, S_i) \epsilon_i \\
    T_i
    & = X_i \gamma_1 + Z_i \gamma_2 + \eta(U_i),
\end{align*}
where $S_i = n_i^{-1} \sum_{j \in \mathcal{P}_i} T_j$, $X_i \overset{i.i.d.}{\sim} N(0,1)$, $Z_i \overset{i.i.d.}{\sim} N(0,1)$, $U_i \overset{i.i.d.}{\sim} \text{Uniform}[0,1]$, $\epsilon_i = \varrho \lambda(U_i, V_i) + \upsilon_i$ with $\upsilon_i \overset{i.i.d.}{\sim} N(0, 0.7^2)$, and
\begin{align*}
    \beta_1(t,s) 
    & = 0.4\sin(1.5 t) + 0.2s - 0.2 \sqrt{(t^2 + s^2)/2} \\
    \beta_2(t,s) 
    & = 0.2t - 0.4\cos(1.5 s) + 0.2 \sqrt{(t^2 + s^2)/2} \\
    \lambda(u,v) 
    & = 0.3u + 0.3v + \Phi(u)\Phi(v)[1 + 0.5\Phi(u)\Phi(v)] - c_\lambda \\
    g(t,s)
    & = 0.3 t + 0.3 s + 0.2 \cos((t + s)/2) \\
    \eta(u)
    & = \sqrt{0.5 + 2u} + \Phi(u - 0.5),
\end{align*}
with $\Phi$ being the standard normal distribution function.
We set $(\gamma_1, \gamma_2) = (1,1)$.
Here, $\varrho$ indicates the magnitude of endogeneity, which is chosen from $\varrho \in \{0.3, 1, 2\}$, and $c_\lambda$ is a constant introduced to ensure $\bbE [\lambda(U,V)] = 0$, which is computed by numerical integration for each DGP.
The network $\bm{A}_N$ is created as a ring-shape network where each individual interacts with $k$-nearest neighbors ($k/2$ neighbors in front and another $k/2$ behind).
By this design, $\{(U_i,V_i)\}$ are identically jointly distributed, so that we can use the whole sample for estimation. 
We consider two values for $k$: $k \in \{2, 6\}$, and, for each setup, we generate data for two sample sizes: $n \in \{1000, 2000\}$.

The detailed computation procedure is as follows.
(\textbf{Step. 0})
We first estimate the treatment equation using a CQR method with $L = 199$ quantile values with equal step sizes ($u_1 = 0.005$, $u_2 = 0.010$, \ldots, $u_{199} = 0.995$).
To perform the CQR, we use the \textbf{R} package \textbf{cqrReg}.
We found that 199 is sufficient for this numerical exercise.
(\textbf{Step. 1})
The next step is to estimate the approximated series regression model in \eqref{eq:series_approx}.
For the basis function, we use the B-spline basis with two internal knots for all $(P_T(t), P_S(s), P_U(u), P_V(v))$ and all sample sizes.
The locations of the knots are determined based on the empirical quantiles.
The penalty matrix $\bm{D}$ is created based on the integrated second derivatives of the univariate splines; that is, for example, the first $K_{TS} \times K_{TS}$ block of $\bm{D}$ is given by the summation of $\int_\mathcal{TS}[\ddot{P}_T(t)\otimes P_S(s)][\ddot{P}_T(t)\otimes P_S(s)]^\top\mathrm{d}t\mathrm{d}s$ and $\int_\mathcal{TS}[P_T(t)\otimes \ddot{P}_S(s)][P_T(t)\otimes \ddot{P}_S(s)]^\top\mathrm{d}t\mathrm{d}s$, where $\ddot{P}_T(t) \coloneqq \partial^2 P_T(t)/(\partial t)^2$, and $\ddot{P}_S(s)$ is similar.
The penalty tuning parameter $\tau_n$ is chosen from $\tau_n \in \{1/n,5/n,10/n\}$.
In this stage, we compute not only the penalized least squares estimator in \eqref{eq:series_pen} but also an unpenalized one.
The latter will be used to estimate $\ddot \beta_{TT}(t,s)$ and $\ddot \beta_{SS}(t,s)$, which are necessary to compute the optimal $C(t,s,x)$ as given in \eqref{eq:optimal_band}.
(\textbf{Step. 2})
Once the series estimators are obtained, we next compute the estimators of $\lambda$ and $g$ based on \eqref{eq:lambda_est} and \eqref{eq:g_est}, respectively.
Here, to speed up the computation, we use the numerical integration method based on 10000 Halton sequences to approximate the integrations involved in these estimators.
(\textbf{Step. 3})
Finally, we run the local linear regression \eqref{eq:ll_reg} with Epanechnikov kernel to obtain the final estimator of $\mathrm{CATR}(t,s,x)$.
For the choice of the bandwidths, we follow the discussion in Remark \ref{rem:opt_band}, where the unknown quantities in \eqref{eq:optimal_band} are replaced by their empirical counterparts.

In this experiment, we compare the performances of three different estimators including our proposed CATR estimator.
For the other two estimators, the one is an infeasible oracle estimator that is obtained by replacing $\hat g_n(T_i, S_i)\hat \lambda_n(\hat U_i, \hat V_i)$ in \eqref{eq:ll_reg} with its true value $g(T_i, S_i)\lambda(U_i, V_i)$.
The other is an inconsistent estimator obtained from a misspecified model, which ignores the endogeneity of $(T_i, S_i)$; that is, we set $\hat g_n(T_i, S_i)\hat \lambda_n(\hat U_i, \hat V_i) = 0$ in \eqref{eq:ll_reg}.
Using these estimators, we estimate $\mathrm{CATR}(t,s,x)$ at 100 evaluation points where $(t,s,x)$ is chosen from $t \in \{t_1, t_2, \ldots, t_{50}\}$ with equal intervals, $s \in \{s_1, s_2\}$, and $x = 1$.
Here, we set $t_1 = 0.47$ and $t_{50} = 2.90$, where these numbers correspond approximately to the 20\% and 80\% quantile values of $\{T_i\}$.
Similarly, we set $s_1 \in \{ 0.84, 1.20\}$ and $s_2 = 1.70$, where $(0.84, 1.20, 1.70)$ correspond approximately to the 20\% quantile when $k = 2$, 20\% quantile when $k = 6$, and the median of $\{S_i\}$, respectively (note that the median does not vary with $k$ by symmetricity).
The number of Monte Carlo repetitions for each scenario is set to 1000, and we evaluate the estimators in terms of the AABIAS (average absolute bias) and ARMSE (average root mean squared error):
\begin{align*}
    \mathrm{AABIAS}(s,x)
    & =  \frac{1}{50}\sum_{j = 1}^{50}\left| \frac{1}{1000}\sum_{l = 1}^{1000} \hat{\mathrm{CATR}}^{(l)}_n(t_j, s, x) - \mathrm{CATR}(t_j, s, x)\right| \\
    \mathrm{ARMSE}(s,x)
    & =  \frac{1}{50}\sum_{j = 1}^{50}\left( \frac{1}{1000}\sum_{l = 1}^{1000} \left[ \hat{\mathrm{CATR}}^{(l)}_n(t_j, s, x) - \mathrm{CATR}(t_j, s, x)\right]^2 \right)^{1/2},
\end{align*}
where the superscript $(l)$ means that it is obtained from the $l$-th replicated dataset.

The results of the simulation analysis when $s$ is at the 20\% quantile and those when $s$ is at the median are respectively reported in Tables \ref{table:MC_1} and \ref{table:MC_2}.
Overall, we can observe that the proposed estimator works satisfactorily well for all scenarios considered here.
In particular, our estimator has only negligible differences in performance compared to the infeasible oracle estimator, which is consistent with the theory.
For the choice of penalty parameter $\tau_n$, in terms of AABIAS, the estimator with a greater penalty tends to be slightly less precise, as expected, while there is almost no difference in terms of ARMSE.
When comparing the results when $k = 2$ and those when $k = 6$, we can observe that the ARMSE value tends to increase in $k$ when $s$ is at the 20\% quantile, while we can observe the opposite tendency once $s$ is at the median.
This phenomenon would be understandable since the distribution of $\{S_i\}$ becomes more concentrated around the mean (i.e., the median, by symmetricity) as $k$ increases.
Finally, as expected, as the magnitude of the endogeneity increases, the inconsistent estimator exhibits larger bias, which clearly demonstrates the importance of correctly accounting for the endogeneity.

\begin{table}[!h]
    \begin{center}
    \caption{Simulation results ($s$: 20\% quantile)}
    \small\begin{tabular}{cccc|ccccc}
            \hline \hline
     & $k$ & $n$ & $\varrho$ & $\hat{\mathrm{CATR}}_n^{\text{inf}}$ & $\hat{\mathrm{CATR}}_n^1$ & $\hat{\mathrm{CATR}}_n^5$ & $\hat{\mathrm{CATR}}_n^{10}$ & $\hat{\mathrm{CATR}}_n^{\text{ex}}$ \\
        \hline
        AABIAS & 2 & 1000 & 0.3 & 0.033  & 0.032  & 0.033  & 0.034  & 0.045  \\
     & 2 & 1000 & 1 & 0.033  & 0.035  & 0.036  & 0.037  & 0.078  \\
     & 2 & 1000 & 2 & 0.033  & 0.038  & 0.040  & 0.042  & 0.134  \\
     & 2 & 2000 & 0.3 & 0.032  & 0.031  & 0.031  & 0.031  & 0.043  \\
     & 2 & 2000 & 1 & 0.032  & 0.033  & 0.033  & 0.034  & 0.077  \\
     & 2 & 2000 & 2 & 0.032  & 0.037  & 0.037  & 0.039  & 0.133  \\
     & 6 & 1000 & 0.3 & 0.041  & 0.040  & 0.043  & 0.044  & 0.056  \\
     & 6 & 1000 & 1 & 0.041  & 0.042  & 0.047  & 0.050  & 0.093  \\
     & 6 & 1000 & 2 & 0.041  & 0.046  & 0.054  & 0.059  & 0.149  \\
     & 6 & 2000 & 0.3 & 0.038  & 0.036  & 0.037  & 0.039  & 0.052  \\
     & 6 & 2000 & 1 & 0.038  & 0.038  & 0.040  & 0.043  & 0.087  \\
     & 6 & 2000 & 2 & 0.038  & 0.040  & 0.044  & 0.048  & 0.142  \\
         \hline
    ARMSE & 2 & 1000 & 0.3 & 0.117  & 0.117  & 0.117  & 0.117  & 0.121  \\
     & 2 & 1000 & 1 & 0.117  & 0.119  & 0.119  & 0.119  & 0.145  \\
     & 2 & 1000 & 2 & 0.117  & 0.122  & 0.122  & 0.123  & 0.196  \\
     & 2 & 2000 & 0.3 & 0.085  & 0.086  & 0.086  & 0.086  & 0.091  \\
     & 2 & 2000 & 1 & 0.085  & 0.087  & 0.087  & 0.088  & 0.117  \\
     & 2 & 2000 & 2 & 0.085  & 0.091  & 0.091  & 0.092  & 0.169  \\
     & 6 & 1000 & 0.3 & 0.127  & 0.126  & 0.127  & 0.128  & 0.133  \\
     & 6 & 1000 & 1 & 0.127  & 0.128  & 0.129  & 0.131  & 0.158  \\
     & 6 & 1000 & 2 & 0.127  & 0.132  & 0.135  & 0.137  & 0.209  \\
     & 6 & 2000 & 0.3 & 0.094  & 0.094  & 0.094  & 0.095  & 0.102  \\
     & 6 & 2000 & 1 & 0.094  & 0.095  & 0.096  & 0.097  & 0.128  \\
     & 6 & 2000 & 2 & 0.094  & 0.098  & 0.099  & 0.101  & 0.178  \\
     \hline \hline
    \end{tabular}
    \label{table:MC_1}
    \end{center}
    \small
    Notes: $\hat{\mathrm{CATR}}_n^{\text{inf}}$: infeasible oracle CATR estimator; $\hat{\mathrm{CATR}}_n^a$: proposed CATR estimator with penalty $\tau_n = a/n$;  $\hat{\mathrm{CATR}}_n^{\text{ex}}$: inconsistent CATR estimator under a misspecified model that treats $(T_i, S_i)$ as exogenous.
    \normalsize
\end{table}

\begin{table}[!h]
    \begin{center}
    \caption{Simulation results ($s$: median)}
    \small\begin{tabular}{cccc|ccccc}
            \hline \hline
     & $k$ & $n$ & $\varrho$ & $\hat{\mathrm{CATR}}_n^{\text{inf}}$ & $\hat{\mathrm{CATR}}_n^1$ & $\hat{\mathrm{CATR}}_n^5$ & $\hat{\mathrm{CATR}}_n^{10}$ & $\hat{\mathrm{CATR}}_n^{\text{ex}}$ \\
        \hline
AABIAS & 2 & 1000 & 0.3 & 0.055  & 0.055  & 0.056  & 0.057  & 0.071  \\
 & 2 & 1000 & 1 & 0.055  & 0.058  & 0.059  & 0.060  & 0.108  \\
 & 2 & 1000 & 2 & 0.055  & 0.062  & 0.063  & 0.064  & 0.162  \\
 & 2 & 2000 & 0.3 & 0.044  & 0.044  & 0.044  & 0.045  & 0.060  \\
 & 2 & 2000 & 1 & 0.044  & 0.047  & 0.047  & 0.048  & 0.098  \\
 & 2 & 2000 & 2 & 0.044  & 0.051  & 0.051  & 0.052  & 0.153  \\
 & 6 & 1000 & 0.3 & 0.043  & 0.041  & 0.043  & 0.044  & 0.058  \\
 & 6 & 1000 & 1 & 0.043  & 0.042  & 0.044  & 0.046  & 0.094  \\
 & 6 & 1000 & 2 & 0.043  & 0.043  & 0.046  & 0.048  & 0.147  \\
 & 6 & 2000 & 0.3 & 0.041  & 0.039  & 0.039  & 0.040  & 0.056  \\
 & 6 & 2000 & 1 & 0.041  & 0.040  & 0.040  & 0.042  & 0.092  \\
 & 6 & 2000 & 2 & 0.041  & 0.041  & 0.042  & 0.044  & 0.144  \\
 \hline
ARMSE & 2 & 1000 & 0.3 & 0.132  & 0.132  & 0.133  & 0.133  & 0.140  \\
 & 2 & 1000 & 1 & 0.132  & 0.134  & 0.135  & 0.135  & 0.168  \\
 & 2 & 1000 & 2 & 0.132  & 0.140  & 0.140  & 0.141  & 0.218  \\
 & 2 & 2000 & 0.3 & 0.095  & 0.096  & 0.096  & 0.096  & 0.105  \\
 & 2 & 2000 & 1 & 0.095  & 0.098  & 0.098  & 0.098  & 0.134  \\
 & 2 & 2000 & 2 & 0.095  & 0.102  & 0.102  & 0.103  & 0.186  \\
 & 6 & 1000 & 0.3 & 0.116  & 0.116  & 0.116  & 0.116  & 0.123  \\
 & 6 & 1000 & 1 & 0.116  & 0.117  & 0.117  & 0.118  & 0.149  \\
 & 6 & 1000 & 2 & 0.116  & 0.120  & 0.121  & 0.122  & 0.197  \\
 & 6 & 2000 & 0.3 & 0.090  & 0.090  & 0.090  & 0.090  & 0.098  \\
 & 6 & 2000 & 1 & 0.090  & 0.091  & 0.091  & 0.092  & 0.126  \\
 & 6 & 2000 & 2 & 0.090  & 0.094  & 0.094  & 0.095  & 0.176  \\
 \hline \hline
\end{tabular}
\label{table:MC_2}
\end{center}
\small
Notes: $\hat{\mathrm{CATR}}_n^{\text{inf}}$: infeasible oracle CATR estimator; $\hat{\mathrm{CATR}}_n^a$: proposed CATR estimator with penalty $\tau_n = a/n$;  $\hat{\mathrm{CATR}}_n^{\text{ex}}$: inconsistent CATR estimator under a misspecified model that treats $(T_i, S_i)$ as exogenous.
\end{table}

\clearpage

\section{Supplementary Information for the Analysis in Section \ref{sec:empiric}}\label{app:empiric}

\subsection{Supplementary Statistical Information} \label{app:empir_table}

\begin{table}[!h]
    \footnotesize
    \begin{center}
    \caption{Descriptive Statistics}
    \begin{tabular}{l|cccc|cccc}
    \hline\hline
    Sample & \multicolumn{4}{c|}{All data: $N=1867$}  & \multicolumn{4}{c}{$k = 2$: $n = 1773$} \\
    & \multicolumn{4}{c|}{(used for treatment equation estimation)}  & \multicolumn{4}{c}{(used for CATR estimation)} \\
    Variable & Mean & Std. Dev. & Min. & Max. & Mean & Std. Dev. & Min. & Max. \\
    \hline
    \textit{crime} & \multicolumn{4}{c|}{(not used)} & 1.175  & 0.916  & 0.000  & 16.591  \\
    \textit{unempl} & 5.656  & 2.156  & 0.000  & 22.416  & 5.697  & 2.131  & 0.952  & 22.416  \\
    $\ol{\textit{unempl}}$ & \multicolumn{4}{c|}{(not used)}  & 5.721  & 1.954  & 1.510  & 18.031 \\
    \textit{density} & 1.006  & 1.921  & -3.961  & 6.597  & 1.036  & 1.938  & -3.961  & 6.597\\
    \textit{sales} & 0.288  & 0.824  & -2.704  & 6.928  & 0.305  & 0.823  & -2.704  & 6.928  \\
    \textit{elderly} & 0.246  & 0.070  & 0.085  & 0.534  & 0.245  & 0.070  & 0.091  & 0.534  \\
    \textit{single} & 0.243  & 0.083  & 0.069  & 0.687  & 0.240  & 0.081  & 0.069  & 0.637   \\
    \textit{childcare} & 0.734  & 0.512  & 0.000  & 5.587  & \multicolumn{4}{c}{(not used)} \\
\hline
\end{tabular}
\label{table:descstat}
\end{center}
\end{table}

\begin{table}[!h]
    \footnotesize
    \begin{center}
    \caption{Results of the Treatment Equation Estimation}
    \begin{tabular}{l|ccc|ccc}
    \hline\hline
    & \multicolumn{3}{c|}{Ordinary least squares regression} & \multicolumn{3}{c}{Composite quantile regression ($L = 399$)} \\
    & Coef. & Std. Err. & $t$-value & Coef. & Std. Err. & $t$-value \\
\hline
    Intercept & 3.637 & 0.290 & 12.523 &  &  &  \\
    \textit{density} & 0.591  & 0.039  & 15.058  & 0.496  & 0.026  & 19.013  \\
    \textit{sales} & -0.488  & 0.073  & -6.652  & -0.211  & 0.061  & -3.477  \\
    \textit{elderly} & 5.315  & 1.037  & 5.128  & 4.905  & 0.529  & 9.273  \\
    \textit{single} & 1.743  & 0.611  & 2.854  & -0.347  & 0.480  & -0.723  \\
    \textit{childcare} & -0.224  & 0.106  & -2.127  & -0.425  & 0.086  & -4.922  \\
\hline
\end{tabular}
\label{table:cqr}
\end{center}
\end{table}

\begin{figure}[!ht]
	\centering
	\includegraphics[bb = 0 0 548 343, width = 10cm]{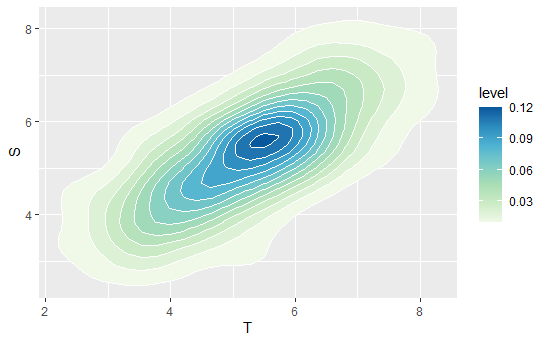}
	\caption{Joint density of $\textit{unempl}$ ($T$) and $\ol{\textit{unempl}}$ ($S$)}
	\label{fig:joint_density}
\end{figure}

\subsection{Robustness Checks} \label{app:empir_robust}

In Figure \ref{fig:catr_all}, we provide the CATR estimation results more detailed than Figure \ref{fig:CATR} in the main body, where we evaluate $\ol{\textit{unempl}}$ ($S$) at 0.2, 0.3, \ldots, 0.8 empirical quantiles, from the top row to the bottom in this order.
(Note that, different from the figure in the manuscript, the left column corresponds to rural areas, the middle is when $x$ is at the sample median, and the right is urban areas.)
In each panel, the dashed lines indicate the 95\% confidence interval.
These results clearly show that in average and urban cities, the CATR value likely increases as the unemployment rate of surrounding cities increases, whereas this tendency is not prominent in rural cities.

Figure \ref{fig:robustWmat} presents the CATR estimates based on five different interaction matrices.
The first panel (a) corresponds to Figure \ref{fig:CATR} reported in the manuscript.
Panel (b) uses the same interaction matrix definition as the one in panel (a), except that the maximum number of interacting cities is set to $k = 3$.
The other three panels are the results obtained when the interaction matrix is determined through the geographical distance between city centers (not through the adjacency).
Specifically, each element of $\bm{A}_N$ is given by $A_{i,j} = 1$ if $j$ is within the $k$-nearest-neighbor range from $i$.
We report the results when we set $k = 2$, $5$, and $10$ in panels (c), (d), and (e), respectively.
These results may corroborate our finding that, in general, the spillover effect is important especially in non-rural areas.

\begin{figure}[!ht]
	\centering
	\includegraphics[bb = 0 0 708 980, width = 14cm]{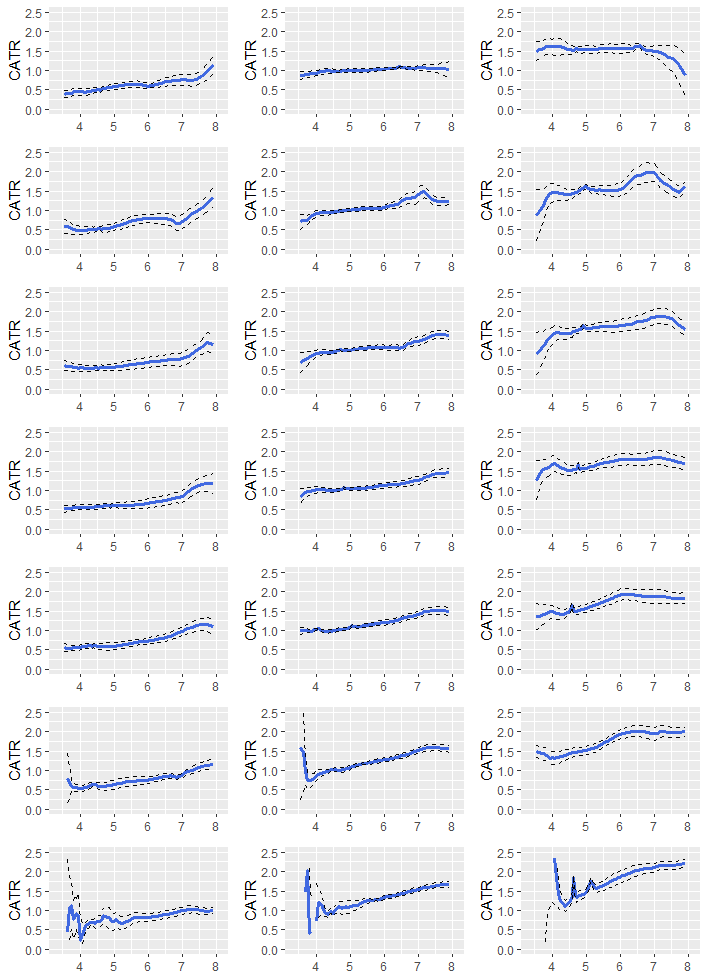}
	\caption{CATR estimation results at $S = $ 0.2, 0.3, \ldots, 0.8 quantiles.}
    \footnotesize
    (Left: $x = \text{median} \mid \textit{density} \le \text{20\% value}$, middle: $x = \text{median} \mid \text{all sample}$, right: $x = \text{median} \mid \textit{density} \ge \text{80\% value}$)
	\label{fig:catr_all}
\end{figure}

\begin{figure}[!ht]
        \begin{minipage}[b]{0.9\linewidth}
         \centering
         \includegraphics[bb = 0 0 1065 334, width = 12cm]{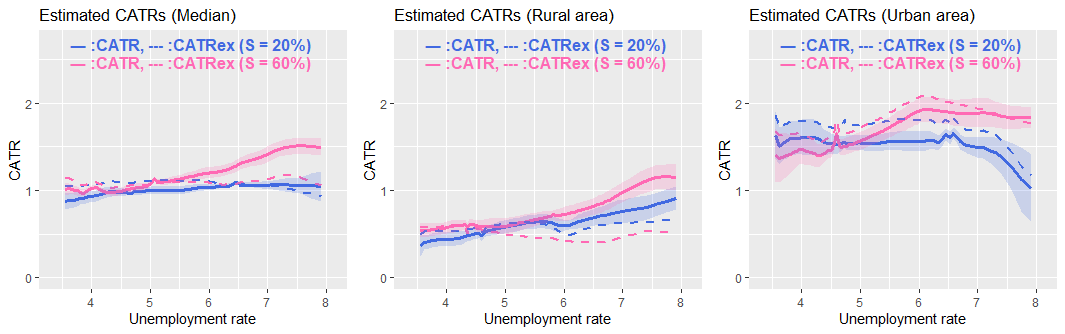}
         \subcaption{Adjacency $\bm{A}_N$ with $k = 2$}\label{W2neigh}
        \end{minipage}\\
        \begin{minipage}[b]{0.9\linewidth}
        \centering
        \includegraphics[bb = 0 0 1065 334, width = 12cm]{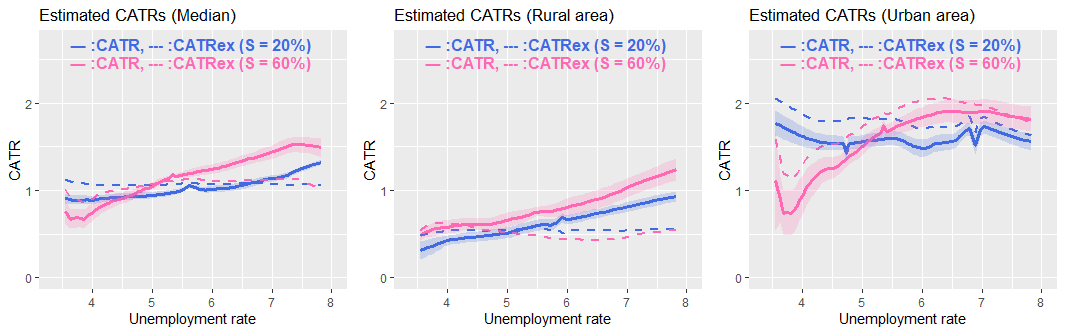}
        \subcaption{Adjacency $\bm{A}_N$ with $k = 3$}\label{W3neigh}
        \end{minipage}\\
        \begin{minipage}[b]{0.9\linewidth}
        \centering
        \includegraphics[bb = 0 0 1065 334, width = 12cm]{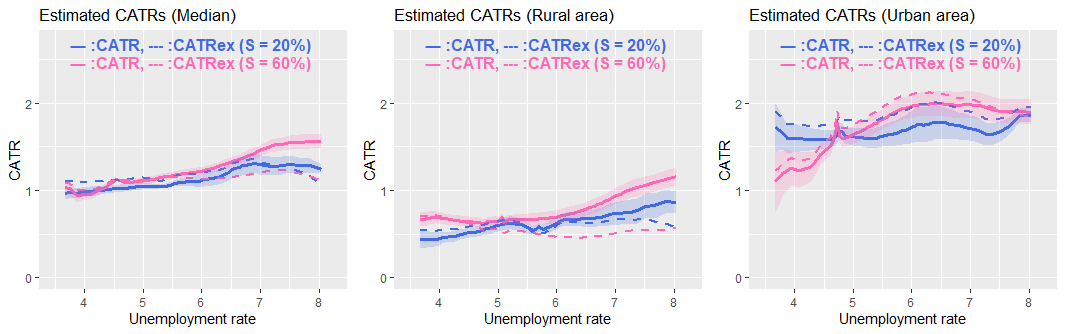}
        \subcaption{Distance $\bm{A}_N$ with $k = 2$}\label{Wd2}
        \end{minipage}\\
        \begin{minipage}[b]{0.9\linewidth}
        \centering
        \includegraphics[bb = 0 0 1065 334, width = 12cm]{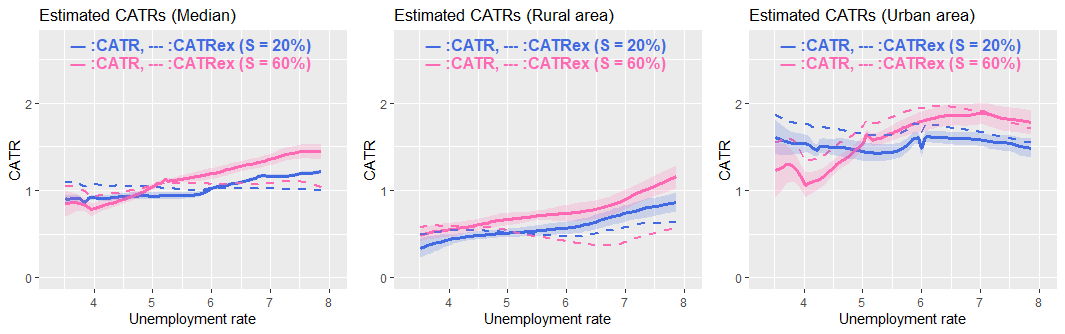}
        \subcaption{Distance $\bm{A}_N$ with $k = 5$}\label{Wd5}
        \end{minipage}\\
        \begin{minipage}[b]{0.9\linewidth}
        \centering
        \includegraphics[bb = 0 0 1065 334, width = 12cm]{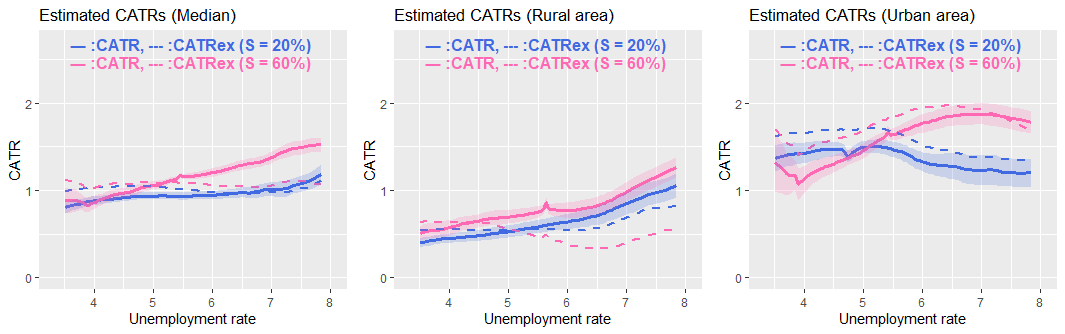}
        \subcaption{Distance $\bm{A}_N$ with $k = 10$}\label{Wd10}
        \end{minipage}\\
	\caption{CATR estimation with different neighborhood definitions}
	\label{fig:robustWmat}
\end{figure}

}
\end{document}